\documentclass[a4paper,11pt]{article}
\usepackage{jheppub} 
\usepackage{lineno}
\usepackage{amsmath}
\usepackage{amssymb}
\usepackage{amsfonts}
\usepackage{amssymb}
\usepackage{amsthm}
\usepackage{amscd}
\usepackage{xcolor}
\usepackage{appendix}
\usepackage{adjustbox}
\usepackage{tabularray}

\newcommand{\R}{\mathbb{R}}%

\newcommand{\ew}{\mathcal{E}}%
\newcommand{\pew}{\partial\mathcal{E}}%
\newcommand{\bb}{\partial M}%
\newcommand{\HH}{\mathcal{H}}%
\newcommand{\Vset}{{V}_{set}}%

\newcommand{\cS}{\mathcal{S}}

\newcommand{\nset}{\{1,\cdots,n\}}

\newcommand{\HHRT}{\mathrm{HRT}}
\newcommand{\RT}{\mathrm{RT}}

\newcounter{Counter}
\newtheorem{definition}{Definition}[section]
\newtheorem{lemma}[definition]{Lemma}

\newtheorem{theorem}[definition]{Theorem}
\newtheorem{proposition}[definition]{Proposition}
\newtheorem{corollary}[definition]{Corollary}

\newtheorem{remark}[definition]{Remark}
\newtheorem{assumption}{Assumption}

\title{\boldmath Graph models for covariant holographic entropy I}

\author{Bowen Zhao}
\affiliation{Beijing Institute of Mathematical Sciences and Applications, Beijing, China}

\emailAdd{bowenzhao@bimsa.cn}

\abstract{
We construct a graph model for holographic entropies in general time-dependent spacetimes. In static settings, such models arise from Ryu–Takayanagi surfaces on a common Cauchy slice and imply that the holographic entropy cone is polyhedral. Extending this construction to the covariant Hubeny–Rangamani–Takayanagi (HRT) setting is obstructed by the absence of a preferred time slice, raising the possibility of unphysical “short-cuts” built from partial HRT surfaces.

We identify a geometric condition---the existence of exposed regions for each pair of HRT surfaces---under which this obstruction is removed. Under this condition, we construct weight functions by projecting along null generators of entanglement horizons and prove a Conditional No-Short-Cut Theorem: any graph cut is dominated by a surface composed of complete HRT surfaces. Consequently, the graph model reproduces HRT entropies, establishing the equivalence between the covariant and static holographic entropy cones in this regime.

We further show that configurations in which exposed regions are absent due to nesting of interaction regions can be partially resolved by grouping HRT surfaces into timelike clusters. This provides evidence that the graph model extends beyond the exposed-region regime and suggests a path toward a complete covariant construction.

}

\keywords{AdS-CFT Correspondence, Classical Theories of Gravity, Holographic Entropy Cone}

\begin{document}
\maketitle
\flushbottom

\section{Introduction}

\subsection{Background: Graph models and the RT cone}
The entanglement entropy of subregions in a quantum system provides a powerful probe of its structure. For states in holographic conformal field theories with semi-classical gravitational duals, the Ryu-Takayanagi (RT) and Hubeny-Rangamani-Takayanagi (HRT) formulae ground this study in geometry \cite{RT2006formula,HRT2007covariant}, famously equating boundary entropy with the area of a bulk extremal surface. The set of all entropy vectors for $n$ disjoint, spacelike-separated boundary regions forms the holographic entropy cone, whose facets correspond to universal entropy inequalities constraining any dual gravitational theory.

A major advance in understanding this structure was achieved by Bao et al. \cite{bao2015holographic}, who showed that for static or time-symmetric bulk geometries the holographic entropy cone is polyhedral. Their proof relies on a graph model constructed from RT surfaces on a fixed Cauchy slices. This construction allows one to apply max-flow/min-cut techniques, reducing entropy inequalities to combinatorial statments about cuts in a weighted graph.

A central open problem is whether an analogous description exists in the fully covariant setting, where entropies are computed by HRT surfaces that are not confied to a common Cauchy slice. Establishing such a description would imply that the covariant holographic entropy cone coincides with the static one, thereby extending the polyhedrality result to general time-dependent states.

Substantial evidence has been accumulated in recent years to support this conjecture \cite{bao2018entropy_large_region_late_time,erdmenger2017HECnumerics,caginalp2020holographicAdSVaidya_numerics,czech2019holographic2+1dimension,grimaldi2025newcharacterHEC}. These studies provides strong indications that the combinatorial structure underlying holographic entropy inequalities is insensitive to time dependence. These results motivate the search for a direct geometric construction of a graph model in the covariant setting.

Beyond its implications for entropy inequalities, a covariant graph model may also shed light on the construction of holographic tensor networks for time-dependent states \cite{bao2019holographicTensorNetwork}, where the absence of a preferred time slice presents a major challenge. In the static case, graph models provide a natural discretization of bulk geometry; extending this picture to dynamical spacetimes could provide a geometric framework for time-dependent tensor network constructions.

\subsection{Central question}
The central question we address is the following:
\begin{center}
    \emph{Can one construct a graph model for general covariant holographic states such that the discrete min-cut reproduces HRT entropies?}
\end{center}

A natural approach is to partition the bulk spacetime using entanglement wedges and their horizons. However, unlike the static case, HRT surfaces associated with different boundary regions need not lie on a common Cauchy slice. This raises a fundamental obstruction: a graph cut constructed from partial segments of HRT surfaces may form an unphysical ``short-cut'', whose total weight is smaller than the area of any homologous HRT surface.

This issue is highlighted in a recent study \cite{grado2025minimax}, which attempts to resolve the issue by proposing a minimax formulation of HRT entropy\cite{HH2023minimax}, dual to the maximin formulation of ref. \cite{wall2014maximin}. However, this approach requires one to impose a strong "cooperating property" on candidate timelike sheets to construct a graph model.

\subsection{Strategy and contributions}
In this work, we construct a graph model for covariant holographic states under a natural geometric condition, namely the existence of exposed regions for interacting HRT surfaces. Within this regime, we prove a Conditional No-Short-Cut Theorem ensuring that graph cuts cannot beat HRT surfaces

The key idea is to exploit the causal structure of entanglement wedges. We analyze how partial segments of HRT surfaces can be projected along entanglement horizons, and show that any putative short-cut can be deformed into a non-optimal surface without increasing the area. This argument relies on entanglement wedge nesting \cite{wall2014maximin} and focusing properties of null congruences \footnote{We provide a review of the focusing argument in Appendix \ref{app:focusing}.}, and is closely related to techniques used in proofs of connected wedge theorems \cite{may2020holographic,zhao2026proof,zhao2025beyond,lima2025sufficientGCWT}.

We first analyze the two-horizon configuration and identify the existence of a family of admissible weight functions, which corresponds to huge freedom in choosing the projection locus. 

In generalizing the-horizon construction to arbitrarily many boundary regions, we identify the following two regimes:
\begin{itemize}
    \item In the \emph{achronal regime}, projection loci can be chosen to lie on mutually achronal portions of entanglement horizons, allowing all relevant HRT surface to be projected on a common Cauchy slice.
    \item In the \emph{timelike cluster regime}, causal relations between HRT surfaces force a nested structure, which we treat collectively by transporting sections along piecewise-null congruences. we show that these induced sections corresponding to admissible weight functions that was identified in the two-horizon construction.
\end{itemize}

Another key ingredient of our proof is Lemma \ref{lemma:HRT_causal_better}, which highlights that any homologous minimal surface that has less area than the corresponding HRT surface must be causally related to the HRT surface. This allows us to use an argument by contradiction.

However, a key geometric issue arises from the way different HRT surfaces interact in spacetime. 
For a pair of HRT surfaces, one can define an associated interaction region consisting of points causally connected to both surfaces. 
In general, such interaction regions for different pairs may overlap or even be nested.

We will show that a crucial simplifying condition is the existence of \emph{exposed regions}, namely portions of an interaction region that are not covered by others. 
When such exposed regions exist, one can choose projection loci that are mutually achronal, enabling a consistent projection construction.

Combining these results, we establish the Conditional No-Short-Cut Theorem under the exposed-region condition, and show how timelike clustering partially extends the construction beyond this regime.

\begin{theorem}[Conditional No-Short-Cut Theorem]\label{thm:no-short-cut}
Let \(A_1,\cdots,A_n\) be $n$ spacelike-separated, disjoint boundary regions in a spacetime $M$ that satisfies Assumption \ref{assumption:1}.
Let $(\mathcal V,E,w)$ be a graph model constructed from our projection methods.

Assume that for every pair of HRT surfaces, the corresponding exposed region is nonempty (possibly after reduction via timelike clusters).

Then for any cut $W\subset \mathcal V$ homologous to a boundary region $A_I, I\subseteq \nset$, we have
\begin{equation}
|\gamma| \leq |C(W)|,
\end{equation}
where $\gamma$ is the HRT surface of $A_I$ and $|C(W)|$ denotes the weight of the cut $W$.

In particular, the minimal cut coincides with the HRT entropy.
\end{theorem}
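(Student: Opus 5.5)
The plan is to argue by contradiction, using Lemma \ref{lemma:HRT_causal_better} as the main lever. Suppose some cut $W\subset\mathcal V$ homologous to $A_I$ satisfies $|C(W)|<|\gamma|$. The edges of $C(W)$ come from partial segments of the HRT surfaces $\gamma_J$, $J\subseteq\nset$, that bound the cells of the partition. The projection construction assigns each edge a weight equal to the area of the projected piece on the chosen projection locus.

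\textbf{Step 1: realize the cut geometrically.} By construction of the weight function, $w$ is the area of the image of each HRT segment under projection along the null generators of the relevant entanglement horizon onto the projection locus. The exposed-region hypothesis lets us choose these loci, for every pair of interacting HRT surfaces, inside mutually achronal portions of the horizons. Where loci are nested, we first collapse the offending HRT surfaces into timelike clusters and transport their sections along the piecewise-null congruences; the two-horizon analysis has already shown that the induced sections give admissible weight functions. The outcome is a single achronal slice $\Sigma$ on which every cell boundary is realized. The cut $C(W)$ then corresponds to a piecewise-smooth codimension-two surface $\sigma_W\subset\Sigma$ with $|\sigma_W|=|C(W)|$. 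Entanglement wedge nesting guarantees that $\sigma_W$ is homologous to $A_I$ on $\Sigma$, because cells are unions of regions between nested wedges.

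\textbf{Step 2: compare with the HRT surface.} Focusing (Appendix \ref{app:focusing}) together with Assumption \ref{assumption:1} implies that projecting along the generators of a horizon does not increase area. In particular, the projected pieces of any complete $\gamma_J$ have area at most $|\gamma_J|$, with equality for the portions already lying on $\Sigma$. Now minimize area over surfaces homologous to $A_I$ on $\Sigma$, which gives $|\min_\Sigma(A_I)|\le|\sigma_W|<|\gamma|$. By Lemma \ref{lemma:HRT_causal_better}, this minimal surface must be causally related to $\gamma$, meaning it lies strictly in $I^\pm(\gamma)$ on some portion. However, $\Sigma$ was built from horizon pieces of wedges nested with $\mathcal E(A_I)$, and the maximin characterization \cite{wall2014maximin} places $\gamma$ on a slice achronal to those pieces. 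We can therefore deform $\Sigma$ to contain $\gamma$ without changing the realized cut areas, which contradicts maximin: on any slice through $\gamma$, the minimal surface has area $|\gamma|$. The ``in particular'' claim then follows because the cut consisting exactly of the edges of $\gamma$ itself has weight $|\gamma|$.

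\textbf{Main obstacle.} The hard step is Step 1's claim that a single achronal $\Sigma$ can simultaneously contain all projection loci and $\gamma$. Pairwise exposed regions give pairwise achronality, but achronality has to be checked globally across triples of horizons and after the cluster reduction. I would first attempt an inductive construction that adds horizons one at a time. At each stage, the new locus would be chosen inside the exposed region minus the causal shadows of the loci already placed, with the nesting order of the wedges used to show that this set stays nonempty.
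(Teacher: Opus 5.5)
\textbf{Your Step 2 has a genuine gap.} The contradiction needs $\gamma=\HHRT(A_I)$ and the realized cut $\sigma_W$ to lie on one Cauchy slice, and the reason you give for this is not valid. Entanglement wedge nesting only controls $\HHRT(A_J)$ for $A_J$ nested in or disjoint from $A_I$. The edges of $C(W)$ generically come from crossing regions, whose HRT surfaces can be timelike related to $\gamma$; that is exactly the covariant obstruction (compare $\HHRT(A)$ lying to the past of $\HHRT(B)$ in Figure~\ref{fig:two-horizon_geometry}).

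The failure is sharpest for the cuts that matter. Take $(\gamma\setminus\gamma_1)\cup\bigcup_{i\ge 2}\gamma_i$, with $\gamma_1\subset\gamma$ and $\gamma_2,\dots,\gamma_m$ the other sides of a connected union of cells. This cut retains part of $\gamma$. The loci where that part meets the other pieces lie on seams of $\pew(A_I)\cup\pew(A_I^c)$, and by construction they are joined to sections of $\gamma$ by null generators. An acausal slice cannot contain both such a locus and $\gamma$ unless the locus lies on $\gamma$ itself. That happens only for the canonical sections $p_{A_I}^{(\cdot)}$, not for the exposed-region loci that define $w$. Nor can the loci be moved, since they fix the weight function. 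Even on a merely achronal slice, the null relation would violate the spacelike-separation hypothesis of Lemma~\ref{lemma:HRT_causal_better}. So $\Sigma$ cannot be deformed to contain $\gamma$ ``without changing the realized cut.'' Finally, ``on any slice through $\gamma$ the minimal surface has area $|\gamma|$'' is not what maximin says. Maximin gives only $|\min(\Sigma')|\le|\gamma|$ on every slice $\Sigma'$, with equality on a maximin slice.

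\textbf{What the paper does instead.} The paper never needs the HRT surface on the slice through the loci. It first reduces, via Theorem~\ref{thm:polygon_inequality}, to polygon inequalities on connected unions of cells $\mathcal R$. Your all-cuts-at-once realization is a legitimate alternative to this reduction, since each polygon competitor is itself a particular cut.

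With loci chosen in exposed regions, Lemma~\ref{lemma:achronality_disjointness} shows that no causal curve joins loci of distinct pairs. Achronality of a union is a two-point condition, so this already gives a Cauchy slice through all loci. Your inductive, triple-wise check is therefore unnecessary, and your ``main obstacle'' is misplaced: the real obstacle is adjoining $\gamma$.

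The competitor $\sigma$ homologous to $\HHRT(B_1)$ on that slice is then compared with $\HHRT(B_1)$ by pushing it along a global time function onto a maximin slice of $\HHRT(B_1)$ (Section~\ref{sec:proj_cauchy}). The contradiction is thus with minimality on the maximin slice, and no relative position of $\sigma$ and $\HHRT(B_1)$ is required.

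Be aware that this comparison carries all the weight in either route and cannot be used as a black box. A fine, nearly null corrugation of $\gamma$, displaced so as to stay spacelike to it, lies on a Cauchy slice through $\gamma$ and has smaller area. So the claim you attribute to maximin is false in general, and that surface gains area when pushed onto the maximin slice. The identity $h^{(2)}=h^{(1)}-f_\alpha f_\beta$ used in Section~\ref{sec:proj_cauchy} presumes an ultrastatic metric $-dt^2+g_{ij}(x)\,dx^i dx^j$.

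Two smaller corrections. First, edge weights are the areas of the unprojected HRT segments, so focusing gives $|\sigma_W|\le|C(W)|$ rather than equality; this is the direction you need. Second, homology of $\sigma_W$ comes from the causal homotopy along generators, not from wedge nesting, since cells are generally bounded by crossing rather than nested wedges.
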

As a consequence, the projection-based graph model correctly reproduced holographic entropies for all configurations satisfying the exposed-region condition.
This establishes the equivalence of the covariant and static holographic entropy cones within this regime.

\subsection{Organization of the paper}
The paper is structured as follows. In Section \ref{sec:review_graph_model}, we present a review of graph models for the RT cone. We then present a universal characterization of any graph model for the HRT entropy cone in Section \ref{sec:characterization_graph}. In particular, we identify two equivalent characterizations for the existence of a graph model in Theorem \ref{thm:polygon_inequality}. 
In Section \ref{sec:projection_graph_model}, we present our construction of a graph model based on projection arguments. Some key geometric results are summarized in Section \ref{sec:geometric_result}. The two-horizon configuration is shown in Section \ref{sec:two-horizon}. The construction for general configurations is shown in Section \ref{sec:general_projection}. Some technical proofs are presented in Section \ref{sec:tech_proofs}. We conclude and discuss in Section \ref{sec:conclusion}.

\subsection{Notations and Assumptions}\label{subsec:notation_assumptions}
Here we summarize the notations, conventions, and assumptions used throughout this paper.

We adopt natural units with $\hbar=c=1$ and set the AdS length scale $l_{\text{AdS}}=1$, while keeping Newton's constant $G_N$ explicit. Our notation follows ref. \cite{waldGR}, using the mostly-plus metric signature.

\begin{itemize}
\item \textbf{Spacetime regions:} Bulk regions are denoted by script letters ($\mathcal{U}, \mathcal{V}, \mathcal{W}, \cdots$), while boundary regions use straight capitals ($U, V, W, \cdots$). The same symbol may denote either a causal diamond or its Cauchy surface, with the meaning clear from context.

\item \textbf{Cauchy slices:} Bulk Cauchy slices are denoted by $\Sigma$ with appropriate subscripts, boundary Cauchy slices by $\hat{\Sigma}$ with subscripts. By abuse of notation, $\Sigma$ may also refer to Cauchy slices of the conformally compactified spacetime.

\item \textbf{Causal structure:} The bulk causal future/past of region $\mathcal{V}$ is $J^\pm[\mathcal{V}]$; for boundary region $V$, we write $J^\pm[V]$ for bulk causal influence and $\hat{J}^\pm[V]$ for boundary causal influence.

\item \textbf{Domains of dependence:} The bulk domain of dependence of $\mathcal{V}$ is $\mathcal{D}[\mathcal{V}]$; the boundary domain of dependence of $V$ is $\hat{D}[V]$. The bulk future and past horizons of a causal domain $V$ is $\HH^\pm[V]$.

\item \textbf{Entanglement structures:} For boundary region $V$, we denote the entanglement wedge by $\ew(V)$, and HRT surface by $\text{HRT}(V)$.

\item \textbf{Complements:} The set-theoretic complement within a Cauchy slice uses superscript $c$.
\end{itemize}

\begin{assumption}\label{assumption:1}
We assume throughout that:
\begin{enumerate}
\item The bulk spacetime $M$ satisfies the null curvature condition;
\item HRRT surfaces can be found via a maximin \cite{wall2014maximin};
\item The spacetime is AdS-hyperbolic (the conformal compactification $\overline{M}=M \cup \partial M$ admits a Cauchy slice);
\end{enumerate}
\end{assumption}

\section{Graph models for holographic entropy}\label{sec:review_graph_model}

\subsection{Review: Graph models in the static (RT) case}
We begin by reviewing the graph model construction for static holography \cite{bao2015holographic}, which provides the conceptual foundation for our covariant generalization.

Consider a bulk spacetime that admits a time-reflection symmetry, so that all relevant RT surfaces lie on a common Cauchy slice $\Sigma$. Let $A_1,\cdots, A_n$ be spacelike-separated, disjoint \footnote{As always, we allow boundary regions to share spatial boundaries.} boundary regions. Let $\RT(A_I)$ denote the least-area surface homologous to the composite region $A_I, I\subseteq \nset$.

The collection of all RT surfaces partitions the Cauchy slice $\Sigma$ into finitely many connected bulk regions. One then construct a graph $(\mathcal{V},E,w)$ as follows:
\begin{itemize}
    \item Each connected bulk region defines a vertex $v\in \mathcal{V}$.
    \item Two vertices are connected by an edge $e\in E$ if the corresponding bulk regions share a portion of an RT surface.
    \item The weight of an edge $w(e)$ is given by the area of the corresponding portion of the RT surface.
    \item Boundary vertices are labeled by the boundary regions $A_i$, to which they are adjacent.
\end{itemize}

A cut of the graph is a partition $\mathcal{V}= W\cup W^c$.
This corresponds to a hypersurface in the bulk assembled from partial entanglement horizons. The set of edges crossing this partition is $C(W) = \{(w, w') \in E \mid w \in W, w' \in W^c\}$, and its total weight $|C(W)| = \sum_{e \in C(W)} w(e)$ is assigned to cut $W$ as the cut weight.

We say that the cut $W$ is \emph{homologous} to a boundary region $A_I = \bigcup_{i\in I} A_i$ if the set of boundary vertices contained in $W$ is precisely those labeled by indices in $I$. Then we define a discrete entropy for any graph model.
\begin{definition}[Discrete Entropy]
For a graph model $(\mathcal{V}, E, w)$ with boundary coloring $b:\partial \mathcal{V}\to \nset$, the discrete entropy of a boundary region $A_I$ is
\begin{equation}
S^*(I) = \min_{\substack{\mathcal{V} = W \cup W^c, \ \partial W = I}} |C(W)|,
\end{equation}
where the minimization is over all cuts $W$ such that the set of boundary vertices contained in $W$ is precisely those colored by indices in $I$, i.e., $\partial W := \partial \mathcal{V} \cap W = \{v \in \mathcal{V} \mid b(v) \in I\}$.
\end{definition}

A key result proved in ref. \cite{bao2015holographic} is that the discrete entropy of a graph model recovers the RT entropy
\[S^*(I)=S(I),\]
where $S(I)$ denotes the RT entropy of boundary region $A_I$.

\subsection{Covariant extension: challenges and setup}
We now turn to the fully covariant setting, where entropies are computed by HRT surfaces \cite{HRT2007covariant}. In contrast to the static case, HRT surfaces for different boundary regions need not lie on a common Cauchy slice, and therefore do not define a canonical spatial partition of the bulk. 

A natural replacement for RT surfaces in this setting is provided by entanglement wedge. For a boundary region $A$, its entanglement wedge $\ew(A)$ is bounded by the union of its HRT surface and the associated future and past entanglement horizons:
\[
\pew(A) = \HH^{+}[A] \,\cup\, \HHRT(A) \,\cup\, \HH^{-}[A],
\]
The collection of all such entanglement horizons partitions the bulk spacetime into finitely many regions.

This suggests constructing a graph model directly from this spacetime partition. As in the static case:
\begin{itemize}
    \item vertices correspond to bulk regions defined by the partition;
    \item edges correspond to shared portions of entanglement horizons;
    \item boundary vertices are labeled by boundary regions.
\end{itemize}

However, a crucial new issue arises in defining the edge weights. A naive prescription would assign to each edge the area of the corresponding portion of an HRT surface where it intersects another entanglement wedge. While this prescription correctly reproduces entropies for certain unions of regions, if fails in general \footnote{We thank Guglielmo Grimaldi, Matthew Headrick, and Veronika E. Hubenyb for pointing this out.}.

The underlying problem is that the naive prescription partitions HRT surfaces at sections that are generically spacelike separated. However, the graph model is meant to capture holographic entropies after effectively ``integrating out'' or ``forgetting'' the time direction. In such a reduction, one would expect the partitioning sections on different HRT surfaces to be causally related. 

This reasoning suggests that the correct partitioning of HRT surfaces should be defined using sections that are causally related, rather than merely spatially intersecting. In Section~\ref{sec:two-horizon}, we build toward this understanding by first using null generators of entanglement horizons to align the relevant HRT segments. This ensures that HRT surfaces can be consistently compared and assembled into admissible bulk surfaces.

\begin{remark}[Use of $A^c$ and entanglement horizons]\label{remark:Ac_A}
In this work, the notation $\pew(A)$ denotes the union of the HRT surface $\HHRT(A)$ and the associated future and past null sheets:
\[
\pew(A) = \HH^+[A] \cup \HHRT(A) \cup \HH^-[A].
\]

We will also use the notation $\pew(A^c)$ to denote the complementary pair of null sheets emanating from the same extremal surface $\HHRT(A)$. This is a notational convenience: in general, we do not assume that $\HHRT(A)=\HHRT(A^c)$, and our arguments do not rely on the purity of the boundary state.

Thus, the labels $A$ and $A^c$ should be understood as referring to two complementary choices of null directions from a given extremal surface, rather than to entanglement wedges of complementary boundary regions.
\end{remark}
\section{Local characterization of graph models}\label{sec:characterization_graph}
In this section, we will present a global-local equivalence theorem on universal characterizations of HRT entropy graph models.
\subsection{Global-Local equivalence}
As noted above, any graph model of the holographic entropy cone must satisfy the condition that
\begin{equation}\label{eq:discrete_HRT}
    S^*(I)=S(I),
\end{equation}
where $S^*$ denotes the discrete/graph entropy while S denotes the HRT entropy.

Since \eqref{eq:discrete_HRT} must hold for all boundary regions $I\subseteq \nset$, it appears natural to impose the following assumption:
\begin{definition}[Weight assumption on connected components of HRT surfaces]
    We assume that the weight of any connected component of an HRT surface equal to its area.
\end{definition}\label{assump:weight_complete_HRT}

Then assigning the weight function $w:\Vset\to \R_{\geq 0}$ reduces to partitioning each connected HRT component into pieces. Below, we will refer to \eqref{eq:discrete_HRT} under the ``weight=area'' assumption on connected HRT components as the \emph{global compatibility condition}. We now argue that the global compatibility condition can be reduced to a set of polygon inequalities for connected unions of cells.

\paragraph{Polygon inequalities for connected-regions}
Let $\mathcal{R}$ be a connected union of bulk cells. 
Its boundary $\partial \mathcal R$ consists of a collection of HRT pieces \footnote{We adopt the convention that the strong subadditivity inequality involves three boundary regions instead of two boundary regions. That is, intersections of boundary regions are counted as a distinct boundary regions. With this convention, a boundary cell only contains a complete HRT surface. We will discard this trivial case in the following discussion.}.
A \emph{side} of $\mathcal R$ is defined as a subset of $\partial \mathcal R$ lying on a single HRT surface; it may consist of multiple edges of the graph as long as they belong to the same HRT surface. Furthermore, a side of a polygon could be disconnected if multiple disconnected components of an HRT surface are involved in the same $\mathcal R$ under consideration.

Consider any HRT side $\Gamma$ of a $\mathcal R$.  Completing $\Gamma$ to the complete HRT surface $\HHRT(A_I)$ on which it lies, we get a cut $C_\star$. On the other hand, composing $\partial \mathcal R \setminus \Gamma$ and $\HHRT(A_I) \setminus \Gamma$, we get another cut $C$. 
Obviously, the two cuts, $C$ and $C_\star$, are homologous. Then \eqref{eq:discrete_HRT} immediately implies that 
\begin{equation}\label{eq:polygon_ineq}
    w(\Gamma)\leq w(\partial \mathcal R\setminus \Gamma)=\sum_{\Gamma'\subset \partial \mathcal R,\Gamma'\neq \Gamma} w(\Gamma'), \quad \forall\, \Gamma' \subset \partial \mathcal R
\end{equation}
which states that any HRT side of a connected bulk region $\mathcal R$ has weight less than or equal to the sum of weights of the remaining sides. In particular, one can take $\mathcal R$ to be a cell/vertex $v$, therefore polygon inequalities \eqref{eq:polygon_ineq} need to hold for each individual cell.

If $\mathcal R$ has only three distinct HRT sides, \eqref{eq:polygon_ineq} is just the familiar triangle inequality that holds for any metric space.
In the RT case such inequalities are automatic, since all RT pieces lie on a common Cauchy slice and each RT piece $\Gamma$ is minimal among surfaces with the same boundary $\partial \Gamma$ \footnote{This can be seen through completing an RT piece $\Gamma$ into the HRT surface $\gamma$ on which it lies. If there exists any other competitor $\Gamma'$ with the same boundary as $\Gamma$, composing $\Gamma'$ and $\gamma\setminus \Gamma$ would yield a surface homologous to $\gamma$ and of smaller area than $\gamma$. This contradicts the definition of RT surface. The same argument implies that an HRT piece is of least area among surfaces with the same boundary on its maxmin slice.}. In the HRT case, the HRT pieces may lie on different slices, so the inequalities become nontrivial compatibility conditions.


We now show that the set of all polygon inequalities are equivalent to the global compatibility condition required for any graph model of HRT entropies.

\begin{theorem}[Local--global compatibility]\label{thm:polygon_inequality}
Let $C_\star$ denote the cut corresponding to the true HRT surface for a fixed boundary region $A_I, I\subseteq \nset$. 
Then the following are equivalent under the Weight Assumption \ref{assump:weight_complete_HRT}:
\begin{enumerate}
\item The cut $C_\star$ is globally weight-minimizing among all cuts with the same boundary homology.
\item Polygon inequalities \ref{eq:polygon_ineq} hold for all connected unions of cells;
\end{enumerate}
\end{theorem}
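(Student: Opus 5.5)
The plan is to prove the two implications separately; the direction from (2) to (1) carries essentially all the content.

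\emph{(1) $\Rightarrow$ (2).} This is the argument already given before the statement. Fix a connected union of cells $\mathcal R$ and an HRT side $\Gamma\subset\partial\mathcal R$ lying on $\HHRT(A_J)$. Take the symmetric difference of the cut $C_\star(J)$ with $\mathcal R$. This is the cut $W'=W_\star(J)\,\triangle\,\mathcal R$, with the convention that boundary cells contain only complete HRT surfaces, so $\mathcal R$ contains no boundary vertex. Hence $W'$ has the same boundary homology as $W_\star(J)$. Its edge set is $(C_\star\setminus\Gamma)\cup(\partial\mathcal R\setminus\Gamma)$, up to edges counted twice that cancel. By the Weight Assumption, $|C_\star|=w(\Gamma)+w(\HHRT(A_J)\setminus\Gamma)$. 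Global minimality then gives $w(\Gamma)\le w(\partial\mathcal R\setminus\Gamma)$, which is \eqref{eq:polygon_ineq}.

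\emph{(2) $\Rightarrow$ (1).} Let $W$ be any cut homologous to $A_I$, and let $W_\star$ be the cut of $\HHRT(A_I)$. I would argue by induction on the number of cells in $W\triangle W_\star$, showing that $|C(W)|\ge |C(W_\star)|$.

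Decompose $W\triangle W_\star$ into connected components $\mathcal R_1,\dots,\mathcal R_k$. None of them contains a boundary vertex, since the two cuts have the same boundary homology. The edges of $C(W_\star)$ split into those bordering some $\mathcal R_j$ and those shared with $C(W)$. On the $\HHRT(A_I)$ side, let $\Gamma_j=\partial\mathcal R_j\cap \HHRT(A_I)$. This is a single (possibly disconnected) side of $\mathcal R_j$, because every edge of $C(W_\star)$ lies on $\HHRT(A_I)$. The remaining boundary $\partial\mathcal R_j\setminus\Gamma_j$ consists of edges belonging to $C(W)$: an edge of $\partial\mathcal R_j$ that is not on $C(W_\star)$ must separate $W$ from $W^c$.

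Applying \eqref{eq:polygon_ineq} to each $\mathcal R_j$ with the side $\Gamma_j$ gives $w(\Gamma_j)\le w(\partial\mathcal R_j\setminus\Gamma_j)$. Summing over $j$, and noting that distinct components have disjoint boundary edges (they are non-adjacent by maximality), yields $|C(W_\star)|\le|C(W)|$. So no real induction is needed; a single symmetric-difference step suffices.

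\emph{Main obstacle.} The delicate step is the bookkeeping showing that $\partial\mathcal R_j$ splits exactly into the $\HHRT(A_I)$ side and edges of $C(W)$, with no edge lying on $\HHRT(A_I)$ yet not in $C(W_\star)$. Such an edge cannot occur, because $\HHRT(A_I)$ separates cells exactly as $W_\star$ does.

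A second subtle point is edges of $C(W)$ that themselves lie on $\HHRT(A_I)$. Such an edge would be double-counted: it would appear both in $\Gamma_j$ and in $C(W)$. It must be excluded from $\Gamma_j$. It then cancels on both sides, since an edge lying in both $C(W)$ and $C(W_\star)$ borders no $\mathcal R_j$. I would handle this by checking adjacency cases for each edge (both endpoints in or out of $W$ and of $W_\star$).

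Finally, if $\HHRT(A_I)$ has several components, the side $\Gamma_j$ may be disconnected. This is allowed by the definition of a side, so the polygon inequality still applies as stated.
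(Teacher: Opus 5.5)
Your proof is correct and follows the same route as the paper. For $(1)\Rightarrow(2)$ you complete a side $\Gamma$ to its full HRT surface and toggle $\mathcal R$; for $(2)\Rightarrow(1)$ you split $W\triangle W_\star$ into connected components and apply the polygon inequality to each, taking the side from $C_\star$, and your edge bookkeeping (each boundary edge of a component lies in exactly one of $C(W)$, $C(W_\star)$, edges common to both cuts border no component, distinct components share no edges) makes explicit what the paper compresses into ``the two cuts meet $\partial\mathcal R$ along complementary sets of HRT pieces.''
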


\begin{proof}
Let $\mathcal{R}$ denote a connected union of cells.
That $(1)$ implies $(2)$ follows from previous discussions by completing one particular HRT side of $\mathcal R$ to the corresponding HRT surface.

We now prove the other direction.
Let $C$ be any cut with the same boundary homology as $C_\star$. Let $\mathcal U$ and $\mathcal U_\star$ denote the corresponding sets of bulk cells. 
Consider the symmetric difference
\[
\mathcal V = \mathcal U \,\Delta\, \mathcal U_\star = (\mathcal U \setminus \mathcal U_\star) \cup (\mathcal U_\star \setminus \mathcal U).
\]
This set decomposes into connected unions of bulk cells.

For each connected component $\mathcal R\subset \mathcal V$, the two cuts $C$ and $C_\star$ intersect $\partial \mathcal R$ along \emph{complementary} sets of HRT pieces. 
Applying the connected-region polygon inequalities \eqref{eq:polygon_ineq} to $\mathcal R$, with $\Gamma$ being the side from $C_\star$, shows that replacing the portion of $C$ along $\partial \mathcal R$ with that of $C_\star$ does not increase the weight.
Performing this replacement for each connected component  $\mathcal{R}$ of $\mathcal{V}$ transforms $C$ into $C_\star$ without increasing the weight, proving
\[
w(C_\star) \le w(C).
\]
Thus $C_\star$ is globally minimizing.
\end{proof}

\begin{remark}
The theorem shows that global compatibility of any graph model for HRT entropies reduces entirely to verifying the local polygon inequalities associated with each connected regions.
Below we will verify the local polygon inequalities for our construction of a graph model.
\end{remark}

\subsection{Necessary condition: monotonicity}
\begin{figure}
    \centering
    \includegraphics[width=0.8\linewidth]{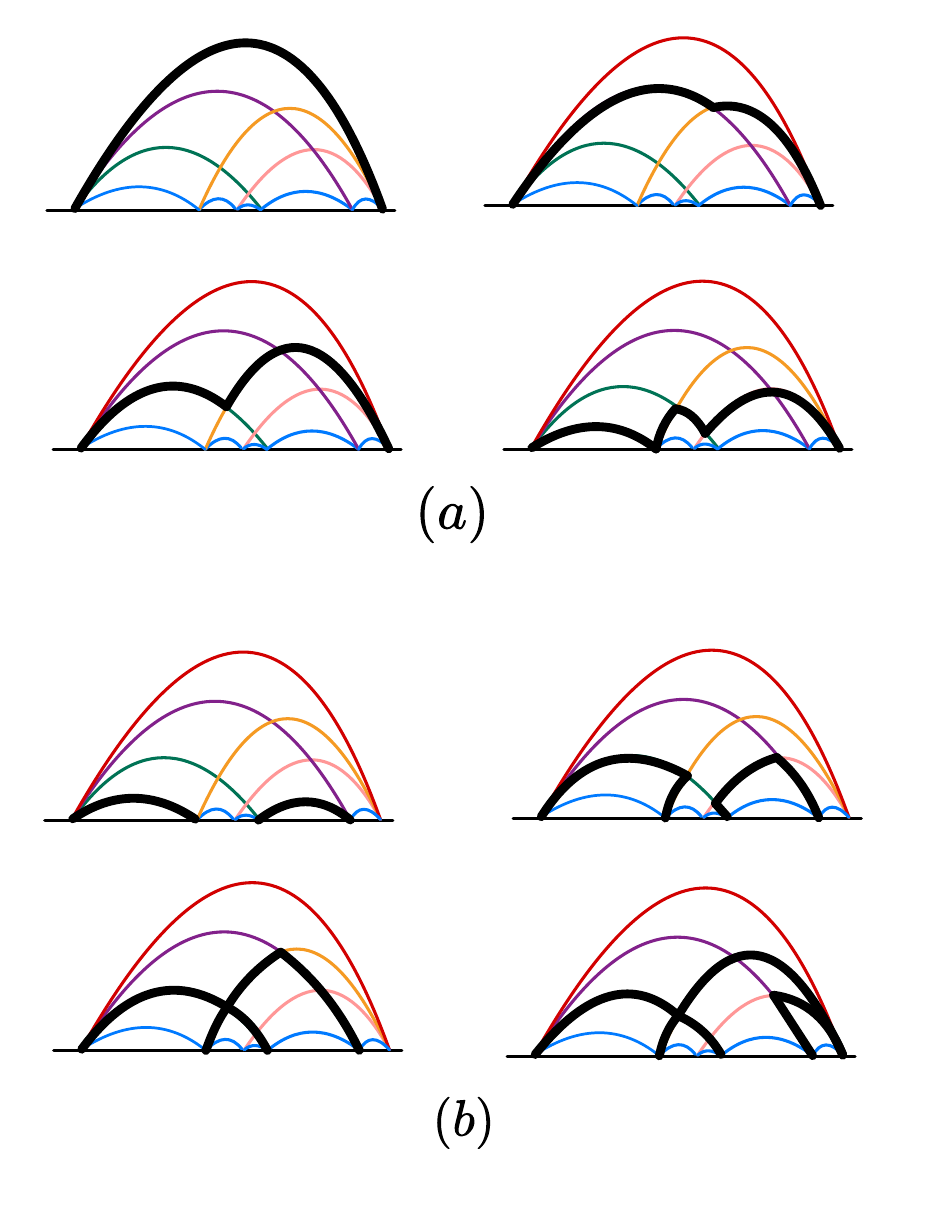}
    \caption{Illustration of monotonicity that is enforced by the global compatability condition of any graph model. Panel (a) illustrate that for a cut containing less vertices than the true HRT surface, the cut weight decreases as one includes more bulk vertices. Panel (b) illustrate that for a cut containing more vertices than the true HRT surface, the cut weight increases as one includes more bulk vertices.}
    \label{fig:monotonicity}
\end{figure}

A certain kind of monotonicity follows from polygon inequalities \eqref{eq:polygon_ineq}. Figure \ref{fig:monotonicity} illustrates two such examples that are present in a graph model.

Start from a cut $C$ that contains strictly less cells/vertices than $C_\star$. By adding more vertices or replacing multiple HRT sides with a single HRT side, the cut weight decreases monotonically (Figure \ref{fig:monotonicity}(a) in reverse order). Once $C=C_\star$, adding further vertices or replacing a single HRT side by multiple HRT sides, the cut weight increases monotonically (Figure \ref{fig:monotonicity}(b)). 
Since polygon inequalities \ref{eq:polygon_ineq} hold trivially in the RT case, so do these monotonicity observations. 

However, a monotonicity statement needs to be modified as the cut crosses an HRT surface or extremal surfaces corresponding to different connectedness of an entanglement wedge. Therefore, polygon inequalities are much cleaner formulations than a monotonicity statements. 


\begin{remark}
Importantly, the discussion in this section does not depend on the specific geometric realization of the graph. 
Instead, it relies only on the abstract graph structure. 
Consequently, the results in this section apply universally to any graph model intended to reproduce HRT entropies.
\end{remark}

\section{Projection-Based Graph Model}\label{sec:projection_graph_model}
Recall that polygon inequalities (or the more familiar triangle inequality) hold for any metric space
\begin{equation}\label{eq:metric_space_inequality}
    d(x_1,x_n)\geq d(x_1,x_2)+\cdots+d(x_{n-1},x_n),
\end{equation}
where $d(\cdot,\cdot)$ denotes the minimal distance/metric between two points of a metric space. The difficulty with HRT surfaces is that they do not necessarily lie on a common Cauchy slice and hence there is no natural metric space or metric $d(\cdot,\cdot)$. One natural approach would be to use projection method or focusing argument. Here we present results of such an approach.

The guiding principle is as follows:
\begin{itemize}
    \item To build a graph model for a configuration, we need to cut each HRT surface into partial pieces, i.e. edges. Then edges are assembled into cells/vertices. The critical contraint is that weight function for each partial HRT surface or edge is additive and that the total weight of a connected component of an HRT surface is its total area.
    \item Given a graph model, a ``short-cut'' may exist. To exclude such short-cuts using projection arguments, we need to map such short-cuts to spacelike geometric surfaces and argue that such geometric surfaces have greater area than the true HRT surface.
\end{itemize}

\subsection{Geometric Lemmas}\label{sec:geometric_result}
We collect here some geometric results that are essential in our construction.

We start with a simple observation that HRT surfaces intersect entanglement horizons simply. The corresponding fact for RT surfaces is established in ref. \cite{wall2014maximin} by cut and glue argument. We note that the cut and glue argument is a global argument that employs global minimality in area of RT surfaces while our proof only uses the local characterization, i.e. vanishing mean curvature, of HRT surfaces, ignoring that they are actually least-area extremal surfaces \footnote{Precisely, we use techniques for minimal surface equations while minimal surface in math literature refers to mean curvature vanishing or critical point of the area functional.}. Simpler proofs may exist.

\begin{figure}
    \centering
    \includegraphics[width=\linewidth,trim={10 14cm 10 10},clip]{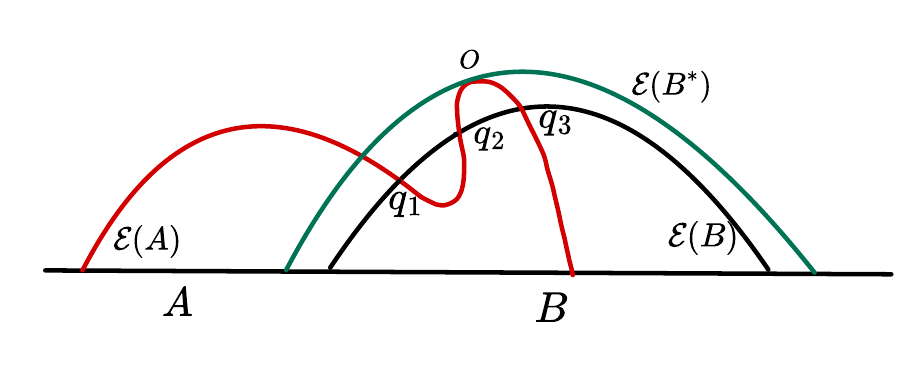}
    \caption{Illustration of no multiple entering of HRT surface into another entanglement wedge. The figure shows the Cauchy slice $\Sigma$ on which the HRT surface of A (red curve) is minimal. We also demand $\Sigma \cap \bb$ contains the spacelike boundary $\partial B$ and $\partial B^*$. The black and green curves denote $\partial \ew(B)\cap \Sigma$ and  $\partial \ew(B^*)\cap \Sigma$, respectively.} 
    \label{fig:HRT-cross}
\end{figure} 

\begin{lemma}[No Multi-Crossing of HRT Through Entanglement Horizon]\label{lemma:barrier}
Let $A,B \subset \partial M$ be boundary spatial regions on a common Cauchy slice of the boundary, and let $\gamma_A$ and $\gamma_B$ be their corresponding HRT surfaces in a classical asymptotically AdS spacetime satisfying the null energy condition (NEC) and the usual genericity assumptions. Denote the entanglement wedge of $B$ by $\ew(B)$ and its entanglement horizon by
\[
\partial\ew(B) = \gamma_B \cup \mathcal{H}^+[B] \cup \mathcal{H}^-[B].
\]
Then $\gamma_A$ can intersect $\partial\ew(B)$ at most once along each connected component. In particular, $\gamma_A$ cannot enter, exit, and re-enter $\ew(B)$.
\end{lemma}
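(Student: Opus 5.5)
We argue by contradiction, working on the maximin Cauchy slice $\Sigma$ of Assumption \ref{assumption:1} on which $\gamma_A$ is a minimal surface, as in Figure \ref{fig:HRT-cross}. Suppose some connected component of $\gamma_A$ enters $\ew(B)$, exits, and re-enters. Intersecting with $\Sigma$, the set $\partial\ew(B)\cap\Sigma$ is a codimension-two hypersurface in $\Sigma$ separating $\ew(B)\cap\Sigma$ from its complement. Multiple crossing then means that $\gamma_A\setminus\ew(B)$ contains a portion $\sigma$ with $\partial\sigma\subset\partial\ew(B)$ that is not connected to the asymptotic boundary. In other words, $\gamma_A$ makes an excursion $\sigma$ out of $\ew(B)$ and returns; the case of an excursion into $\ew(B)$ and back out is treated in the same way with the roles of the two sides swapped.

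The plan is to compare $\sigma$ with the piece $\tau$ of $\partial\ew(B)\cap\Sigma$ that it cuts off. Together, $\sigma$ and $\tau$ bound a compact region $\mathcal{K}\subset\Sigma$ lying outside $\ew(B)$ and adjacent to $\partial\ew(B)$. Consider the null generators of $\mathcal{H}^{\pm}[B]$ passing through $\tau$. By the NEC, Raychaudhuri focusing (Appendix \ref{app:focusing}), and the extremality of $\gamma_B$, these generators have non-positive expansion moving away from $\gamma_B$. The key step is a comparison-principle argument of the kind used for the minimal surface equation. Flow the slice $\Sigma$, or equivalently the boundary of the inner region, inward along the horizon, and look at the first point of contact between $\sigma$ and the family of sections of $\partial\ew(B)$ sweeping through $\mathcal{K}$. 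At that point $\sigma$ is tangent to a section with mean curvature of a definite sign, strictly so by genericity, and $\sigma$ lies on one side of it. But $\sigma$ has vanishing mean curvature as a piece of an extremal surface, more precisely vanishing mean curvature along the direction normal within $\Sigma$. The maximum principle for the minimal surface operator therefore gives a contradiction.

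In practice I would first try an interior tangency argument using the sections of $\partial\ew(B)$ obtained by intersecting with a one-parameter family of Cauchy slices, or alternatively a projection: push $\sigma$ along the null congruence orthogonal to $\gamma_A$ onto $\partial\ew(B)$ and use focusing to compare areas. The main obstacle is that $\partial\ew(B)\cap\Sigma$ is generally not itself extremal on $\Sigma$. Its mean-curvature sign within $\Sigma$ depends on how $\Sigma$ cuts the null horizon, because the expansion along the horizon is a null condition while $\sigma$ is minimal only within $\Sigma$. To handle this, I would use the freedom in choosing $\Sigma$ away from $\gamma_A$: maximin allows deforming $\Sigma$ near $\tau$ so that it runs along $\mathcal{H}^{\pm}[B]$, where the expansion sign is controlled. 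The difficult check is that this deformation keeps $\sigma$ minimal, or at least stationary, so that the tangency comparison still applies. Genericity is then invoked to exclude the degenerate case of vanishing expansion, which would otherwise allow $\sigma$ to coincide with a horizon section.
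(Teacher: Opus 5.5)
There is a genuine gap. The trouble is in how you produce the tangency; the ingredients you list (a contact point, a definite sign of the null expansion, the strong comparison principle for $\mathcal M$) are the right ones.

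\textbf{The sweeping family does not exist.} The barriers you sweep with are sections of $\partial\ew(B)$, and these all lie on one fixed null hypersurface. By construction the interior of $\sigma$ lies in the exterior of $\ew(B)$, so it is disjoint from $\partial\ew(B)$; likewise $\overline{\mathcal K}$ meets $\partial\ew(B)$ only along $\tau$. Hence no section of $\partial\ew(B)$, on any family of Cauchy slices, can enter $\mathcal K$ or touch $\sigma$ at an interior point. The ``first point of contact'' never occurs, and flowing $\tau$ along the horizon generators only moves it within $\partial\ew(B)$. The missing idea is to vary the region rather than the slice. The paper enlarges $B$ inside the boundary Cauchy slice to a family $B(\lambda)$. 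By entanglement wedge nesting, the wedges $\ew(B(\lambda))$ then grow monotonically and genuinely sweep through the exterior. Each $\partial\ew(B(\lambda))$ is itself an entanglement horizon whose inward generators have $\theta\le 0$, strictly at the contact point by genericity.

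\textbf{The side of contact matters, and an outward sweep picks the wrong one.} Even with the right family, a sweep that starts at $\tau$ and moves out through $\mathcal K$ meets $\sigma$ from the wrong side. Since $\theta<0$ along inward generators, $\partial\ew(B(\lambda))\cap\Sigma$ is mean convex toward the wedge (for $\Sigma$ with small extrinsic curvature near the contact point), like a round sphere bounding a ball. A minimal surface cannot touch it from the inside, but touching from the outside is fully consistent with the maximum principle, as for a tangent plane to a sphere. So a first contact of an outward-moving front with $\sigma$, which has $\sigma$ on the exterior side, gives no contradiction. The paper instead takes $\lambda_*$ to be the first value at which the whole excursion lies in $\ew(B(\lambda_*))$. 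At a remaining contact point $O$ the excursion sits on the interior side, so $u_1\le u_2$, $u_1(x_0)=u_2(x_0)$ and $\mathcal M(u_1)=0>\mathcal M(u_2)$, contradicting Proposition~\ref{prop:strongcomp}.

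\textbf{The two crossing cases are not mirror images.} By the same one-sidedness, the enter-then-exit case is not obtained from the exit-then-re-enter case by swapping sides, as you assert. A straight line enters and leaves a round ball, so an excursion of a minimal surface into such a mean-convex region cannot be excluded by this comparison. That case needs a separate argument; the paper's comparison is only run for an exit followed by a re-entry.

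\textbf{The slice obstacle you flag is not real.} $\gamma_A$ is extremal in spacetime, so it has vanishing mean curvature in every spacelike hypersurface containing it. You need neither the maximin slice nor a slice ``running along'' a null horizon, which a spacelike slice cannot do. The paper simply deforms $\Sigma\supset\gamma_A$ near $O$, not near $\tau$, so that $\langle H_{S_2},\nu_{\mathrm{in}}\rangle=\theta+\mathcal O(K^\Sigma)$ inherits the sign of $\theta(O)$.
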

Figure \ref{fig:HRT-cross} illustrates the HRT configuration that is excluded by Lemma \ref{lemma:barrier}. The proof is largely technical and does not rely on the main conceptual ingredients used in the main theorem below. We defer the proof to Section \ref{sec:proof-maximum-principal}.

\begin{lemma}[No partial coincidence with an entanglement wedge boundary]
\label{lemma:no_partial_coincidence}
Assume the same conditions as in Lemma \ref{lemma:barrier}. Then $\gamma_A$ cannot contain any nonempty open subset of $\partial \ew(B)$ unless $\gamma_A$ and $\gamma_B$ coincide on the corresponding connected component. In particular, for distinct HRT surfaces, $\gamma_A$ cannot partially lie along either $\gamma_B$ or the null components $\mathcal H^\pm[B]$ of $\partial\ew(B)$.
\end{lemma}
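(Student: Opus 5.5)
The plan is to split into two cases, according to which component of $\partial\ew(B)$ the hypothetical open piece of $\gamma_A$ lies on. In both cases the engine is unique continuation for the extremal surface equation.

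\emph{Case 1: an open piece lies on $\mathcal H^\pm[B]$.} Suppose an open subset $U\subset\gamma_A$ is contained in, say, $\mathcal H^+[B]$.

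1. $\gamma_A$ is a codimension-two spacelike surface, while $\mathcal H^+[B]$ is a null hypersurface ruled by the null generators $k$ emanating orthogonally from $\gamma_B$.
2. A spacelike codimension-two surface lying inside a null hypersurface has $k$ as one of its null normals. So $U$ is a cross-section of the congruence generated by $k$.
3. Extremality of $\gamma_A$ forces the expansion $\theta_k$ to vanish on $U$.
4. On the other hand, the generators start at $\gamma_B$ with $\theta_k=0$. By the Raychaudhuri equation and the NEC, $\theta_k\le 0$ along them, and under the genericity assumption $\theta_k<0$ strictly away from $\gamma_B$ (Appendix \ref{app:focusing}).
5. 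Hence $U$ can only sit at affine parameter zero, i.e. $U\subset\gamma_B$, which reduces to Case 2.

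The only non-generic escape is a stationary null sheet with $\sigma=0=R_{kk}$ along the whole generators. Genericity excludes this, and I would state explicitly that this is where it is used.

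\emph{Case 2: an open piece lies on $\gamma_B$.} Suppose an open subset $U$ of $\gamma_A$ is contained in $\gamma_B$, so $\gamma_A$ and $\gamma_B$ agree on $U$.

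1. Both are smooth spacelike codimension-two surfaces with vanishing mean curvature vector.
2. Near a point of $U$, write each as a graph over a common reference surface, using normal exponential coordinates. The condition $K^\mu=0$ then becomes a quasilinear elliptic system for the two normal displacement functions: the principal symbol is the Laplacian of the induced Riemannian metric, since the surfaces are spacelike.
3. Two solutions coinciding on an open set coincide to infinite order there. Aronszajn–Cordes unique continuation, applied to the linear elliptic system satisfied by their difference with coefficients built from the two solutions, shows they agree on the whole connected component containing $U$.
4. This is exactly the stated alternative that $\gamma_A$ and $\gamma_B$ coincide on the corresponding connected component.

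This elliptic setup is the same one I would use in the proof of Lemma \ref{lemma:barrier} in Section \ref{sec:proof-maximum-principal}, so I would reuse it there.

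\emph{Main obstacle.} Case 2 is routine once the graph coordinates are set up. The main obstacle is Case 1, in two places:

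1. Correctly identifying the null normal of $U$ with the horizon generator.
2. Handling points where $\mathcal H^\pm[B]$ leaves the boundary of the wedge, i.e. caustics or crossing generators. There $\partial\ew(B)$ is not a smooth null hypersurface.

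For the second point, I would first shrink $U$ to a smaller open set where $\partial\ew(B)$ is smooth. This is possible because non-smooth points form a set with empty interior in $\partial\ew(B)$. I would then apply the expansion argument on that smooth piece. If strict focusing cannot be guaranteed, I would fall back on the rigidity statement: $\theta_k\equiv 0$ on an open set of generators forces that portion of the sheet to be stationary, which contradicts the genericity assumption.
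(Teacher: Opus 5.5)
Your argument is correct, but it takes a different route from the paper in both cases.

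\emph{Overlap on $\gamma_B$.} The paper invokes the strong maximum principle on a local spacelike slice. You instead use Aronszajn--Cordes unique continuation for the system $K^\mu=0$, whose principal part is the induced Laplacian acting diagonally on the two normal displacements. Your tool is the better fit. Proposition~\ref{prop:strongcomp} needs a one-sided ordering $u_1\ge u_2$ on a common slice. Two codimension-two extremal surfaces that agree on $U$ can separate along the timelike normal just beyond $\partial U$, so neither hypothesis is available there. Unique continuation needs only agreement on an open set, plus the routine open--closed argument to globalize.

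\emph{Overlap on $\mathcal H^\pm[B]$.} You and the paper share the first step: $\theta=0$ on $\gamma_B$ and on $U$, so $\theta\equiv0$ on the strip between them by Raychaudhuri and the NEC. The paper then avoids genericity. It compares $\gamma_A$ with a horizon section via a maximum principle and concludes that $\gamma_A$ lies entirely on $\mathcal H^\pm[B]$. That stops short of the stated alternative; an extra step is still needed, e.g.\ that the anchor $\partial A$ would then sit in $\partial\hat D[B]\cap\hat\Sigma=\partial B$. You instead use the genericity hypothesis already built into Lemma~\ref{lemma:barrier} to forbid the stationary strip. This forces $U\subset\gamma_B$ and lands exactly on the lemma's conclusion via your Case 2.

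The price is that genericity is essential on your route: Rindler--AdS horizons in pure AdS have $\theta=\sigma=0$ identically. State it in your fallback form, ``no open portion of $\mathcal H^\pm[B]$ has $\sigma=0=R_{kk}$''. Excluding stationarity only ``along the whole generators'' is not enough, because the stationary strip need only run from $\gamma_B$ to $U$.

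One step needs repair. You cannot shrink $U$ to where $\partial\ew(B)$ is smooth just because the non-smooth set has empty interior in $\partial\ew(B)$. The crease set, where two sheets of generators cross, is itself a spacelike codimension-two surface (its tangent space is $k_1^\perp\cap k_2^\perp$), so $U$ could lie entirely inside it. The fix uses your own idea. Near a crease point, each branch is a smooth null hypersurface that contains $U$ and whose generators run back to $\gamma_B$ within the horizon. So $k_1$ is a null normal of $U$, extremality gives $\theta_{k_1}=0$ there, and your strip argument applies to that branch. Since crease points lie at positive affine parameter, genericity then gives a contradiction outright.
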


\begin{proof}
This follows by a similar argument using the strong maximum principle as in the proof of Lemma \ref{lemma:barrier}.
We give a short proof for the reader's convenience.

Suppose $\gamma_A$ contains a nonempty open subset of $\partial \ew(B)$.

If this open subset lies in $\gamma_B$, then $\gamma_A$ and $\gamma_B$ are two smooth extremal surfaces sharing an open subset. By the strong maximum principle after passing to a local spacelike slice, they coincide on the corresponding connected component.

If instead the open subset lies in one of the null components $\mathcal H^\pm[B]$, then $\gamma_A$ contains an open subset of a null hypersurface generated orthogonally from $\gamma_B$. Along $\mathcal H^\pm[B]$, the relevant null expansion satisfies $\theta \le 0$ by Raychaudhuri and the NEC. Since $\gamma_A$ is extremal, its null expansions vanish. Then the Raychaudhuri equation implies that the part of $\HH^\pm[B]$ between $\gamma_A$ and $\gamma_B$ also has vanishing expansion. Let $\gamma_B'$ be a section of $\HH^\pm[B]$ that partially coincides with $\gamma_A$. Then both $\gamma_B'$ and $\gamma_A$ have $\theta=0$ in the overllapping part. By the strong maximum principle after passing to a local spacelike slice, they coincide on the corresponding connected component. Thus, $\gamma_A$ lies entirely on $\HH^\pm[B]$.
\end{proof}

A direct consequence is that intersections between different pairs of entanglement horizons cannot share open subsets. As a result, we only need to consider intersecting entanglement horizons in pairs.

\begin{corollary}\label{cor:seam_pair}
Let $A_i,A_j$ be two distinct, spacelike-separated boundary regions, and define the intersection seam
\[
\mathcal S(A_i,A_j):=\partial \ew(A_i)\cap \partial \ew(A_j), \qquad i\neq j.
\]
Consider two distinct seams $\mathcal S(A_i,A_j)$ and $\mathcal S(A_k,A_l)$, i.e. at least one of $i,j$ differing from $k$ or $l$. They cannot share an open codimension-2 (relative to the bulk dimension) subset. Their intersection, if nonempty, must be of strictly higher codimension.
\end{corollary}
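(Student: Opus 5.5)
The plan is to argue by contradiction, reducing the statement to Lemma \ref{lemma:no_partial_coincidence}. Suppose $\mathcal S(A_i,A_j)$ and $\mathcal S(A_k,A_l)$ are distinct seams sharing a nonempty open codimension-2 subset $\sigma$. Since the two index pairs differ, at least three distinct boundary regions are involved. Relabel so that $A_k \notin \{A_i,A_j\}$ (if instead $A_l$ is the new index, swap the roles of $k$ and $l$). Then $\sigma$ lies in $\partial\ew(A_i)\cap\partial\ew(A_j)\cap\partial\ew(A_k)$. So it suffices to show that three distinct entanglement wedge boundaries cannot share an open codimension-2 set.

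\textbf{Generic position.} Each $\partial\ew(A)$ is a (piecewise smooth) achronal codimension-1 hypersurface, made of two null sheets joined along the codimension-2 HRT surface. Shrinking $\sigma$ if necessary, I will assume $\sigma$ lies in the smooth part of each of the three boundaries. The intersection of two such hypersurfaces that cross transversally is codimension 2. A third hypersurface containing an open piece of it would then have to share a local codimension-2 sheet with both.

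\textbf{Case analysis on where $\sigma$ sits in each boundary.}
\begin{itemize}
\item If $\sigma$ lies on the HRT surface $\gamma_m$ for some $m\in\{i,j,k\}$, then $\sigma$ is an open subset of $\gamma_m$ contained in $\partial\ew(A_{m'})$ for another index $m'$. Lemma \ref{lemma:no_partial_coincidence} then forces $\gamma_m$ to coincide with $\gamma_{m'}$ on that component. This contradicts distinctness of the HRT surfaces, or else the two seams collapse to the same seam.
\item If $\sigma$ lies only on null sheets of all three boundaries, then $\sigma$ is a common spacelike codimension-2 section of three null hypersurfaces. A spacelike codimension-2 surface has exactly two null normal directions, so two of the three sheets must share the same null generator direction along $\sigma$. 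By uniqueness of null geodesics generated orthogonally from a surface, those two null sheets coincide on an open set near $\sigma$. Transporting along the common generators back to the extremal surfaces, the Raychaudhuri/maximum principle argument used in Lemma \ref{lemma:no_partial_coincidence} gives either partial coincidence of HRT surfaces or an HRT surface lying on another wedge's horizon. Both are excluded.
\end{itemize}

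Once an open overlap is excluded, the intersection of the two seams is a nongeneric intersection of codimension-2 sets. Under the genericity assumption it therefore has strictly higher codimension.

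\textbf{Main obstacle.} I expect the hardest step to be the null-sheet case. It requires propagating the coincidence of two null sheets from $\sigma$ back along generators all the way to an HRT surface, and generators may leave the horizon at caustics or crossing seams. My first approach is to restrict to the open portion where both sheets are generated smoothly from their HRT surfaces, and to invoke the strong maximum principle only locally, exactly as in Lemma \ref{lemma:no_partial_coincidence}.
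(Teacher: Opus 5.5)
Your proposal follows the paper's own argument: a codimension-2 surface has only two null normal directions, so an open overlap of two distinct seams forces two horizons belonging to different boundary regions to share an open family of null generators, and this contradicts Lemma \ref{lemma:no_partial_coincidence}. The extra pieces you add only make explicit what the paper's one-line proof leaves implicit: the reduction to three distinct wedge boundaries via pigeonhole, the separate case $\sigma\subset\HHRT(A_m)$, and the transport of the shared generators back to an HRT surface.
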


\begin{proof}
    This follows directly from the fact that $\cS(A_i,A_j)$ is of co-dimension $2$ in the bulk spacetime and hence can only have two null normals. If two distinct intersection seams $\cS(A_i,A_j)$ and $\cS(A_k,A_l)$ have open overlaps, then at least two distinct entanglement horizons associated with different boudnary regions would have to share an open family of null generators. Since the boundary regions are different, this would yield a contradiction to Lemma \ref{lemma:no_partial_coincidence}.
\end{proof}

\begin{remark}[Constraints on intersections of causal boundaries]
\label{remark:causal_boundary_intersection}
Different intersection seams can intersect, but their intersections are constrained by the causal properties of entanglement wedge boundaries. 

In particular, causal boundaries are achronal hypersurfaces, and therefore two causal boundaries cannot bound a compact bulk region. 
Such a configuration would imply that a null generator of one causal boundary connects two distinct points of the other causal boundary, which can be deformed into a timelike curve and hence violate achronality. 

Considering the future and past horizon of an entanglement horizon separately and then gluing at HRT surfaces, we get that the intersection seam between two entanglement horizons of $A$ and $B$ is a simple, continuous submanifold that ends on $\bb$. More specifically, the intersection seam ends on $\partial \hat{D}[A] \cap \partial \hat{D}[B]$, where $\hat{D}[\cdots]$ denotes a boundary domain of dependence.
\end{remark}

\subsubsection{Projection along global time function}\label{sec:proj_cauchy}
Here we highlight a key property about HRT surfaces, stated as Lemma \ref{lemma:HRT_causal_better}, which underlies the construction of our graph model.

Recall that the AdS-hyperbolicity of $M$, or equivalently the global hyperbolicity of $\overline{M}$, implies that $\overline{M}$ has the topology $\Sigma \times \mathbb{R}$, where $\Sigma$ is a Cauchy surface of $\overline{M}$. As established in Theorems 8.3.14 and 8.2.2 of \cite{waldGR}, global hyperbolicity ensures the existence of a global time function $t$ (though highly non-unique). Each level set of $t$ is a Cauchy surface $\Sigma_t$, and the gradient $\nabla t$ defines a global timelike vector field. By projecting along the integral curves of $\nabla t$, we can map all spacetime points to a fixed Cauchy slice $\Sigma_{t_0}$.

Moreover, using this global time function $t$ we can express one Cauchy slice as a graph over another Cauchy slice. Let $\Sigma_1$ and $\Sigma_2$ be two Cauchy slices such that
\[\Sigma_1\cap \bb=\Sigma_2\cap \bb.\]
We can regard $\Sigma_2$ as a graph of $t=f$ over $\Sigma_1$ either by choosing the time function $t$ that admits $\Sigma_1$ as a level set or by take $f=t_2-t_1$ where $t_i$ is the time coordinate of $\Sigma_i$ using any global time function $t$.

Let $\sigma_1$ be a submanifold of $\Sigma_1$. We have a corresponding submanifold $\sigma_2=\{t=f\}|_{\sigma_1}$ on $\Sigma_2$, which is just the restriction of the graph $\{t=f\}$ onto $\sigma_1$.

Let $ds_1^2=g_{ij} dx^i dx^j$ denote the metric on $\Sigma_1$, with $x_i$ being coordinates on $\Sigma_1$. Then $\Sigma_2$ has a natural metric
\begin{equation}
    ds_2^2=ds_1^2- df^2=(g_{ij}-f_if_j)\,dx^i dx^j,
\end{equation}
where $f_i=\frac{\partial f}{\partial x_i}$.
By restricting the metric to $\sigma_1$, we get that
\begin{equation}
    |\sigma_2|\leq |\sigma_1|,
\end{equation}
where $|\sigma_i|$ denotes the area of $\sigma_i$.

To see this more explicitely, let $\sigma_1$ be parameterized by coordiantes $y^\alpha$, with embedding $x^i=x^i(y)$. Then the induced metric on $\sigma_1$ is 
\[h^{(1)}_{\alpha\beta}=g_{ij}\,\frac{\partial x^i}{\partial y^\alpha}\frac{\partial x^j}{\partial y^\beta}.\]

Similarly, the corresponding submanifold $\sigma_2\subset \Sigma_2$ inherics the induced metric
\[h^{(2)}_{\alpha\beta}=(g_{ij}-f_if_j)\,\frac{\partial x^i}{\partial y^\alpha}\frac{\partial x^j}{\partial y^\beta}=h^{(1)}_{\alpha\beta}-f_\alpha f_\beta,\]
where $f_\alpha=\frac{\partial f}{\partial x^i}\frac{\partial x^i}{\partial y^\alpha}.$

Thus the induced metric on $\sigma_2$ differs from that on $\sigma_1$ by a negative semi-definite term. In particular, for any tangent vector $v^\alpha$ on $\sigma_1$,
\[h_{\alpha\beta}^{(2)}v^\alpha v^\beta=h_{\alpha\beta}^{(1)}v^\alpha v^\beta-(v^\alpha f_\alpha)^2\leq h_{\alpha\beta}^{(2)}v^\alpha v^\beta.\]
This implies that the volume element satisfies
\[\sqrt{\det h^{(2)}} \leq \sqrt{\det h^{(1)}}\]
and hence
\[|\sigma_2|\leq |\sigma_1|.\]

The reader may wonder if we instead express $\sigma_2$ as a graph over $\sigma_1$, we would get the opposite conclusion that \[|\sigma_1|\leq |\sigma_2|.\]
This contradiction is avoided by noting the following fact. If we use the same $t$ function then a graph of $t=-f$ is not $\Sigma_1$ but the reflection of $\Sigma_2$ with respect to $\Sigma_1$. This is because coordiantes $x^i$ is defined intrinsically on $\Sigma_1$. Instead if we take a different set of time function $t'$ and a different coordinates $(x')^i$ on $\Sigma_2$, the $\Sigma_1$ wouldn not be simply $\{t'=-f\}$.

The following simple example in $1+1$ Minkowski spacetime may illustrate this better.
Let
$$\Sigma_1=\{t=0\},\quad \Sigma_2=\{t = f(x) = \alpha x\},$$
where$|\alpha|<1$.
Then their corresponding metrics are:
\[ds_1^2=dx^2,\quad ds_2^2 = (1 - \alpha^2)\, dx^2.\]

Take
$\sigma_1 = \{x \in [-1,1]\}.$ Then $\sigma_2=\{(x,\alpha x)| x \in[-1,1]\}.$
Their areas are 
\begin{align*}
    |\sigma_1| &= \int_{-1}^1 dx = 2,\\
    |\sigma_2| &= \int_{-1}^1 \sqrt{1-\alpha^2}\, dx
= 2\sqrt{1-\alpha^2}.
\end{align*}
Thus,
$|\sigma_2| < |\sigma_1|$.

Now, we try to recover $\Sigma_1$ from $\Sigma_2$. If we take the old coordinate on $\Sigma_2$, i.e. $u=x$. Then 
$$\Sigma_2=\{(t,x)=(\alpha u, u)\}.$$
But $$\Sigma_1\neq \{(t,x)=(-\alpha u,u)\}.$$
Of course, one can rotate the coordinate such that $\Sigma_2=\{t'=0\}$. Then $\Sigma_1$ would not simply be $\{t'=-f\}$. So there is no contradiction by switching the role of $\Sigma_1$ and $\Sigma_2$.

\medskip

An application of the above argument to HRT surfaces yields the following crucial fact that any homologous minimal surface that has less area than its corresponding HRT surface must be causally related to the HRT surface.
\begin{lemma}\label{lemma:HRT_causal_better}
    Any smaller-area minimal surface that arises in the maximin formulation, e.g. a least-area surface within the specific homology class on a Cauchy slice that has strictly smaller area than the corresponding homologous HRT surface, cannot be spacelike separated from the HRT surface.
\end{lemma}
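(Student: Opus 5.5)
The plan is to argue by contradiction, using the projection inequality $|\sigma_2|\le|\sigma_1|$ just derived together with the maximin characterization of HRT surfaces (Assumption \ref{assumption:1}). Let $\gamma=\HHRT(A_I)$ and let $\Sigma_\gamma$ be a maximin slice for $\gamma$: a Cauchy slice containing $\gamma$, on which $\gamma$ is the least-area surface homologous to $A_I$, and such that every other Cauchy slice $\Sigma$ with $\Sigma\cap\bb=\Sigma_\gamma\cap\bb$ has $\min_{\Sigma}|\cdot|\le|\gamma|$.

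Suppose $\sigma$ is a homologous minimal surface with $|\sigma|<|\gamma|$, for instance the least-area homologous surface on some Cauchy slice $\Sigma'$. Suppose also that $\sigma$ is spacelike separated from $\gamma$, meaning no point of $\sigma$ is causally related to any point of $\gamma$ other than at shared boundary anchors.

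\textbf{Step 1: a common slice.} I will build a Cauchy slice $\Sigma_1$ that contains both $\gamma$ and $\sigma$ and agrees with $\Sigma_\gamma$ on $\bb$. Because $\gamma\cup\sigma$ is achronal, a standard result gives such a slice: any closed achronal set with suitable edge behaviour extends to a Cauchy surface, here of $\overline M$ using AdS-hyperbolicity. The extension can be chosen to coincide with $\Sigma_\gamma$ away from a neighbourhood of $\sigma$.

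\textbf{Step 2: projecting to the maximin slice.} Write $\Sigma_\gamma$ as the graph $t=f$ over $\Sigma_1$ using a global time function, with $f=0$ on $\gamma$ and near $\bb$. Restricting the graph to $\sigma$ gives a surface $\sigma_2\subset\Sigma_\gamma$, and the computation above yields $|\sigma_2|\le|\sigma|<|\gamma|$.

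\textbf{Step 3: homology and contradiction.} I will check that $\sigma_2$ is homologous to $A_I$ on $\Sigma_\gamma$. The projection along the integral curves of $\nabla t$ is a homeomorphism $\Sigma_1\to\Sigma_\gamma$ fixing the boundary, and $\sigma$ is homologous to $A_I$ on $\Sigma_1$ because homology is a causal-diamond statement independent of the slice. So $\sigma_2$ competes with $\gamma$ on $\Sigma_\gamma$, and $|\sigma_2|<|\gamma|$ contradicts the minimality of $\gamma$ on its maximin slice. Hence $\sigma$ must be causally related to $\gamma$.

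\textbf{Main obstacle.} The hardest part is Step 1, together with making sure the projection in Step 2 is set up correctly. The area inequality holds only when we project from a slice through $\sigma$ onto $\Sigma_\gamma$, which is the direction used above; the paper's own remark shows the reverse direction fails. The issue is constructing a single Cauchy slice through both surfaces, with the correct asymptotics, when they merely share boundary anchors, since near $\bb$ one must control edges and regularity.

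My first attempt is to take $\Sigma_1=\partial J^+[\sigma\cup\gamma\cup(\Sigma_\gamma\setminus U)]$, smoothed, where $U$ is a neighbourhood of $\sigma$. I would then use genericity to avoid degenerate null contact. If this construction is too delicate, a fallback is to replace the common slice by a piecewise slice. That would use the local version of the projection inequality on each patch of $\sigma$, together with additivity of area.
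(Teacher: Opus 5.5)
Your Steps 1--3 reproduce the paper's own proof almost verbatim: a common slice via Lemma 14 of \cite{HHLR2014}, projection along a global time function onto the maximin slice, and homology via the homotopy. Step 1 is not where the difficulty lies. The genuine gap is the inequality $|\sigma_2|\le|\sigma|$ in Step 2, which does not hold in general, and the direction of projection is not the issue. The computation it rests on is the induced metric of a graph $t=f$ in an ultrastatic product metric $-dt^2+g_{ij}(x)\,dx^idx^j$ adapted to $\Sigma_1$. In general, in coordinates comoving along $\nabla t$, one has $ds^2=-N^2dt^2+h_{ij}(t,x)\,dx^idx^j$. The graph over $\Sigma_1=\{t=0\}$ then has induced metric $h_{ij}(f(x),x)-N^2f_if_j$. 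The change $h_{ij}(f,x)-h_{ij}(0,x)\approx 2fNK_{ij}$, with $K_{ij}$ the extrinsic curvature of $\Sigma_1$, is first order in $f$ and of either sign, while $-N^2f_if_j$ is only second order. The estimate is safe essentially when $\Sigma_1$ is a static slice of a static spacetime, whereas here $\Sigma_1$ is an arbitrary slice through $\sigma$ and $\gamma$. Already in pure AdS$_3$, flowing along the Killing time from a slice $\{t=F\}$ down to $\{t=0\}$ increases lengths: the induced metric on $\{t=F\}$ is $h-\cosh^2\!\rho\,dF^2$, with $h$ the hyperbolic metric of $\{t=0\}$, while on $\{t=0\}$ it is $h$.

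The step cannot be patched either. Steps 2--3 never use spacelike separation or $\gamma\subset\Sigma_1$. They would therefore show that every homologous surface on every Cauchy slice with the given boundary has area at least $|\gamma|$, forcing the least area on every such slice to equal $|\gamma|$ and making the maximization in maximin vacuous. Concretely, in pure AdS$_3$ with $\gamma$ the geodesic on $\{t=0\}$, let $\Sigma'=\{t=F\}$. Take $F$ supported in a compact set beside $\gamma$ at positive distance from it, sawtoothing with nearly null slope along a corridor parallel to $\gamma$. A detour through the corridor is shorter than the segment of $\gamma$ it replaces, so the least-area homologous curve $\sigma$ on $\Sigma'$ has $|\sigma|<|\gamma|$. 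Consequently every homology-preserving projection of $\sigma$ onto $\{t=0\}$ must lengthen it to at least $|\gamma|$. The induced metric equals $h$ near $\gamma$ and on its far side, and hyperbolic geodesics sharing an ideal endpoint do not meet, so $\sigma$ meets $\gamma$ only at the anchors. Both lie on the acausal slice $\Sigma'$. So $\sigma$ is a least-area representative on a Cauchy slice, strictly smaller than $\gamma$, and spacelike separated from it. The statement fails as written, and the paper's proof breaks at exactly the same step as yours. What fails is the strategy of projecting a competitor along a time function onto $\Sigma_\gamma$, not the construction of the common slice.
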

\begin{proof}
    We argue by contradiction. Let $\tilde \gamma$ be a minimal surface  arising in the maximin procedure that has less area than the corresponding homologous HRT surface $\gamma$ and is spacelike separated from the HRT surface $\gamma$. Since $\tilde\gamma$ and $\gamma$ are spacelike separated, there exists a Cauchy slice that contains both of them \footnote{See e.g. Lemma 14 of ref. \cite{HHLR2014} for a proof.}, which is denoted by $\tilde\Sigma$.
    Let $\Sigma$ denote a maximin Cauchy slice of the HRT surface $\gamma$. 

    Let $t$ be any global time function and regard $\Sigma$ as a graph over $\tilde\Sigma$, with graph function $t=f$.
    We can construct a surface $\tilde{\tilde{\gamma}}$ in the maximin slice $\Sigma$  by projecting $\tilde\gamma$ along the global time function $t$ as above. That is, $\tilde{\tilde{\gamma}}$ is a graph over $\tilde{\gamma}$, with the graph function being the restriction of $t=f$ to $\tilde{\gamma}$.

    Since the projection is along integral curves of a global time function, it defines a homotopy between $\tilde\gamma$ and $\tilde{\tilde{\gamma}}$ that does not intersect the boundary. Therefore, $\tilde{\tilde{\gamma}}$ remains in the same homology class as $\tilde\gamma$, and hence homologous to $\gamma$.
    
    By the above argument, projection along a timelike flow reduces area, so we have
    \[|\tilde{\tilde{\gamma}}|\leq |\tilde\gamma| < |\gamma|,\]
    which contradicts the fact that $\gamma$ is least-area in its homology class on $\Sigma$.
\end{proof}

\subsection{Two-horizon case}\label{sec:two-horizon}
\begin{figure}
    \centering
    \includegraphics[width=0.8\linewidth,trim={5 1cm 5 5},clip]{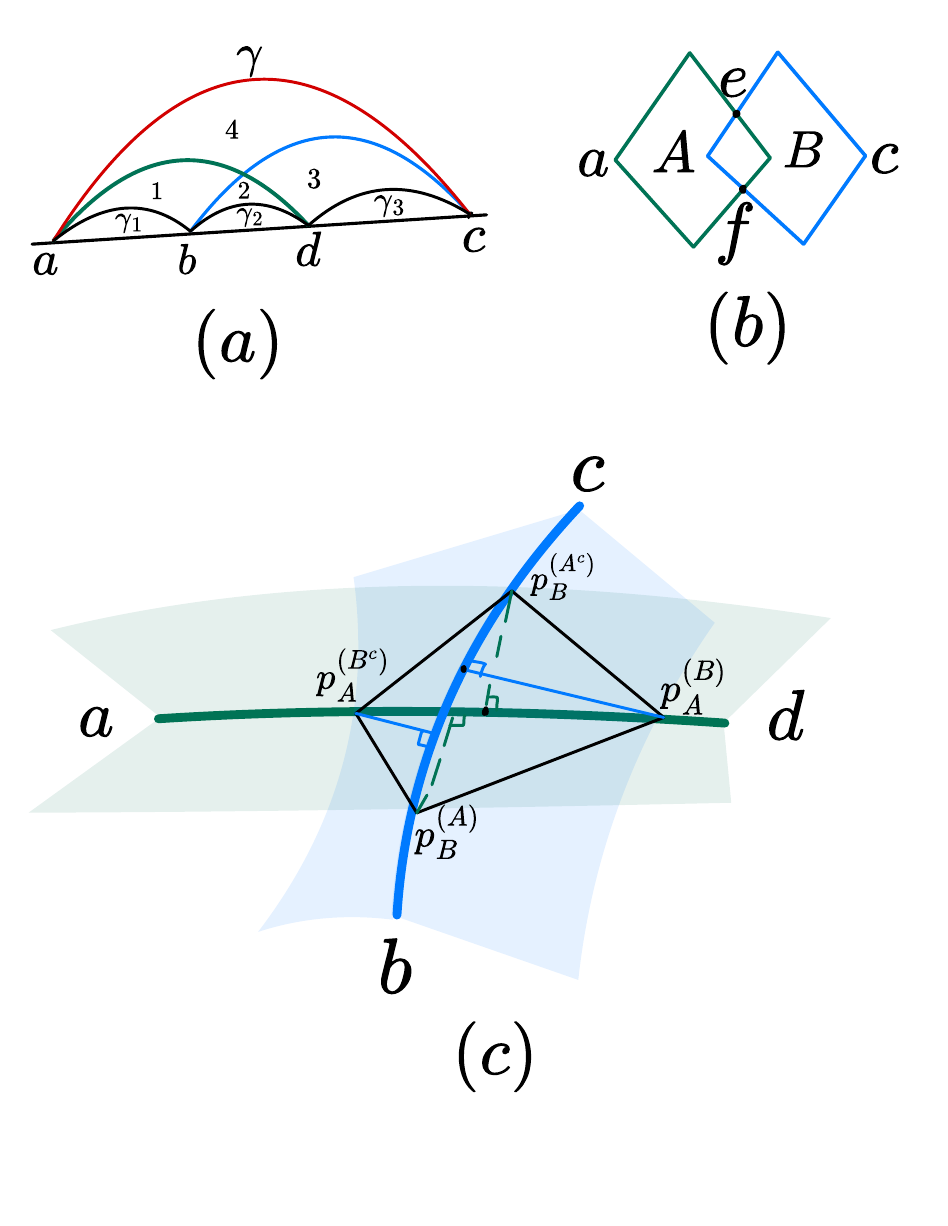}
    \caption{Geometry of two intersecting entanglement horizons. Panel (a) shows the setup ignoring the time direction with HRT$(A)$ and HRT$(B)$ shown in green and blue, respectively, $\gamma=\HHRT(A\cup B)$ shown in red, $\gamma_1=\HHRT(A\setminus B), \gamma_2=\HHRT(A \cap B), \gamma_3=\HHRT(B \setminus A)$ are shown in black. For later convenience we also label the four bulk cells. Panel (b) illustrates the two boundary domains of dependence $\hat{D}[A]$ (green) and $\hat{D}[B]$ and their intersections. Panel (c) illustrates the intersection geometry between null sheets emanating from $\HHRT(A)$ and from $\HHRT(B)$. Without loss of generality, we place $\HHRT(A)$ to the past of $\HHRT(B)$. We label where $\HHRT(A)$ crosses $\pew(B)$ and $\pew(B^c)$ as $p_A^{(B)}$ or $p_A^{(B^c)}$, respectively. Similarly for $p_B^{(A)}$ and $p_B^{(A^c)}$. To avoid clutter, we do not explicitly label the corresponding projecting sections $q_A^{(A\to B)}$, $q_A^{(A^c\to B)}$, $q_B^{(B\to A)}$, and $q_B^{(B^c\to A)}$, but indicate their locations using special markers (shown as $\perp$ symbols), which denote null generators from these sections reach projection loci.}
    \label{fig:two-horizon_geometry}
\end{figure}

Corollary \ref{cor:seam_pair} suggests that we could potentially reduce the problem to pairwise considerations. Therefore, we first study the configuration of two connected HRT surfaces associated with two intersecting boundary regions. 

Drawing an analogy to the RT case, it is natural to expect that the two-horizon configuration should be represented by a graph model with four bulk cells and that each HRT surface is partitioned into two edges. To understand the freedom in choosing where to partition each HRT surface into two edges, we study the geometry of intersecting null sheets associated with the two HRT surfaces, as illustrated in Figure \ref{fig:two-horizon_geometry}. 

\paragraph{Intersection Geometry of entanglement horizons}
We note that there are four null sheets emanating from each HRT surface $\HHRT(A)$, $\HH^\pm[A]$ and $\HH^\pm[A^c]$, where $A^c$ is used in the sense of Remark \ref{remark:Ac_A}.

Consider two boundary regions $A$ and $B$, each of which is of form $A_I, I \subseteq\nset$. Without loss of generality, we assume that $\HHRT(A)$ lies to the past of $\HHRT(B)$ near where their entanglement horizon intersects.

We denote where $\HHRT(A)$ crosses $\ew(B^c)$ by $p_A^{(B)}$ and where $\HHRT(A)$ crosses $\ew(B^c)$ by $p_A^{(B^c)}$.
The part of $\HHRT(A)$ bounded between the two sections $p_A^{(B)}$  and $p_A^{(B^c)}$ is causally related to $\HHRT(B)$ and we will refer to this part of $\HHRT(A)$ as $\gamma_A^{(B)}$. 

Similarly, we have two sections $p_B^{(A)}$ and $p_B^{(A^c)}$ on $\HHRT(B)$, defined to be where $\HHRT(B)$ crosses $\ew(A)$ and $\ew(A^c)$, respectively. Similarly the in-between part $\gamma_B^{(A)} \subset \HHRT(B)$ is causally connected to $\HHRT(A)$.

In short, any point in $\gamma_A^{(B)}$ can be connected by a causal curve to a point in $\gamma_B^{(A)}$. This suggests certain freedom in partitioning $\HHRT(A)$ and $\HHRT(B)$ into edges. 

Since the sections $p_{B}^{(A)},p_B^{(A^c)} \subset \HHRT(B)$ lie on future horizons of  $\HHRT(A)$, by well-known properties of causal boundaries, there exist sections $q_A^{(A\to B)}, q_{A}^{(A^c\to B)} \subset \HHRT(A)$ whose null generators along $\HH^+[A]$ and $ \HH^+[A^c]$ reach $p_{B}^{(A)}$  and $p_B^{(A^c)}$, respectively. Similarly, there exist $q_B^{(B\to A)}, q_B^{(B^c\to A)}\subset \HHRT(B)$ whose null generator reach $\HHRT(A)$ at $p_A^{(B)}$ and $p_A^{(B^c)}$, respectively.

Either by geometric intuition or by the causality argument deferred to Section \ref{sec:causality_position}, we have that $q_A^{(A\to B)}$ and $q_A^{(A^c\to B)}$ both lie in $\gamma_A^{(B)}$, the part of $\HHRT(A)$ causally connected to $\HHRT(B)$. Similarly, both $q_B^{(B\to A)}$ and $q_B^{(B^c\to A)}$ lie in $\gamma_B^{(A)}$. 

Note that there are four intersection seams associated with the two HRT surfaces: 
\begin{align*}
\cS(A,B)&=\pew(A)\cap \pew(B),\\
\cS(A,B^c)&=\pew(A)\cap \pew(B^c),\\
\cS(A^c,B)&=\pew(A^c)\cap \pew(B),\\
\cS(A^c,B^c)&=\pew(A^c)\cap \pew(B^c).
\end{align*}
For each intersection seam, we will only consider the portion bounded by two relevant HRT surfaces, namely the portion lying to the future of $\HHRT(A)$ and to the past of $\HHRT(B)$. In the following, the term ``intersection seam'' will always refer to this restricted portion.
These four seams bound two null quadrilaterals: one on the past horizons $\pew(B)\cup \pew(B^c)$ of $\HHRT(B)$, and one on the future horizons $\pew(A)\cup \ew(A^c)$ of $\HHRT(A)$.

\medskip

We will identify a family of weight functions that satisfy the global compatibility condition. As a first step, we define four \emph{canonical} weight functions using the above null related sections.
\begin{definition}[Four canonical weight functions]
There are four natural choices of partitioning the two HRT surfaces \footnote{If $q_A^{(A\to B)}=q_A^{(A^c\to B)}$ and $q_B^{(B\to A)}=q_B^{(B^c\to A)}$, then all arguments below simplify greatly. But this is a rather special situation.}:
\begin{enumerate}
    \item partition $\HHRT(A)$ into two edges where it crosses $\ew(B)$, i.e. $p_A^{(B)}$ while partition $\HHRT(B)$ at $q_B^{(B\to A)}$ whose null generators reach $p_A^{(B)}$.
    \item partition $\HHRT(A)$ into two edges where it crosses $\ew(B^c)$, i.e. $p_A^{(B^c)}$ while partition $\HHRT(B)$ at $q_B^{(B^c\to A)}$ whose null generators reach $p_A^{(B^c)}$.
    \item partition $\HHRT(B)$ into two edges where it crosses $\ew(A)$, i.e. $p_B^{(A)}$ while partition $\HHRT(A)$ at $q_A^{(A\to B)}$ whose null generators reach $p_B^{(A)}$.
    \item partition $\HHRT(B)$ into two edges where it crosses $\ew(A^c)$, i.e. $p_B^{(A^c)}$ while partition $\HHRT(A)$ at $q_A^{(A^c\to B)}$ whose null generators reach $p_B^{(A^c)}$.
\end{enumerate}
After partitioning an HRT surface using one of the four choices, we will conveniently denote the part that mostly lies out the other entanglement wedge by $\HHRT^o$ and the other part by $\HHRT^i$.    

Each partition induces a graph structure whose edges correspond to the resulting HRT segments. We assign to each edge a weight equal to the area of the corresponding HRT segment.
\end{definition}\label{defn:4case_cut_HRT}

In case $(1)$ and $(2)$, we project $\HHRT(B)$ along $\HH^-[B]$ or $\HH^-[B^c]$ onto the maximin slice of $\HHRT(A)$. By focusing, during this projection process, the area of both outer and inner parts of $\HHRT(B)$ do not increase. We then reduce to the RT case. In case $(3)$ and $(4)$, we instead project $\HHRT(A)$ toward a maximin slice of $\HHRT(B)$. In particular, we have the following theorem.

\begin{theorem}[Two intersecting entanglement horizons]\label{thm:2horizon_projection_4inequality}
Let \(A,B\) be boundary regions with overlapping domains of dependence
\(\hat D[A]\cap \hat D[B]\neq\varnothing\).
Then the entanglement horizons \(\pew(A)\) and \(\pew(B)\) intersect.
Decompose the two HRT surfaces using one of the four choices listed in Definition \ref{defn:4case_cut_HRT}:
\[
\mathrm{HRT}(A)=\mathrm{HRT}(A)^{\mathrm{o}}\cup \mathrm{HRT}(A)^{\mathrm{i}},
\qquad
\mathrm{HRT}(B)=\mathrm{HRT}(B)^{\mathrm{o}}\cup \mathrm{HRT}(B)^{\mathrm{i}}.
\]
Then the following inequalities are true:
\begin{align}
\bigl|\mathrm{HRT}(A)^{\mathrm{o}}\bigr|
+
\bigl|\mathrm{HRT}(B)^{\mathrm{o}}\bigr|
\;\ge\;
\bigl|\mathrm{HRT}(A\cup B)\bigr|, \label{eq:two-horizon-ineq1} \\
\bigl|\mathrm{HRT}(A)^{\mathrm{i}}\bigr|
+
\bigl|\mathrm{HRT}(B)^{\mathrm{i}}\bigr|
\;\ge\;
\bigl|\mathrm{HRT}(A\cap B)\bigr|,\label{eq:two-horizon-ineq2} \\
\bigl|\mathrm{HRT}(A)^{\mathrm{o}}\bigr|
+
\bigl|\mathrm{HRT}(B)^{\mathrm{i}}\bigr|
\;\ge\;
\bigl|\mathrm{HRT}(A\setminus B)\bigr|,\label{eq:two-horizon-ineq3} \\
\bigl|\mathrm{HRT}(A)^{\mathrm{i}}\bigr|
+
\bigl|\mathrm{HRT}(B)^{\mathrm{o}}\bigr|
\;\ge\;
\bigl|\mathrm{HRT}(B \setminus A)\bigr|. \label{eq:two-horizon-ineq4}
\end{align}
\end{theorem}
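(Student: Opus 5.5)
We plan to prove the four inequalities by a projection-and-maximin argument, treating cases (1) and (2) of Definition \ref{defn:4case_cut_HRT} in detail. Cases (3) and (4) follow by exchanging the roles of $A$ and $B$: we then project $\HHRT(A)$ forward along $\HH^+[A]$ or $\HH^+[A^c]$ instead of projecting $\HHRT(B)$ backward.

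The first claim is that $\pew(A)$ and $\pew(B)$ intersect. This follows from the boundary picture. Since $\hat D[A]\cap\hat D[B]\neq\varnothing$ and neither region contains the other, the boundary horizons $\partial\hat D[A]$ and $\partial\hat D[B]$ cross. Entanglement wedge nesting \cite{wall2014maximin} puts $\ew(A)\supseteq J^\pm$-wedge data of $\hat D[A]$ and anchors $\pew(A)$ on $\partial\hat D[A]$. A continuity argument on a global time foliation, as in Remark \ref{remark:causal_boundary_intersection}, then forces the two achronal bulk boundaries to cross. This yields the seams $\cS(A,B)$, $\cS(A,B^c)$, $\cS(A^c,B)$ and $\cS(A^c,B^c)$, which end on $\partial\hat D[A]\cap\partial\hat D[B]$.

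For case (1), let $\Sigma_A$ be a maximin slice of $\HHRT(A)$. We flow each point of $\HHRT(B)$ to the past along the null generators of $\HH^-[B]$ on the inner side and $\HH^-[B^c]$ on the outer side, stopping on $\Sigma_A$. The partition point $q_B^{(B\to A)}$ is chosen precisely so that its generators land on $p_A^{(B)}$. The projected image $\tilde\gamma_B=\tilde\gamma_B^{\mathrm i}\cup\tilde\gamma_B^{\mathrm o}$ is therefore a surface on $\Sigma_A$ whose two pieces meet $\HHRT(A)$ exactly at its own partition locus $p_A^{(B)}$. The projection runs along past-directed generators leaving an extremal surface, where $\theta=0$ initially. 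By Raychaudhuri and the null curvature condition (Appendix \ref{app:focusing}), area is nonincreasing along the flow. Hence
\[
|\tilde\gamma_B^{\mathrm i}|\le|\HHRT(B)^{\mathrm i}|,\qquad |\tilde\gamma_B^{\mathrm o}|\le|\HHRT(B)^{\mathrm o}|.
\]
Case (2) is the same with $p_A^{(B^c)}$ and $q_B^{(B^c\to A)}$, except that the two generator families swap roles. The argument then reduces to the RT situation on $\Sigma_A$. The cell structure of Figure \ref{fig:two-horizon_geometry}(a) is reproduced by $\HHRT(A)$ and $\tilde\gamma_B$. The surfaces $\HHRT(A)^{\mathrm o}\cup\tilde\gamma_B^{\mathrm o}$, $\HHRT(A)^{\mathrm i}\cup\tilde\gamma_B^{\mathrm i}$, $\HHRT(A)^{\mathrm o}\cup\tilde\gamma_B^{\mathrm i}$ and $\HHRT(A)^{\mathrm i}\cup\tilde\gamma_B^{\mathrm o}$ lie on the Cauchy slice $\Sigma_A$. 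They are homologous respectively to $A\cup B$, $A\cap B$, $A\setminus B$ and $B\setminus A$, since the null flow is a homotopy avoiding $\bb$.

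Each inequality then follows from the maximin characterization (Assumption \ref{assumption:1}). The HRT surface of each composite region $X$ has area at least that of the minimal surface homologous to $X$ on any Cauchy slice, in particular on $\Sigma_A$. That minimal surface has area at most our glued competitor. Chaining this with the focusing bound gives (\ref{eq:two-horizon-ineq1})–(\ref{eq:two-horizon-ineq4}). For example, $|\HHRT(A\cup B)|\le|\HHRT(A)^{\mathrm o}|+|\tilde\gamma_B^{\mathrm o}|\le|\HHRT(A)^{\mathrm o}|+|\HHRT(B)^{\mathrm o}|$.

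The main obstacle is the projection step. We must show that every generator of $\HH^-[B]$ and $\HH^-[B^c]$ actually reaches $\Sigma_A$ before leaving its horizon through a caustic or crossing. We must also show that the image is a well-defined, correctly glued surface whose pieces meet exactly on $p_A^{(B)}$. To settle this, we would first choose $\Sigma_A$ to contain the seam portions of $\pew(A)$. We would then use Lemma \ref{lemma:barrier} and Lemma \ref{lemma:no_partial_coincidence} to rule out multiple crossings, and use the location of $q_B^{(B\to A)}$ inside $\gamma_B^{(A)}$, argued in Section \ref{sec:causality_position}, to control where the generators land. Where generators leave the horizon early, we would instead continue along the time-function projection of Lemma \ref{lemma:HRT_causal_better}, which also does not increase area.
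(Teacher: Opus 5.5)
Your route is the paper's: project $\HHRT(B)$ along its past horizon onto a slice through $\HHRT(A)$, use focusing, then cut and glue as in the RT case. The gap is in the final step, where the inequalities are chained in the wrong direction.

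Write $m_{\Sigma_A}(X)$ for the least area in the homology class of $X\in\{A\cup B,\,A\cap B,\,A\setminus B,\,B\setminus A\}$ on $\Sigma_A$, and $\sigma_X$ for your glued competitor. Maximin gives $|\HHRT(X)|=\max_\Sigma m_\Sigma(X)\ge m_{\Sigma_A}(X)$. Your construction gives $m_{\Sigma_A}(X)\le|\sigma_X|$. Both bounds only place $|\HHRT(X)|$ and $|\sigma_X|$ above $m_{\Sigma_A}(X)$; they say nothing about how $|\HHRT(X)|$ compares with $|\sigma_X|$. On an arbitrary maximin slice of $\HHRT(A)$, $m_{\Sigma_A}(X)$ can be strictly smaller than $|\HHRT(X)|$, and then a competitor on that slice can genuinely undercut $\HHRT(X)$.

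The missing idea is the choice of slice. You need $\Sigma_A$ to contain $\HHRT(A)$ and, in addition, to contain $\HHRT(A\cup B)$, $\HHRT(A\cap B)$, $\HHRT(A\setminus B)$, $\HHRT(B\setminus A)$ as least-area surfaces in their homology classes, so that $m_{\Sigma_A}(X)=|\HHRT(X)|$. The paper obtains such a simultaneous maximin slice from entanglement wedge nesting, since each of the four composites contains, is contained in, or is disjoint from $A$. Only with that slice does the RT cut-and-glue step yield \eqref{eq:two-horizon-ineq1}--\eqref{eq:two-horizon-ineq4}.

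There is also a gluing problem in your projection step. For $\HHRT(A)^{\mathrm o}\cup\tilde\gamma_B^{\mathrm o}$ and the other three unions to be closed surfaces homologous to the composites, both projected pieces of $\HHRT(B)$ must end exactly at $p_A^{(B)}$. That forces you to project $\HHRT(B)$, at least near the partition section, along the single congruence $\HH^-[B]$ in case (1), or $\HH^-[B^c]$ in case (2), as the paper does. Your wording suggests the inner piece flows along $\HH^-[B]$ and the outer piece along $\HH^-[B^c]$. If so, the two images separate at $q_B^{(B\to A)}$, because the $\HH^-[B^c]$ generator from $q_B^{(B\to A)}$ generically lands on $\Sigma_A$ away from $p_A^{(B)}$. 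The resulting gap breaks your homology claim for the glued competitors.
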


\begin{proof}

Note that the four boundary regions $ A\setminus B, B\setminus A, A\cap B$ and $A\cup B$ are either nested with or spacelike separated from $A$ or $B$.
Then by entanglement wedge nesting \cite{wall2014maximin}, we can choose a maximin slice $\Sigma_A$ of $\HHRT(A)$ or a maximin slice $\Sigma_B$ of $\HHRT(B)$ such that $\HHRT(A\setminus B), \HHRT(A \cap B), \HHRT(B\setminus A)$ and $\HHRT(A\cup B)$ are least-area among their respective homology classes on  $\Sigma_A$ or $\Sigma_B$.

For weight function of case $(1)$ and $(2)$ of Definition \ref{defn:4case_cut_HRT}, we project $\HHRT(B)$ onto $\Sigma_A$. Then we reduce to the RT case. 
For weight function of case $(3)$ and $(4)$ of Definition \ref{defn:4case_cut_HRT}, we project $\HHRT(A)$ onto $\Sigma_B$. Then we reduce to the RT case. 
The claim then follows directly.

\end{proof}

For the two-horizon configuration, we have twelve triangle inequalities in total four bulk cells. One can check easily that these twelve triangle inequalities are the only nontrivial polygon inequalities in this configuration.
We claim that the remaining eight triangle inequalities follow from the above \eqref{eq:two-horizon-ineq1}-\eqref{eq:two-horizon-ineq4}. 
\begin{corollary}\label{cor:triangel_trick}
    Assume the same conditions as in Theorem \ref{thm:2horizon_projection_4inequality}. Then we have four bulk triangular cells and the global compatibility condition is equivalent to twelve triangle inequalities (three inequalities for each cell).
    
    All twelve triangle inequalities are satisfied for each of the four weight function in Definition \ref{defn:4case_cut_HRT}. 
\end{corollary}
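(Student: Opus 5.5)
Theorem \ref{thm:polygon_inequality} reduces global compatibility to polygon inequalities for connected unions of cells, and I would make that reduction explicit first. In the two-horizon configuration there are four cells, which I label by the regions $A\setminus B$, $A\cap B$, $B\setminus A$ and the complement of $A\cup B$. Each cell is bounded by exactly three HRT sides: one of the four pieces $\HHRT(A)^{o/i}$, $\HHRT(B)^{o/i}$, one of the others, and one complete surface among $\gamma=\HHRT(A\cup B)$, $\gamma_1=\HHRT(A\setminus B)$, $\gamma_2=\HHRT(A\cap B)$, $\gamma_3=\HHRT(B\setminus A)$.

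Connected unions of two or more cells need a separate check. Any union containing a complete pair of pieces of an HRT surface yields an inequality in which that side carries its full area. Such an inequality is then a comparison between complete HRT surfaces on a common maximin slice, supplied by entanglement wedge nesting. Alternatively it follows by summing the single-cell inequalities. So it suffices to treat the $4\times 3=12$ triangle inequalities.

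Of the three inequalities for each cell, the one with the complete HRT surface (e.g. $\gamma$) alone on the left is exactly one of \eqref{eq:two-horizon-ineq1}--\eqref{eq:two-horizon-ineq4}. These hold for every canonical weight by Theorem \ref{thm:2horizon_projection_4inequality}. The remaining eight have a partial piece on the left, e.g.
\[
|\HHRT(A)^{o}| \le |\HHRT(B)^{o}| + |\gamma| .
\]
For these I would use the ``triangle trick'' of adding two inequalities from adjacent cells. Consider the cell bounded by $\HHRT(A)^o$, $\HHRT(B)^i$ and $\gamma_1$, together with the cell bounded by $\HHRT(B)^i$, $\HHRT(A)^i$ and $\gamma_2$. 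Combine the completeness of $|\HHRT(A)|=|\HHRT(A)^o|+|\HHRT(A)^i|$ with the fact that $\HHRT(A)$ is least-area on its maximin slice $\Sigma_A$. After projecting onto $\Sigma_A$ (or $\Sigma_B$), the union of $\HHRT(B)^{o}$ and $\gamma$ (suitably glued) is homologous to $A$. Hence $|\HHRT(A)|\le |\HHRT(B)^o|+|\gamma|+|\HHRT(A)^i|$, and cancelling $|\HHRT(A)^i|$ gives the desired inequality. Concretely, for each of the eight inequalities I would complete the partial side to its full HRT surface, as in the derivation of \eqref{eq:polygon_ineq}. This reduces the claim to showing that the surface (remaining two sides) $\cup$ (complementary piece) is no smaller than the full $\HHRT(A)$ or $\HHRT(B)$.

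That last comparison is the main obstacle, because the pieces lie on different slices. For cases (1),(2) of Definition \ref{defn:4case_cut_HRT}, the cut on $\HHRT(B)$ is at $q_B$, whose generators land exactly on the cut $p_A$ of $\HHRT(A)$. Projecting $\HHRT(B)$ along $\HH^-[B]$ or $\HH^-[B^c]$ onto $\Sigma_A$ is area non-increasing by focusing, and it sends the $\HHRT(B)$ pieces to surfaces meeting $\HHRT(A)$ precisely at the partition section. Nesting makes $\gamma,\gamma_1,\gamma_2,\gamma_3$ minimal on $\Sigma_A$, so everything lies on one slice, and the RT argument (each RT piece is minimal given its boundary) yields the inequalities.

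Inequalities whose left side is a piece of $\HHRT(B)$ are harder, since $\HHRT(B)$ is not on $\Sigma_A$. Here I would instead invoke Lemma \ref{lemma:HRT_causal_better}. Suppose the projected competitor for $\HHRT(B)$ had smaller area. It is built from pieces on or causally related to the horizons of $B$, and the lemma forces such a competitor to be causally related to $\HHRT(B)$. Projecting it back along $\HH^\pm[B]$ onto $\Sigma_B$ then produces a homologous surface with area below $|\HHRT(B)|$, which is a contradiction. Cases (3),(4) follow by the symmetric argument with $A\leftrightarrow B$.
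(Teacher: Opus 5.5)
There is a genuine gap, and it is in the four inequalities you flag as harder. In cases (1),(2) these are the ones whose left-hand side is a piece of $\HHRT(B)$, e.g. $|\HHRT(B)^o|\le|\gamma|+|\HHRT(A)^o|$; symmetrically, the $\HHRT(A)$-pieces in cases (3),(4).

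Your direct argument for the other four is fine. Projecting $\HHRT(B)$ onto $\Sigma_A$ sends $q_B^{(B\to A)}$ to $p_A^{(B)}$, and minimality of $\HHRT(A)$ on $\Sigma_A$ gives e.g. $|\HHRT(A)|\le|\gamma|+|\HHRT(B)^o|+|\HHRT(A)^i|$.

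For a $B$-piece this does not work. Projecting $\HHRT(B)$ only gives \emph{lower} bounds on the areas of its pieces. No competitor on $\Sigma_A$ can be played against $\HHRT(B)$, since maximin only guarantees that the minimal homologous area on another slice is $\le|\HHRT(B)|$. Lemma~\ref{lemma:HRT_causal_better} does not rescue this:
\begin{itemize}
\item It only excludes a smaller competitor that is \emph{spacelike} to $\HHRT(B)$. Your competitor $\gamma\cup\HHRT(A)^o\cup\HHRT(B)^i_{\mathrm{proj}}$ is causally related to $\HHRT(B)$ by construction, so the lemma says nothing about it.
\item The lemma is stated for minimal surfaces arising in the maximin procedure, not for glued competitors.
\end{itemize}
The final step fails as well. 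The pieces $\gamma$ and $\HHRT(A)^o$ do not lie on $\HH^\pm[B]$, so they cannot be projected back along those horizons. Moving a $B$-piece back toward $\HHRT(B)$ does not decrease its area, because focusing runs the other way. If you instead push $\HHRT(A)^o$ onto $\Sigma_B$ along $A$'s horizons, its endpoint $p_A^{(B)}$ does not land on $q_B^{(B\to A)}$, since the generator joining them lies on $\HH^-[B]$. The glued surface then does not close up. This mismatch is exactly the short-cut obstruction, and your argument does not get around it.

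The missing idea is the paper's completion trick, which needs no further geometry:
\begin{enumerate}
\item Complete the partial side to its full HRT surface.
\item Bound that surface by two \emph{complete} HRT surfaces using subadditivity or Araki--Lieb, which hold for HRT.
\item Apply the already proved inequality of Theorem~\ref{thm:2horizon_projection_4inequality} for the adjacent cell.
\end{enumerate}
For the example above,
\[
|\HHRT(B)^o|+|\HHRT(B)^i|\le|\gamma|+|\gamma_1|\le|\gamma|+|\HHRT(A)^o|+|\HHRT(B)^i|,
\]
by Araki--Lieb and \eqref{eq:two-horizon-ineq3}, and then you cancel $|\HHRT(B)^i|$. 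Each of the eight inequalities arises this way from SA or AL together with one of \eqref{eq:two-horizon-ineq1}--\eqref{eq:two-horizon-ineq4}. The proof is therefore uniform over the four weight functions and never distinguishes $A$-pieces from $B$-pieces. Your remark about adding inequalities from adjacent cells points in this direction. In execution, though, you pair the wrong cells and then replace the adjacent-cell inequality with a direct projection, and that is what breaks.

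A smaller point concerns unions of cells. Summing triangle inequalities does not suffice when both pieces of one HRT surface merge into a single side. For instance, the union of the cells $\{\gamma,\HHRT(A)^o,\HHRT(B)^o\}$ and $\{\gamma_1,\HHRT(A)^o,\HHRT(B)^i\}$ gives $|\HHRT(B)|\le|\gamma|+|\gamma_1|$. That is Araki--Lieb, not a sum of triangle inequalities, so only your complete-surface comparison covers this case.
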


\begin{proof}
    We will only prove for case $(1)$ of Definition \ref{defn:4case_cut_HRT} and the proof for the other three cases is similar.

    Consider first the triangle with sides $\HHRT(A\cup B), \HHRT(A)^o, \HHRT(B)^o$, labeled as $4$ in Figure \ref{fig:two-horizon_geometry}. We have two triangle inequalities remain to prove for this cell. We do this by completing each partial HRT surface to its full HRT surface. We have
    \begin{align*}
        |\HHRT(A)^o| + |\HHRT(A)^i| \leq |\HHRT(B\setminus A)| + |\HHRT(A \cup B)| \\
        \leq |\HHRT(B)^o| + |\HHRT(A)^i| + |\HHRT(A \cup B)|,
    \end{align*}
    where the first inequality uses the fact that $\HHRT(B\setminus A) \cup \HHRT(A \cup B)$ is homologous to $\HHRT(A)$ and the second inequality uses the following proved triangle inequality in Theorem \ref{thm:2horizon_projection_4inequality}
    $$|\HHRT( B\setminus A)|\leq |\HHRT(A)^i| + |\HHRT(B)^o|.$$
    Subtracting $|\HHRT(A)^i|$ from both side, we get
    \begin{equation}\label{eq:Ao_small_4}
        |\HHRT(A)^o| \leq |\HHRT(A\cup B)| +|\HHRT(B)^o|.
    \end{equation}
    
All remaining triangle inequalities can be established using this completion trick from the four triangle inequalities \eqref{eq:two-horizon-ineq1}-\eqref{eq:two-horizon-ineq4} in Theorem \ref{thm:2horizon_projection_4inequality}.
\end{proof}
\begin{remark}
        This completion trick in the above proof would work as long as the complementary piece $\sigma_1$, e.g. $\HHRT(A) \setminus \HHRT(A)^o=\HHRT(A)^i$ for \eqref{eq:Ao_small_4} is a side of an adjacent triangular cell $v$ for which the following inequality is established 
        $$\sigma_3\leq \sigma_1 + \sigma_2,$$
        where $\sigma_2$ is the side shared by the two adjacent cells ($\HHRT(B)^o$ for \eqref{eq:Ao_small_4}) and $\sigma_3$ is the remaining side of triangle $v$ ($\HHRT(B\setminus A)$ for \eqref{eq:Ao_small_4}). 
\end{remark}

\medskip

We now extend the four canonical weight functions in Definition \ref{defn:4case_cut_HRT} to a family of admissible weight functions.

\begin{theorem}\label{thm:section_seam_cutHRT}
Take any section $s$ of one of the four intersection seams (in $2+1$ dimension, $s$ is just a point). Trace the section $s$ along null generators of $\pew(A)\cup \pew(A^c)$ to a section $p$ of HRT$(A)$. Meanwhile, trace the section $s$ along null generators of $\pew(B) \cup \pew(B^c)$ to a section $q$ of HRT$(B)$. Then define the weight function using sections $p$ and $q$, i.e. partitioning HRT$(A)$ and HRT$(B)$ at sections $p$ and $q$, respectively, into outer and inner parts and assigning respective areas of each portion as the weight function $w_s$. 
Varying section $s$, one obtains a family of weight functions $\{w_s\}$. 

Then each of such weight function yields all twelve triangle inequalities and hence satisfies the global compatibility condition for the two-horizon configuration.
\end{theorem}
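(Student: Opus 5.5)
The plan is to reduce Theorem \ref{thm:section_seam_cutHRT} to the situation of Theorem \ref{thm:2horizon_projection_4inequality} and Corollary \ref{cor:triangel_trick}. Instead of projecting an entire HRT surface onto a maximin slice of the other, we project both surfaces onto a common spacelike hypersurface passing through $s$.

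\textbf{Step 1: build a comparison slice.} Fix $s$ on one of the four seams, say $s\subset\cS(A,B)$; the other three cases differ only by swapping $A\leftrightarrow A^c$ and/or $B\leftrightarrow B^c$. By construction $s$ lies to the future of $\HHRT(A)$ and to the past of $\HHRT(B)$. The null generators of $\pew(A)\cup\pew(A^c)$ through $s$ meet $\HHRT(A)$ at $p$. Those of $\pew(B)\cup\pew(B^c)$ through $s$ meet $\HHRT(B)$ at $q$.

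We then flow $\HHRT(A)$ to the future along $\HH^+[A]\cup\HH^+[A^c]$ up to the seams. We flow $\HHRT(B)$ to the past along $\HH^-[B]\cup\HH^-[B^c]$ up to the seams. The flow stops on the portion of the null quadrilaterals that contains $s$. Call the resulting surfaces $\tilde\gamma_A$ and $\tilde\gamma_B$.

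Since $\HHRT(A)$ and $\HHRT(B)$ are extremal, the NEC and Raychaudhuri focusing (Appendix \ref{app:focusing}) give
\[
|\tilde\gamma_A^{o}|\le|\HHRT(A)^o|,\quad |\tilde\gamma_A^{i}|\le|\HHRT(A)^i|,
\]
and the same bounds for $B$. The split into outer and inner parts is carried along the generators, so $\tilde\gamma_A^{o},\tilde\gamma_A^{i}$ are separated at the image of $p$, which is $s$. Likewise the images of $\tilde\gamma_B^{o},\tilde\gamma_B^{i}$ are separated at $s$. Hence all four projected pieces meet along the common section $s$, as in the RT picture.

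\textbf{Step 2: place everything on one Cauchy slice and compare with RT surfaces.} The pieces $\tilde\gamma_A,\tilde\gamma_B$, glued along $s$ and along the seams, should be mutually achronal, since they lie on the achronal horizon boundaries between the two HRT surfaces. By Remark \ref{remark:causal_boundary_intersection} they can be placed on a single Cauchy slice $\Sigma_s$ that is anchored on the boundary at the appropriate $\partial\hat D$ intersections. On $\Sigma_s$ the four glued surfaces
\[
\tilde\gamma_A^o\cup\tilde\gamma_B^o,\quad \tilde\gamma_A^i\cup\tilde\gamma_B^i,\quad \tilde\gamma_A^o\cup\tilde\gamma_B^i,\quad \tilde\gamma_A^i\cup\tilde\gamma_B^o
\]
are homologous to $A\cup B$, $A\cap B$, $A\setminus B$ and $B\setminus A$, respectively.

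Suppose one of them, say the first, had area smaller than $|\HHRT(A\cup B)|$. The minimal surface in that homology class on $\Sigma_s$ would then be a smaller-area competitor. If it is spacelike separated from $\HHRT(A\cup B)$, Lemma \ref{lemma:HRT_causal_better} gives a contradiction directly.

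If it is not spacelike separated, we use entanglement wedge nesting. Both $\ew(A)$ and $\ew(B)$ are nested with $\ew(A\cup B)$, and $\Sigma_s$ lies between $\HHRT(A)$ and $\HHRT(B)$. This should allow a maximin slice of $\HHRT(A\cup B)$ to be chosen so that $\Sigma_s$ is a graph over it. Then the global-time projection of Section \ref{sec:proj_cauchy} maps the competitor onto that maximin slice without increasing area. This contradicts the minimality of $\HHRT(A\cup B)$ there. The same argument applies to $A\cap B$, $A\setminus B$ and $B\setminus A$, which establishes the analogues of \eqref{eq:two-horizon-ineq1}--\eqref{eq:two-horizon-ineq4} for $w_s$.

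\textbf{Step 3: the remaining triangle inequalities.} The other eight triangle inequalities follow exactly as in Corollary \ref{cor:triangel_trick}. Complete each partial side to its full HRT surface, use the homology relations such as $\HHRT(B\setminus A)\cup\HHRT(A\cup B)\sim\HHRT(A)$, and apply the four inequalities just proved. Theorem \ref{thm:polygon_inequality} then gives the global compatibility condition.

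\textbf{Main obstacle.} The hardest part is Step 2: guaranteeing a single Cauchy slice through all projected pieces, and relating it to a maximin slice of each composite region. If this proves too hard, I would interpolate instead. The canonical cases (1)–(4) correspond to $s$ at the corners of the null quadrilaterals, where $p$ or $q$ coincides with one of $p_A^{(B)},p_A^{(B^c)},p_B^{(A)},p_B^{(A^c)}$ (or the corresponding $q$'s). I would then show that as $s$ moves monotonically along a seam, each of the four inequalities holds because focusing controls the area transferred between the outer and inner pieces. This reduces the general case to the corners already handled by Theorem \ref{thm:2horizon_projection_4inequality}.
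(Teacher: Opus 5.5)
Your architecture is the paper's. You prove only \eqref{eq:two-horizon-ineq1}--\eqref{eq:two-horizon-ineq4} for $w_s$. You move both HRT surfaces along their own horizons so that $p$ and $q$ land on $s$, with area non-increasing by focusing. You glue the pieces at $s$ into four competitors, rule out a smaller competitor by projecting onto a maximin slice, and finish with the completion trick of Corollary \ref{cor:triangel_trick}.

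\textbf{The gap.} It is the step you flagged yourself: nothing in the proposal shows that your flowed pieces lie on one Cauchy slice. Remark \ref{remark:causal_boundary_intersection} gives achronality of each causal boundary separately and the simplicity of the seams. It does not give a slice through pieces lying on different horizons together with the unmoved parts of $\HHRT(A)$ and $\HHRT(B)$. Flowing ``up to the seams'' fixes the target locus in advance, so you would owe a separate achronality argument for that union. Note also that if you flow $\HHRT(A)$ along $\HH^+[A]$ and $\HHRT(B)$ along $\HH^-[B]$ all the way to the seams, $\tilde\gamma_A$ and $\tilde\gamma_B$ overlap along the whole of $\cS(A,B)$, not just at $s$. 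The images of $\HHRT(A)^o$ and $\HHRT(B)^i$ both contain the seam segment from $s$ to $p_B^{(A)}$, so your $A\setminus B$ and $B\setminus A$ competitors backtrack. This is harmless for the area bound, but it is not the RT picture you describe.

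\textbf{The missing idea: reverse the order.} Since $s$ lies on $\pew(A)\cup\pew(A^c)$ and on $\pew(B)\cup\pew(B^c)$, entanglement wedge nesting makes $s$ spacelike separated from $\HHRT(A\setminus B)$, $\HHRT(A\cap B)$, $\HHRT(B\setminus A)$ and $\HHRT(A\cup B)$. So the paper first chooses a Cauchy slice $\Sigma_s$ containing $s$ and these four surfaces. Only then does it project $\HHRT(A)$ and $\HHRT(B)$ along their horizons onto $\Sigma_s$. The generators from $p$ and $q$ pass through $s\in\Sigma_s$, so the images meet at $s$ automatically, and all four competitors lie on one slice by construction. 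No achronality argument is needed.

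\textbf{Your second hedge also disappears.} The paper simply projects the glued competitor along a global time function onto the maximin slice $\Sigma_A$, on which EWN makes all four composite HRT surfaces least-area, and treats the trivial-projection case separately. Section \ref{sec:proj_cauchy} already treats any two slices with common boundary anchoring as graphs over each other. So neither your case split through Lemma \ref{lemma:HRT_causal_better} nor a special choice of maximin slice is required.

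\textbf{Your fallback would not work.} Moving $s$ moves $p$ and $q$ at the same time, so $w_s$ is not a convex combination of corner weights in the sense of Corollary \ref{cor:section_HRT_cutHRT}, which keeps one section fixed. Focusing only bounds each transferred HRT area from below by the swept seam area. That does not fix the sign of the net change in $|\HHRT(A)^o|+|\HHRT(B)^o|$, or in the inner sum.
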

\begin{proof}
We show that the four inequalities \eqref{eq:two-horizon-ineq1}-\eqref{eq:two-horizon-ineq4} of Theorem \ref{thm:2horizon_projection_4inequality} hold for a weight function $w_s$ obtained from a section $s$ of an intersection seam. Then Corollary \ref{cor:triangel_trick} establishes the remaining triangle inequalities. 

Note that the four boundary regions $ A\setminus B, B\setminus A, A\cap B$ and $A\cup B$ are nested or spacelike separated from $A$ or $B$.
Then by entanglement wedge nesting \cite{wall2014maximin}, we can choose a maximin slice $\Sigma_A$ of $\HHRT(A)$ and a maximin slice $\Sigma_B$ of $\HHRT(B)$ such that the four HRT surfaces -- $\HHRT(A\setminus B), \HHRT(A \cap B), \HHRT(B\setminus A)$ and $\HHRT(A\cup B)$ -- are least-area surfaces among their respective homology classes in both $\Sigma_A$ and $\Sigma_B$. We will use $\Sigma_A$ in the following.

Entanglement wedge nesting also implies that the defining section $s$ is spacelike separated from the four HRT surfaces -- $\HHRT(A\setminus B), \HHRT(A \cap B), \HHRT(B\setminus A)$ and $\HHRT(A\cup B)$, noting that section $s$ lies on $\pew(A)\cup \pew(A^c)$ and $\pew(B)\cup \pew(B^c)$. Therefore, there exists a Cauchy slice $\Sigma_s$ that contains both section $s$ and the four HRT surfaces.

Project both $\HHRT(A)$ and $\HHRT(B)$ along their respective horizons toward $\Sigma_s$. The area cannot increase during this projection by the focusing argument. Denote the projected image of $\HHRT(A)$ and $\HHRT(B)$ by $\gamma_A$ and $\gamma_B$, respectively. 
By construction, both projected surfaces pass through the section $s$, and hence $\gamma_A$ and $\gamma_B$ intersect at $s$.

Composing one of two pieces of $\gamma_A$ with one of two pieces of $\gamma_B$ at section $s$, we get four surfaces, each homologous to one of the four HRT surfaces $\HHRT(A\setminus B), \HHRT(A \cap B), \HHRT(B\setminus A)$ and $\HHRT(A\cup B)$. We will refer to these four glued surfaces as competitors. We then argue by contradiction that each competitor has area no smaller than the corresponding HRT surface.

Suppose by contradiction that a competitor has strictly less area than the homologous HRT surface. Then we project the competitor along any global time function onto the maximin slice $\Sigma_A$ as in Section \ref{sec:proj_cauchy}. This projection along time function further reduces area unless the projection is trivial.
\begin{itemize}
    \item If the projection is nontrivial, the projected surface has less area than the competitor and hence has strictly less area than the homologous HRT surface. 
    \item If the projection is trivial, then the competitor already lies on the maximin slice $\Sigma_A$, and hence provides a competing surface in the same homology class with strictly smaller area than the HRT surface. 
\end{itemize}
In either case, we arrives at a contradiction with the fact that an HRT surface is least-area among its homologous class on its maximin slice. 

The cut weight, defined as the sum of the areas of the relevant pieces of $\HHRT(A)$ and $\HHRT(B)$, is therefore greater than or equal to the areas of the four HRT surfaces $\HHRT(A\setminus B)$, $\HHRT(A \cap B)$, $\HHRT(B\setminus A)$, and $\HHRT(A\cup B)$.
\end{proof}

\begin{remark}
    Then Definition \ref{defn:4case_cut_HRT} corresponds to the special case of $s=p_A^{(B)}$, $s=p_A^{(B^c)}$, $s=p_B^{(A)}$ and $s=p_B^{(A^c)}$, respectively.
\end{remark}

\begin{definition}[Projection locus]
    Below we will refer to section $s$ in Theorem \ref{thm:section_seam_cutHRT} as projection locus.
\end{definition}

Theorem \ref{thm:section_seam_cutHRT} can be reformulated in the following way to highlight the freedom to define a weight function for the two-horizon configuration. This freedom will aid us in generalizing the construction to multiple horizon configurations.

We restrict to the subset of a HRT surface consisting of points whose null generators remain on the corresponding entanglement horizons until they reach the relevant intersection seams. The complement of this subset arises from caustics or generator endpoints and is expected to have measure zero. 

\begin{corollary}
Suppose a section $\alpha$ of $\HHRT(A)$ lies in the open set of points whose null generators along $\pew(A)\cup \pew(A^c)$ remain on the corresponding entanglement horizons until they reach the bounded intersection seams. Denote the resulting seam sections by
$$s_A^\alpha\subset \cS(A,B), \qquad s_{A^c}^\alpha \subset \cS(A^c,B).$$

Trace sections $s_A^\alpha$ and $s_{A^c}^\alpha$ along null generators of $\pew(B)\cup \pew(B^c)$ to sections of $\HHRT(B)$, denoted by $\beta_A$ and $\beta_{A^c}$, respectively.

Then, the two weight functions defined by partitioning $\HHRT(A)$ at section $\alpha$ while partitioning $\HHRT(B)$ at section $\beta_A$ and $\beta_{A^c}$, respectively, both satisfy the global compatibility condition. Denote the two weight functions by $w_{\alpha,{\beta_A}}$ and $w_{\alpha,{\beta_{A^c}}}$, respectively.

Moreover, any convex combination of the two weight functions $w_{\alpha,{\beta_A}}$ and $w_{\alpha,{\beta_{A^c}}}$ still satisfy the global compatibility condition. Geometrically, we partition $\HHRT(A)$ at section $\alpha$ while partition $\HHRT(B)$ at any intermediate sections between $\beta_A$ and $\beta_{A^c}$.

    The same conclusion holds by reversing the roles of the two HRT surfaces.
\end{corollary}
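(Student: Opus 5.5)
The first two claims reduce directly to Theorem \ref{thm:section_seam_cutHRT}. By hypothesis the null generators of $\pew(A)\cup\pew(A^c)$ leaving $\alpha$ stay on the horizons until they hit the bounded seams, so they define seam sections $s_A^\alpha\subset\cS(A,B)$ and $s_{A^c}^\alpha\subset\cS(A^c,B)$. Each of these is a section of an intersection seam, so it is an admissible projection locus.

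Tracing $s_A^\alpha$ back along $\pew(A)$ returns exactly $\alpha$, by construction. Tracing it back along $\pew(B)\cup\pew(B^c)$ gives $\beta_A$. Hence $w_{\alpha,\beta_A}$ is precisely the weight function $w_s$ of Theorem \ref{thm:section_seam_cutHRT} with $s=s_A^\alpha$. That theorem, together with Corollary \ref{cor:triangel_trick}, gives all twelve triangle inequalities. Theorem \ref{thm:polygon_inequality} then gives the global compatibility condition. The same argument with $s=s_{A^c}^\alpha$ handles $w_{\alpha,\beta_{A^c}}$.

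One point needs care here: a generator from $\alpha$ might leave $\pew(A)$ through a caustic before reaching the seam. The open-set hypothesis on $\alpha$ rules this out, and this is exactly what makes the traced section well defined and piecewise smooth.

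For convex combinations I would argue linearly rather than geometrically. Consider the twelve triangle inequalities as functions of the weights. In each, the left side is the weight of a single side and the right side is a sum of the weights of the other sides. With $\HHRT(A)$ cut at the fixed section $\alpha$, the $A$-pieces and the four complete HRT surfaces $\HHRT(A\cup B)$, $\HHRT(A\cap B)$, $\HHRT(A\setminus B)$, $\HHRT(B\setminus A)$ have the same weight in $w_{\alpha,\beta_A}$ and in $w_{\alpha,\beta_{A^c}}$. Only the split $|\HHRT(B)^o|+|\HHRT(B)^i|=|\HHRT(B)|$ changes.

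Every inequality is therefore affine in the weights. A convex combination $\lambda w_{\alpha,\beta_A}+(1-\lambda)w_{\alpha,\beta_{A^c}}$ satisfies each inequality because both endpoints do. It also preserves the Weight Assumption \ref{assump:weight_complete_HRT}, since the total weight of each connected HRT component stays equal to its area.

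For the geometric reformulation, let $\beta$ be a section of $\HHRT(B)$ lying between $\beta_A$ and $\beta_{A^c}$. The area of the outer part varies continuously and monotonically as $\beta$ sweeps from $\beta_A$ to $\beta_{A^c}$ through a foliation of the intermediate strip. By the intermediate value theorem, every convex combination is realized by some intermediate section. Conversely, every intermediate section gives a weight lying in the convex hull of the two endpoint weights, because its $\HHRT(B)^o$ area lies between the two endpoint values. So partitioning $\HHRT(B)$ at any intermediate section is admissible.

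The reversed-role statement follows by swapping $A\leftrightarrow B$, with past and future horizons interchanged. In that case one projects $\HHRT(B)$ down along $\HH^-[B]$ and $\HH^-[B^c]$, which does not affect the argument.

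I expect the main obstacle to be the monotonicity step. It requires that $\beta_A$ and $\beta_{A^c}$ bound a strip of $\HHRT(B)$ that the intermediate sections foliate. This should follow from both sections lying in $\gamma_B^{(A)}$, by the causality argument of Section \ref{sec:causality_position}, and from the seams being simple curves by Remark \ref{remark:causal_boundary_intersection}. If that fails, I would state the convex-combination claim purely at the level of weights, which needs only the affine argument.
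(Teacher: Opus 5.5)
Your proposal is correct and follows the paper's proof essentially step for step. You identify the two endpoint weight functions with $w_s$ of Theorem~\ref{thm:section_seam_cutHRT} for $s=s_A^\alpha$ and $s=s_{A^c}^\alpha$, handle convex combinations by linearity of the polygon (here triangle) inequalities while preserving the weight assumption, and treat an intermediate section as admissible because its outer piece of $\HHRT(B)$ has area between the two endpoint values. The foliation and intermediate-value step you flag as the main obstacle is not needed. Only the direction intermediate section $\Rightarrow$ convex combination is claimed, and that requires only that the intermediate outer piece be nested between the two endpoint outer pieces, not a foliation of the strip.
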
\label{cor:section_HRT_cutHRT}
\begin{proof}
That both weight functions $w_{\alpha,\beta_A}$ and $w_{\alpha,\beta_{A^c}}$ satisfy the global compatibility condition follows from Theorem \ref{thm:section_seam_cutHRT} by taking $s=s_A^\alpha$ and $s=s_{A^c}^\alpha$, resepctively.

By Theorem \ref{thm:polygon_inequality}, the global compatibility conditions is equivalent to the set of all polygon inequalities. Since polygon inequalities are linear, any nonnegative linear combination of $w_{\alpha,\beta_A}$ and $w_{\alpha,\beta_{A^c}}$ also satisfy all polygon inequalities. However, to preserve Assumption \ref{assump:weight_complete_HRT}, namely that the total weight of each complete HRT component equals its area, we restrict to convex combinations, e.g.
$$w_\lambda=\lambda\, w_{\alpha,\beta_A} + (1-\lambda) \, w_{\alpha,\beta_{A^c}}, \quad \lambda \in [0,1].$$
Hence every $w_\lambda$ satisfies the global compatibility condition.

Any intermediate section between $\beta_A$ and $\beta_{A^c}$ determines a weight function whose values on the two resulting pieces of $\HHRT(B)$ lie between those of $w_{\alpha,\beta_A}$ and $w_{\alpha,\beta_{A^c}}$. Hence, at the level of weights, it can be written as a convex combination of the two endpoint weight functions. Therefore such intermediate geometric partitions also satisfy the global compatibility condition.
\end{proof}

\begin{remark}\label{remark:cut_timelike_points}
    Note that section $\alpha$ of HRT$(A)$ is timelike to section $\beta_A$ or $\beta_{A^c}$ of $\HHRT(B)$. One would naturally expect that the existence of a graph model for HRT entropy cone implies forgetting the time direction and reducing the dynamic HRT case to the static RT case. Therefore, Corollary \ref{cor:section_HRT_cutHRT} brings us closer to the expectation that we should partition the two HRT surfaces at causally connected sections.
\end{remark}

\subsection{General projection framework}\label{sec:general_projection}

We now extend the two-horizon projection construction to configurations involving multiple HRT surfaces.

Recall that our goal is to construct, for each potential short-cut from a graph, a geometric competitor built from pieces of HRT surfaces, and to compare its area with that of the corresponding HRT surface. The first step is to thus assemble HRT pieces into a continuous achronal geometric surface.

By Lemma \ref{cor:seam_pair}, no two distinct pairs of intersecting horizons share an open subset of intersection seams. Therefore, for any pair of HRT surfaces that share an intersection seam, we can locally apply the two-horizon projection construction of Theorem \ref{thm:2horizon_projection_4inequality} or Theorem \ref{thm:section_seam_cutHRT}. This allows us to project HRT pieces toward each other in pairs and produce a continuous geometric competitor associated to a given graph cut.

From Section \ref{sec:two-horizon}, the resulting surface is achronal in a neighborhood of each intersection seam, since it is obtained from locally projecting a pair of HRT surface toward a Cauchy slice, e.g. $\Sigma_s$. However, in the multi-horizon setting, this surface need not be globally achronal, especially when several HRT pieces are clustered together.

To resolve this concern, we examine more carefully the causal structure around a pair of HRT surfaces. For a pair $\HHRT(A)$ and $\HHRT(B)$, with $A,B=A_I$ for some $I\subset\nset$, we defined $\gamma_A^{(B)}\subset \HHRT(A)$ to be the portion lying between $\ew(B)$ and $\ew(B^c)$, and similarly $\gamma_B^{(A)}\subset \HHRT(B)$ to be the portion lying between $\ew(A)$ and $\ew(A^c)$. Then every point in $\gamma_A^{(B)}$ can be connected by a causal curve to a point in $\gamma_B^{(A)}$, and vice versa.

The four bounded intersection seams
\[
\cS(A,B),\quad \cS(A^c,B),\quad \cS(A,B^c),\quad \cS(A^c,B^c)
\]
bound a pair of null quadrilaterals, one on $\pew(A)\cup \pew(A^c)$ and one on $\pew(B)\cup \pew(B^c)$, which we denote by $\mathcal Q_A^B$ and $\mathcal Q_B^A$, respectively.

\begin{definition}[Interaction region]
For a arbitrary pair of HRT surfaces, e.g. $\HHRT(A)$ and $\HHRT(B)$. 
For a pair of HRT surfaces, we define the interaction region $\mathcal{C}_{A,B}$ to be the spacetime region bounded by the pair of null quadrilaterals, i.e.,
    $$\mathcal{C}_{A,B}=M\setminus \big(\ew(A)\cup \ew(A^c)\big) \, \bigcap \, M\setminus \big(\ew(B)\cup \ew(B^c)\big)=J^+[\HHRT(A)]\cap J^-[\HHRT(B)].$$
Note that \(\mathcal{C}_{A,B}\) contains all points that are causally connected to both $\HHRT(A)$ and $\HHRT(B)$.
\end{definition}

\begin{figure}
    \centering
    \includegraphics[width=\linewidth]{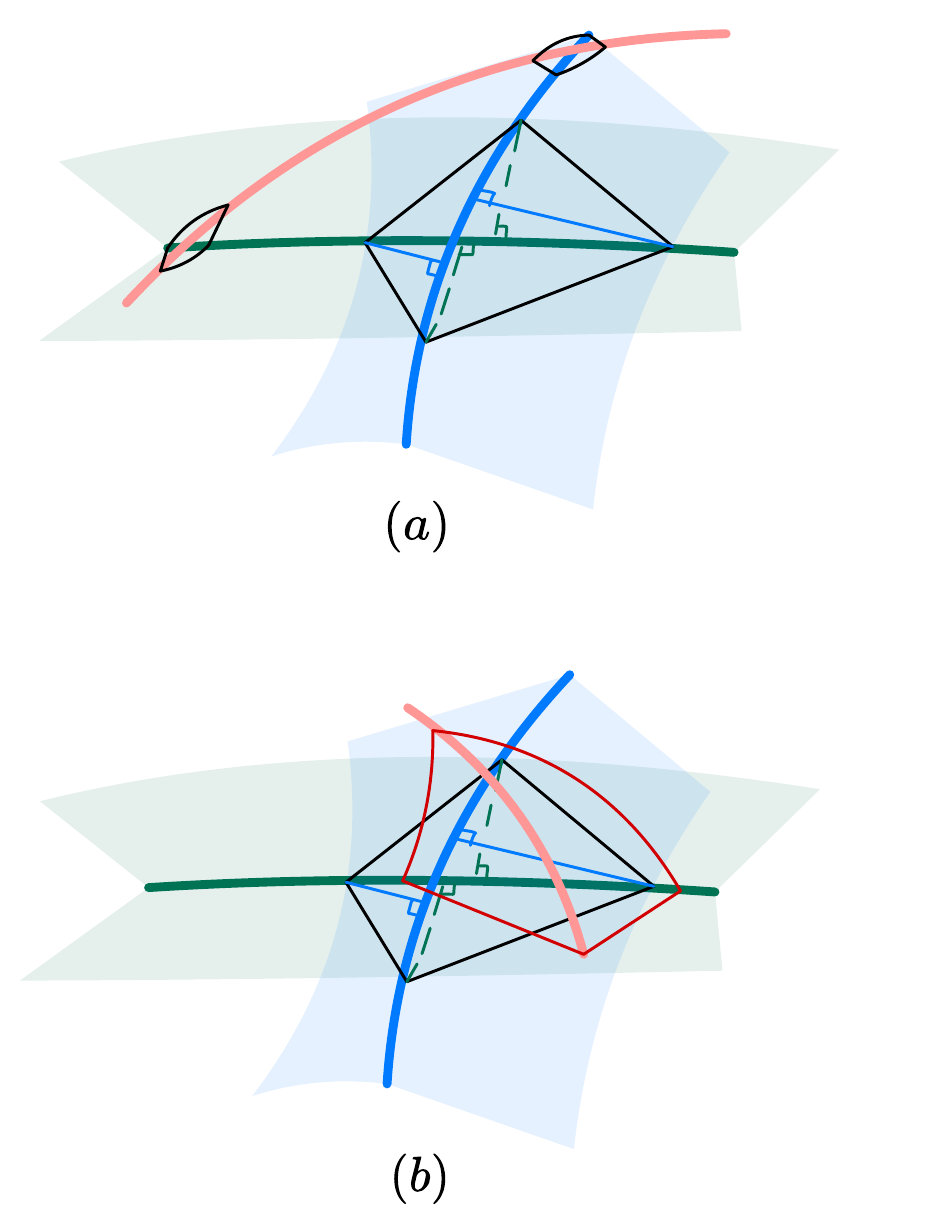}
    \caption{Illustration of disjoint vs overlapping null quadrilaterals. In Panel (a), three pairs of null quadrilaterals associated with three pairs of HRT surfaces are disjoint from each other. In Panel (b), the pair of null quadrilaterals associated with the pair of green and blue HRT surfaces largely overlap with the pair of null quadrilaterals associated with green and pink HRT surfaces. As a result, the triangular cell made of three HRT surfaces cannot not yield a achronal geometric surface.}
    \label{fig:favorable_unfavorable}
\end{figure}

 See Figure \ref{fig:favorable_unfavorable} for an illustration of disjoint and overlapping interaction regions associated with different pairs of HRT surfaces. We have the following key result that allows a reduction to pairwise consideration under a suitable assumption.

\begin{definition}[Exposed region]
    Let $\HHRT(B_i), i\in\{1,2,...,m\}$, with $m\geq 3$, be a collection of HRT surfaces. For a pair of HRT surface, define the \emph{exposed region} as
    \[E_{B_i,B_j}:=\mathcal{C}_{B_i,B_j} \setminus (\bigcup_{(k,l)\neq (i,j)} \mathcal{C}_{B_k,B_l}),\]
    where $(k,l)\neq(i,j)$ means one of $(k,l)$ is distinct from $i$ and $j$. This is the maximal region where a pairwise interaction is not masked by others.
\end{definition}
 
\begin{lemma}[Achronality from exposed loci] \label{lemma:achronality_disjointness}
    Suppose that $E_{B_i,B_j}\neq \emptyset$ for all pairs $(i,j), i,j\in \{1,\cdots, m\}$. 

    Choose projection loci $s_{ij}$ for each pair of HRT surfaces $(\HHRT(B_i),\HHRT(B_j))$ to lie inside the exposed region, i.e.
    \[s_{ij}\subset \partial \mathcal{Q}_{B_i}^{B_j} \cap E_{B_i,B_j}.\]
    Then all projection loci for this collection of HRT surfaces are mutually achronal.
\end{lemma}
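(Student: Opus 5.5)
I will argue by contradiction. Suppose two projection loci $s_{ij}\subset E_{B_i,B_j}$ and $s_{kl}\subset E_{B_k,B_l}$, with $(k,l)\neq(i,j)$, are not achronal with respect to each other, so that there is a point $x\in s_{ij}$ and a point $y\in s_{kl}$ with $y\in I^+(x)$ (the case $x\in I^+(y)$ is symmetric). The key structural input is the description $\mathcal C_{B_i,B_j}=J^+[\HHRT(B_i)]\cap J^-[\HHRT(B_j)]$. With it, causal relations between points of different interaction regions can be converted into statements about which HRT surfaces those points are causally connected to. The goal is to show that a timelike relation between the two loci forces one of them into the interaction region of the other pair, or into that of some third pair. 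That would contradict the defining property of an exposed region, namely that it is not covered by any $\mathcal C_{B_k,B_l}$ with $(k,l)\neq(i,j)$.

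The steps, in order, are as follows.

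First, since $x\in\mathcal C_{B_i,B_j}$, the point $x$ lies to the future of $\HHRT(B_i)$, and since $y\in I^+(x)$, so does $y$: we have $y\in J^+[\HHRT(B_i)]$. Also $y\in J^-[\HHRT(B_l)]$. The degenerate case $l=i$ is handled using that each HRT surface is achronal with its own wedge boundaries. Hence $y\in J^+[\HHRT(B_i)]\cap J^-[\HHRT(B_l)]=\mathcal C_{B_i,B_l}$, up to the ordering convention that places the earlier surface first.

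Second, if $(i,l)\neq(k,l)$, i.e. $i\neq k$, this says $y$ lies in an interaction region of a pair other than $(k,l)$. That contradicts $y\in E_{B_k,B_l}$. Symmetrically, $x\in J^-[\HHRT(B_l)]\cap J^+[\HHRT(B_i)]$ can be used when $j\neq l$, placing $x$ in $\mathcal C_{B_i,B_l}$ and contradicting $x\in E_{B_i,B_j}$.

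Third, the remaining case is when the pairs share an index in the way that makes the new pair coincide with the old one. Here I would use the remaining information: the loci lie on the bounding null quadrilaterals $\partial\mathcal Q$. I would then invoke Remark \ref{remark:causal_boundary_intersection} and Corollary \ref{cor:seam_pair}. The points $x$ and $y$ lie on achronal causal boundaries $\pew(B_\ast)\cup\pew(B_\ast^c)$ of a common HRT surface. A timelike curve between two points of the same achronal horizon is impossible, which gives the contradiction. For distinct horizons, I would push the timelike relation onto a third surface and reduce to the first case.

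I expect the main obstacle to be the case analysis on index coincidences and time orientation. The definition of $\mathcal C_{A,B}$ silently fixes which surface is to the past, and the orientation may differ between pairs. The ordering between $B_i$ and $B_l$ may be the opposite one, or the two surfaces may be spacelike separated, in which case $J^+[\HHRT(B_i)]\cap J^-[\HHRT(B_l)]$ could be empty. To handle this, I would first try to show that $y\in I^+(x)$ itself forces the needed ordering, by exhibiting a causal curve from $\HHRT(B_i)$ through $x$ and $y$ to $\HHRT(B_l)$. Any such curve certifies that $\HHRT(B_l)$ is not to the past of $\HHRT(B_i)$ along it, so the region is nonempty and contains $y$.
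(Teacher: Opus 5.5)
Your first two steps are the paper's proof. Transitivity of the causal relation places the endpoints of the connecting curve in the interaction region $\mathcal{C}_{B_i,B_l}$ of a mixed pair, and whichever endpoint belongs to a pair different from $(i,l)$ contradicts exposedness. The paper obtains membership by extending the curve along horizon generators from the null quadrilaterals; you read it off directly from $E_{B_i,B_j}\subset\mathcal{C}_{B_i,B_j}=J^+[\HHRT(B_i)]\cap J^-[\HHRT(B_j)]$, which is equivalent and also makes your orientation worry moot.

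Your third step is vacuous and should be dropped. With both $x,y\in\mathcal{C}_{B_i,B_l}$, the argument fails to give a contradiction only if $i=k$ and $j=l$, i.e.\ the two pairs coincide, which the hypothesis excludes. Its claim that the whole of $\partial\mathcal{E}(B_\ast)\cup\partial\mathcal{E}(B_\ast^c)$ is achronal is also false: a point on a future sheet and a point on a past sheet can be timelike separated.
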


\begin{proof}
We argue by contradiction. 

Without loss of generality, suppose $s_{12}$ for the pair $(\HHRT(B_1), \HHRT(B_2))$ is causally connected to another pair $(\HHRT(B_i),\HHRT(B_j))$, with $(i,j)\neq (1,2)$. 
Without loss of generality, we assume that locally $\HHRT(B_1)$ lies to the past of $\HHRT(B_2)$ and locally $\HHRT(B_i)$ lies to the past of $\HHRT(B_j)$. 

Let $l$ be a causal curve connecting a point $p\in s_{12}$ to a point $q\in s_{ij}$. We have two cases:
\begin{itemize}
    \item If $q\in J^+[p]$, extending $l$ on both ends by null generators yields a causal curve connecting a point in $\HHRT(B_1)$ to a point  in $\HHRT(B_j)$. That is, $l\subset \mathcal{C}_{B_1,B_j}$. Then 
    $$p\in s_{12}\cap \mathcal{C}_{B_1,B_j}\neq \emptyset,\quad q \in s_{ij}\cap \mathcal{C}_{B_1,B_j}\neq \emptyset.$$
    
    \item If $q\in J^-[p]$, extending $l$ on both ends by null generators yields a causal curve connecting a point in $\HHRT(B_2)$ to a point in $\HHRT(B_i)$. That is, $l\subset \mathcal{C}_{B_2,B_i}$. Then 
    $$p\in s_{12}\cap \mathcal{C}_{B_2,B_i}\neq \emptyset, \quad q\in s_{ij}\cap \mathcal{C}_{B_2,B_i}\neq \emptyset.$$ 
\end{itemize}
Both of the two cases contradicts the fact that $s_{12}\subset E_{B_1,B_2}$ and $s_{ij}\subset E_{B_i,B_j}$.



\end{proof}

If the condition of Lemma \ref{lemma:achronality_disjointness} is satisfied, i.e. exposed regions are nonempty, for all HRT surfaces involved in a connected union of cells $\mathcal R$, 
then the projection construction of Section \ref{sec:two-horizon} can be carried out to prove the set of polygon inequalities of $\mathcal{R}$. 
In particular, if all interaction regions for HRT surfaces in the n-boundary region problem are disjoint from each other, then the global compatibility condition directly follows from the pairwise discussion in Section \ref{sec:two-horizon}.

Below we first analyze the simple case that the condition of Lemma \ref{lemma:achronality_disjointness} can be satisfied and then discuss how to resolve the remaining case by defining timelike clusters.

\subsubsection{Achronal projection loci}\label{sec:achronal_loci}
In this subsection, we work under the assumption that all exposed regions are nonempty after possible reductions via timelike clustering. This will serve as the main condition under which our construction yields a valid graph model.

Let $\mathcal R$ be a connected union of cells \footnote{For our projection arguments, single cells are actually less trivial than nontrivial connected unions of cells.} with $m\geq 3$ HRT sides $\gamma_i \subset \HHRT(B_i)$, with $B_i=A_{I_i}, I_i\subseteq \nset$. Without loss of generality, consider the polygon inequality
\begin{equation}
    |\gamma_1| \leq \sum_{i=2}^m |\gamma_i|,
\end{equation}
or equivalently the inequality
\begin{equation}
    |\HHRT(B_1)| \leq |\HHRT(B_1)\setminus \gamma_1| + \sum_{i=2}^m |\gamma_i|.
\end{equation}

To establish polygon inequalities for $\mathcal{R}$, we only need to consider those pairs of HRT surfaces that are adjacent in $\mathcal{R}$. That is, we do not need to subtract all other interaction regions from a fixed $C_{B_i,B_j}$. However, the same HRT pair of surfaces may appear in other connected unions of cells. So we define the exposed region of each pair relative to the full collection of interaction regions in the $n$-boundary region problem. In this section, we assume all exposed regions are nonempty and we will propose a concept of timelike cluster to partially remove this assumption in Section \ref{sec:timelike_cluster}. 

We construct a geometric competitor $\sigma$ for $\HHRT(B_1)$ as follows.
We define a weight function by choosing projection loci $s$ inside exposed regions, as in Lemma \ref{lemma:achronality_disjointness}.
Then Lemma \ref{lemma:achronality_disjointness} implies that all projection loci invovled in $\mathcal{R}$ are achronal. Therefore, there exists a Cauchy slice $\Sigma$ that contains all such projection loci.
Then we project all HRT surfaces involved in $\mathcal R$ along their respective entanglement horizons toward $\Sigma$. By construction, the projected images intersect at projection loci. Our geometric competitor $\sigma$ is the image of $(\HHRT(B_1)\setminus \gamma_1) \,\bigcup\, \cup_{i=2}^m\gamma_i$.

The assembled surface $\sigma$ is continuous. Indeed, each segment of $\sigma$ is obtained by a continuous deformation of an HRT segment along null generators. By construction, these segments meet precisely at the chosen projection loci. The fact that each point on a horizon lies on precisely one null generator ensures that no conflicting identifications arise. Therefore the glued surface is a continuous codimension-2 surface.

Furthermore, $\sigma$ is homologous to the original HRT surface. This follows because each segment is obtained by deforming an HRT segment along causal curves that do not intersect the boundary, and the gluing preserves these deformations. Hence $\sigma$ is obtained from the original HRT surface by a homotopy that does not cross the boundary, and therefore lies in the same homology class.

In short, $\sigma$ a continuous, achronal codimension-2 surface that is homologous to $\HHRT(B_1)$. 
Then we argue by contradiction, as in Lemma \ref{lemma:HRT_causal_better}, to conclude that the area of $\sigma$ is no smaller than that of $\HHRT(B_1)$:

Suppose $\sigma$ has less area than $\HHRT(B_1)$. Then project $\sigma$ along any global time function $t$ toward a maximin slice $\Sigma_1$ of $\HHRT(B_1)$. If the projection is nontrivial, i.e. if $\sigma$ is not already in $\Sigma_1$, the projected surface $\tilde\sigma$ has less area than $\sigma$. Then $\tilde \sigma$ would have strictly less area than $\HHRT(B_1)$, which contradicts the definition of HRT surfaces. If the projection along time direction is trivial, i.e. $\sigma\subset \Sigma_1$, this again contradicts the definition of HRT surfaces.

\begin{remark}\label{remark:general_proj_achronal_surface}
The key ingredient in the above proof is that the projection loci can be chosen to be achronal such that a Cauchy slice contains all relevant ones.

Since the weight function is defined once and for all, the set of projection loci are also chosen once and for all.  Unless two projection loci belong to different components of the configuration, they can always appear in the same connected union of cells $\mathcal{R}$. For simplicity, we require simultaneous achronality of all projection loci.
\end{remark}

\subsubsection{Timelike cluster}\label{sec:timelike_cluster}
We now consider the remaining case that some exposed region $E_{B_i,B_j}$ is empty and hence it is not possible to ensure relevant projection loci are achronal to each other. This happens when some interaction region is contained in another interaction region or a union of other interaction regions. 

The key idea to resolve this is that nested or overlapping interaction regions induce an effective ordering among HRT surfaces, allowing them to be treated collectively as a single timelike cluster.

Intuitively, we would like to introduce a notion of timelike cluster as follows. 
\begin{definition}[Timelike cluster]
Suppose there exists a collection of HRT surfaces $\{\gamma_i\}$ for which there exists a congruence of timelike curves such that this congruence cut out a section on each of the HRT surface of this collection. In other words, each timelike curve of the congruence intersects every $\gamma_i$ exactly once. We refer to such a collection of HRT surfaces as a \emph{timelike cluster}.

For a timelike cluster, we partition all $\gamma_i$ simultaneously using the same parameter along the congruence. In particular, instead of cutting each $\gamma_i$ at its individual intersection with other entanglement wedge boundaries, we restrict to cuts induced by shared sections of the congruence.
\end{definition}

\begin{remark}
    However, Since our construction employs projection along null generators, we will not use this definition directly; instead we realize it via piecewise null congruences.
\end{remark}

For the sake of explanation, we define a chronological order among HRT surfaces. 
\begin{definition}[Chronological order]
    We write $\HHRT(A)\rightarrow \HHRT(B)$ if $\gamma_A^{(B)}$ and  $\gamma_B^{(A)}$ are nonemtpy \footnote{The two HRT segments must be nonemtpy simultaneously. If $\gamma_A^{(B)}\neq \emptyset$, then $p_A^{(B)}$ and $p_A^{(B^c)}$ are nonemtpy and correspondingly $q_B^{B\to A}$ and $q_B^{B^c\to A}$ are nonempty. Since  $q_B^{B\to A}$ and $q_B^{B^c\to A}$ are causally connected to $\HHRT(A)$, they lies in $\gamma_B^{(A)}$. Therefore, $\gamma_B^{(A)}\neq \emptyset$.} and $\gamma_A^{(B)}$ lies to the past of $\gamma_B^{(A)}$.
\end{definition}

Consider two different pairs of HRT surfaces $(\HHRT(B_1)\HHRT(B_2))$ and $\HHRT(B_i),\HHRT(B_j)$, with $(i,j)\neq (1,2)$ and the chronological order to be (always true by renaming)
\[\HHRT(B_1)\to \HHRT(B_2), \qquad \HHRT(B_i)\to \HHRT(B_j).\]

\paragraph{Nested interaction regions}
Suppose one of the interaction region is contained in the other. Without loss of generality, we take that
\[\mathcal{C}_{B_1,B_2} \subseteq \mathcal{C}_{B_i,B_j}.\]
It follows from the definition of interaction region, namely $\mathcal{C}_{B_i,B_j}=J^+[\HHRT(B_i)]\cap J^-[\HHRT(B_j)]$,
that
\begin{equation}\label{eq:gamma_contain_interaction}
    \gamma_{B_1}^{(B_2)}, \gamma_{B_2}^{(B_1)}\subset J^+[\HHRT(B_i)] \cap J^-[\HHRT(B_j)].
\end{equation}
Also note that 
\begin{align}\label{eq:interaction_region_def2}
    \mathcal{C}_{B_1,B_2} &=J^+[\HHRT(B_1)]\cap J^-[\HHRT(B_2)]=J^+[\gamma_{B_1}^{(B_2)}]\cap J^-[\gamma_{B_2}^{(B_1)}],\\
    \gamma_{B_1}^{(B_2)} &= \HHRT(B_1)\cap J^-[\HHRT(B_2)]=\HHRT(B_1)\cap J^-[\gamma_{B_2}^{(B_1)}]
\end{align}
These observations imply that 
\begin{align}
    \gamma_{B_1}^{(B_2)} &\subseteq \gamma_{B_1}^{(B_i)} = \HHRT(B_1)\cap J^+[\HHRT(B_i)], \label{eq:gamma_B1_B2_Bi} \\
    \gamma_{B_1}^{(B_2)} &=\HHRT(B_1)\cap J^-[\gamma_{B_2}^{(B_1)}]\subseteq \gamma_{B_1}^{(B_j)} = \HHRT(B_1)\cap J^-[\HHRT(B_j)],\label{eq:gamma_B1_B2_Bj} \\
    \gamma_{B_2}^{(B_1)} &\subseteq \gamma_{B_2}^{(B_j)}=\HHRT(B_2)\cap J^-[\HHRT(B_j)],\label{eq:gamma_B2_B1_Bi}  \\
    \gamma_{B_2}^{(B_1)} &=\HHRT(B_2)\cap J^+[\gamma_{B_1}^{(B_2)}]\subseteq \gamma_{B_2}^{(B_i)} =\HHRT(B_2)\cap J^+[\HHRT(B_i)]. \label{eq:gamma_B2_B1_Bj} 
\end{align}

We will treat this chain of causally connected HRT surfaces (if $i=1$ or instead $j=2$, then just collapse identical HRT surfaces)
\begin{equation}\label{eq:chronological_order}
    \HHRT(B_i)\to \HHRT(B_1)\to \HHRT(B_2)\to  \HHRT(B_j)
\end{equation}
as a timelike cluster.

What remains to be shown is that an admissible weight function can be defined such that it would satisfy the global compatibility condition.

It follows from  \eqref{eq:gamma_B1_B2_Bi}-\eqref{eq:gamma_B2_B1_Bj} that
we can construct a timelike congruence connecting these HRT surfaces using the following concatenated null generators.
\begin{itemize}
    \item First choose a seam section $s_{2j}$ associated with the HRT pair $\HHRT(B_2)$ and $\HHRT(B_j)$ such that its past-pointing null generators along $\HH^+[B_2]\cup \HH^+[B_2^c]$ land in $\gamma_{B_2}^{(B_1)}$. This is possible because $\gamma_{B_2}^{(B_1)}\subset \gamma_{B_2}^{(B_j)}$.
    
    Since $\gamma_{B_2}^{(B_1)}\subset \gamma_{B_2}^{(B_j)}$, future-pointing null generators from $s_{2j}$, along $\HH^-[B_j]\cup \HH^-[B_j^c]$, land in $\gamma_{B_j}^{(B_2)}$.
    
    Denote the cutting section determined by $s_{2j}$ on $\HHRT(B_2)$ and $\HHRT(B_j)$ by $\alpha_2$ and $\alpha_j$, respectively.
    
    \item Since $\alpha_2\subset \gamma_{B_2}^{(B_1)}$, it is connected through piecewise null generators, first along $\HH^-[B_2]\cup \HH^-[B_2^c]$ and then along $\HH^+[B_1]\cup \HH^+[B_1^c]$, to a section $\alpha_1\subset \gamma_{B_1}^{(B_2)}$ \footnote{Here we again assume null generators of $\alpha_2\subset \HHRT(B_2)$ remains on $
    \HH^-[B_2]\cup \HH^-[B_2^c]$ until reaching intersection seams with $\pew(B_1)\cup \pew(B_1^c)$. This can be guaranteed by varying the choice of the starting section $s_{2j}$ and the fact that failed sections have measure zero in $\HHRT(B_2)$.}.
    
    \item Since $\alpha_1\subset \gamma_{B_1}^{(B_2)}\subset \gamma_{B_1}^{(B_i)}$, it is connected through piecewise null generators, first along $\HH^-[B_1]\cup \HH^-[B_1^c]$ and then along $\HH^+[B_i]\cup \HH^+[B_i^c]$, to a section $\alpha_i\subset \gamma_{B_i}^{(B_1)}$ \footnote{A similar remark as in the previous item applies here.}.
\end{itemize}
We then define the respective weight as the relevant area or HRT pieces.
The last ingredient is to realize that the partitioning sections $\beta_1,\beta_2,\beta_i,\beta_j$ are among the intermediary sections of Corollary \ref{cor:section_HRT_cutHRT} and hence satisfy the global compatibility condition pairwise.

By construction, our weight function satisfies the global compatibility condition for adjacent pairs in 
\[    \HHRT(B_i)\to \HHRT(B_1)\to \HHRT(B_2)\to  \HHRT(B_j),\]
e.g. the pair $\HHRT(B_i$ and $\HHRT(B_1)$ etc. This is because sections of their bounded intersection seam are used explicitly in the above construction. From Section \ref{sec:achronal_loci} we know that the only condition to be imposed is these projection loci being achronal to those outside the timelike cluster.

For nonadjacent pairs like $(\HHRT(B_i)$ and $\HHRT(B_2))$, we note the following fact:
additional turning of our piece-wise null generators on intermediary entanglement horizons makes the congruence lie strictly inside relevant future/past null cones. For example, our congruence of piecewise null geodesics from $\alpha_2$, upon reaching $\HH^+[B_i]\cup \HH^+[B_i^c]$, lies strictly inside the intersection between $\HH^+[B_i]\cup \HH^+[B_i^c]$ and past null cones from $\beta_2$, i.e. between $\beta_{B_i}$ and $\beta_{B_i}^c$ in the notation of Corollary \ref{cor:section_HRT_cutHRT}, due to the additional turning along the entanglement horizons associated with $B_1$. This makes $\beta_i$ an intermediary section determined by $\beta_2$ as in Corollary \ref{cor:section_HRT_cutHRT}. We spell out this argument more rigorously in Section \ref{sec:intermediate_section}. 

However, for a weight function using an intermediary section to be admissible, the projection loci associated with the two endpoint sections ($\beta_A$ and $\beta_{A^c}$ in the notation of Corollary \ref{cor:section_HRT_cutHRT}) need to be achronal/compatible from other projection loci. 

We take care of this by removing projection loci associated with endpoint sections of any intermediary section in defining the new/effective exposed region associated with this timelike cluster.

Lastly, since concatenation of piecewise null causal curves are still piecewise null causal curves, the above discussion with two interactions regions directly extends to arbitrarily many interactions regions.

\medskip

\medskip

More generally, it could happen that $\mathcal{C}_{B_1,B_2}$ only overlaps with $\mathcal{C}_{B_i,B_j}$ but is contained in a union of interaction regions including $\mathcal{C}_{B_i,B_j}$.

We can define a timelike cluster provided those partitioning section $\alpha_1,\alpha_2,\alpha_i,\alpha_j$ can be defined.
In particular, we need
$$\gamma_{B_1}^{(B_2)}\cap \mathcal{C}_{B_i,B_j}\neq \emptyset, \quad \gamma_{B_2}^{(B_1)} \cap \mathcal{C}_{B_i,B_j}\neq \emptyset.$$


\medskip

While the timelike cluster construction resolves cases involving nested or certain overlapping interaction regions, it does not yet address configurations in which an interaction region is covered only by a union of others. We leave a complete treatment of such cases to future work.

In summary, we employ the timelike cluster construction to reduce the consideration to pairwise relations between clusters rather than between individual HRT surfaces. Compared to the naive construction -- where each HRT surface is partitioned independently at all intersection seams -- this clustering eliminates spurious cells involving timelike-related HRT surfaces. 

\subsection{Summary}



After reducing nested interaction regions via timelike clustering, our construction applies whenever all remaining exposed regions are nonempty. In this regime, the projection construction of Section \ref{sec:achronal_loci} establishes all polygon inequalities and hence the global compatibility condition.

However, it remains possible that some interaction region is not contained in any single other region but is covered by a union of them. In such cases, exposed regions may still be empty even after clustering. We leave the treatment of these configurations to future work.

Thus, our results establish the existence of a graph model under the exposed-region condition, and provide a partial extension beyond it via timelike clustering.

\section{Technical Proofs}\label{sec:tech_proofs}
We collect here some technical proofs that were described in the main text.
\subsection{Proof of Lemma~\ref{lemma:barrier}}\label{sec:proof-maximum-principal}
\begin{proof}[Proof of Lemma~\ref{lemma:barrier}]
If $A \cap B = \emptyset$, then by entanglement wedge nesting $\ew(A)$ is spacelike
separated from $\ew(B)$, hence $\gamma_A$ does not intersect $\partial\ew(B)$.

Assume $A \cap B \neq \emptyset$ and suppose for contradiction that along some connected
component of $\gamma_A$ there is an exit and a subsequent re--entry into $\ew(B)$. See Figure \ref{fig:HRT-cross} for an illustration.
As one traverses this component from one boundary anchor to the other, crossings with
$\partial\ew(B)$ must alternate between entering and exiting. Hence the existence of
more than one crossing implies the existence of three successive crossing components
\[
\mathfrak q_1,\ \mathfrak q_2,\ \mathfrak q_3 \subset \gamma_A\cap \partial\ew(B),
\]
corresponding to an entry into $\ew(B)$, an exit, and a re--entry.
Let $\mathcal{U}$ denote the open portion of $\gamma_A$ lying between
$\mathfrak q_2$ and $\mathfrak q_3$; by construction,
\[
\mathcal{U}\subset \mathrm{ext}\,\ew(B).
\]

\noindent\emph{First--contact tangency.}
Enlarge $B$ within the same boundary Cauchy slice to a one--parameter family $B(\lambda)$
with $B(0)=B$ and $B(\lambda_1)\supset B(\lambda_0)$ for $\lambda_1>\lambda_0$.
By entanglement wedge nesting, $\ew(B(\lambda_1))\supset \ew(B(\lambda_0))$.
For $\lambda=0$ we have $\mathcal{U}\subset \mathrm{ext}\,\ew(B(0))$.
Increase $\lambda$ until the first value $\lambda_*>0$ for which $\mathcal{U}$ completely lies inside $\partial\ew(B(\lambda_*))$, and set $B^*:=B(\lambda_*)$.

By this first--contact construction, there exists a point
\[
O\in \mathcal{U}\cap \partial\ew(B^*)
\]
such that $\partial\ew(B^*)$ is tangent to $\gamma_A$ at $O$, and in a sufficiently small
neighborhood of $O$ the surface $\gamma_A$ lies entirely on the same side of
$\partial\ew(B^*)$, namely the interior side of $\ew(B^*)$.
In particular, $\gamma_A$ intersects $\partial\ew(B^*)$ only at $O$ in that neighborhood.

\smallskip
\noindent\emph{Mean curvature sign on a spatial slice.}
The point $O$ lies on either $\mathcal{H}^+(B^*)$ or $\mathcal{H}^-(B^*)$.
Let $k^\mu$ be the inward--directed null generator of the corresponding horizon
component. By the Raychaudhuri equation and the null energy condition, the inward null
expansion satisfies $\theta\le 0$ along the generators, and by the usual genericity
assumption we may take the first--contact point so that
\[
\theta(O)<0.\footnote{\label{foot:weakMP}
An alternative argument avoids enlarging $B$ to $B^*$ and does not require the strict
inequality $\theta(O)<0$. One may instead apply a weak maximum principle to the portion
of $\gamma_A$ between $\mathfrak q_2$ and $\mathfrak q_3$ and the corresponding portion
of $\partial\ew(B)$ on a common Cauchy slice, provided both can be expressed locally as
graphs. We do not pursue this variant here.}
\]

Choose a bulk Cauchy slice $\Sigma$ containing $\gamma_A$.
Since $\gamma_A$ is extremal, its mean curvature vector in the full spacetime vanishes,
and hence $\gamma_A$ has vanishing mean curvature as a hypersurface in $\Sigma$.
Deform $\Sigma$ in a small neighborhood of $O$ (while containing $\gamma_A$) so that its second fundamental form becomes arbitrarily small near $O$.
Define
\[
S_1:=\gamma_A\subset\Sigma,
\qquad
S_2:=\partial\ew(B^*)\cap\Sigma.
\]
Let $\nu_{\mathrm{in}}$ denote the unit normal to $S_2$ in $\Sigma$ pointing inward into
$\ew(B^*)$. The standard relation between null expansion and spatial mean curvature gives
\[
\langle H_{S_2},\nu_{\mathrm{in}}\rangle
= \theta + \mathcal{O}(K^\Sigma),
\]
so for $\Sigma$ chosen as above,
\[
\langle H_{S_2},\nu_{\mathrm{in}}\rangle(O)<0.
\]

\smallskip
\noindent\emph{Local strong comparison.}
By the first--contact construction, $S_1$ lies locally on the interior side of $S_2$.
Working in local coordinates on $\Sigma$ near $O$, we may represent $S_1$ and $S_2$ as
graphs over a common domain $\Omega\subset\mathbb{R}^{d-1}$,
\[
S_1=\{(x,u_1(x))\},\qquad S_2=\{(x,u_2(x))\},
\]
with
\[
u_1\le u_2 \quad \text{in } \Omega,\qquad u_1(x_0)=u_2(x_0),
\]
where $x_0$ corresponds to $O$.

Let $\mathcal{M}$ denote the minimal surface operator for graphs with respect to the
downward normal, as reviewed in Appendix~\ref{sec:appendix_max_principle}.
Then $S_1$ minimal implies $\mathcal{M}(u_1)=0$, while the strict inequality
$\langle H_{S_2},\nu_{\mathrm{in}}\rangle(O)<0$ implies $\mathcal{M}(u_2)<0$ in a
neighborhood of $x_0$ (after possibly shrinking $\Omega$).
Thus locally,
\[
\mathcal{M}(u_1)\ge \mathcal{M}(u_2),\qquad u_1\le u_2,\qquad u_1(x_0)=u_2(x_0).
\]
By the strong comparison principle for $\mathcal{M}$
(Proposition~\ref{prop:strongcomp}), which is a purely local statement,
it follows that $u_1\equiv u_2$ near $x_0$.
This would imply that $\gamma_A$ locally coincides with $\partial\ew(B^*)$, 
which contradicts the first-contact set-up.

This contradiction shows that $\gamma_A$ cannot exit and re--enter $\ew(B)$ along the
same connected component. Equivalently, $\gamma_A$ intersects $\partial\ew(B)$ at most
once along each connected component.
\end{proof}

\subsection{Causality argument for position of intersection seams on entanglement horizons}\label{sec:causality_position}
We present here a rigorous argument for the position of an intersection seam relative to null generators of an entanglement horizon.

Recall that entanglement wedge $\ew(A)$ is defined as the domain of dependence of the homology region $\mathcal{R}_{hom}$ of $V$. Denote the set complement of the homology region of V on a suitable bulk Cauchy slice by $\mathcal{R}_{hom}'$. An important fact that we will use is
\begin{equation}\label{eq:horizon_homology}
    \HH^\pm[A]=\partial J^\pm[\mathcal{R}_{hom}'] \cap M,
\end{equation}
which states that the future and past entanglement horizons of $A$ is (the bulk part of) the causal future and past boundaries of  $\mathcal{R}_{hom}'$, respectively. It then follows from \eqref{eq:horizon_homology} and properties of causal boundaries that the future (past) null generator on the future (past) entanglement horizon $\HH^+[V]$ ($\HH^-[V]$) cannot enter the interior of the entanglement wedge but can possibly enter the exterior of the entanglement wedge \footnote{see Lemma A.1 in \cite{zhao2026proof} for a detailed handling of future (past) null generator on a future versus past causal boundary.}. The latter situation happens exactly when the null generator stops being prompt due to focusing etc. \footnote{A null generator of a causal boundary is said to be non-prompt when the null generator enters the interior of the causal future/past of the relevant set. We refer the reader to \cite{witten2020light} for a detailed discussion on promptness.}. Also by properties of causal boundaries e.g. achronality, null geodesic emanating from points in the exterior of an entanglement wedge cannot enter the corresponding entanglement wedge.

Direct applications of these observations reveal that the portion $\cS(A,B)^b$ of the intersection seam $\cS(A,B)$ bounded between the two HRT surfaces can only be connected to the outer portion of each HRT surface by null geodesics. We arrived at the same conclusion by simple geometric considerations as above. 

\subsection{Intermediate section property}\label{sec:intermediate_section}

\begin{lemma}[Intermediate-section property]
Let
\[
\HHRT(B_i)\to \HHRT(B_1)\to \HHRT(B_2)\to \HHRT(B_j)
\]
be a timelike-ordered chain arising from nested interaction regions, so that
\[
\mathcal C_{B_1,B_2}\subset \mathcal C_{B_i,B_j}.
\]
Let \(\alpha_2\subset \HHRT(B_2)\) be a section lying in
\(\gamma^{B_1}_{B_2}\). Transport \(\alpha_2\) backwards by the concatenated null
generators through \(\partial \ew(B_1)\cup\partial \ew(B_1^c)\) and then through
\(\partial \ew(B_i)\cup\partial \ew(B_i^c)\), obtaining a section
\(\alpha_i\subset \HHRT(B_i)\).

Then \(\alpha_i\) lies between the two endpoint sections on \(\HHRT(B_i)\) obtained
by applying the two direct transports from \(\alpha_2\) associated with the pair
\((B_i,B_2)\). Equivalently, \(\alpha_i\) is an intermediate section in the sense of
Corollary~\ref{cor:section_HRT_cutHRT}.
\end{lemma}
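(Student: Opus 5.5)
The argument is a causal comparison on the horizons of $B_i$. The two direct transports from $\alpha_2$ to $\HHRT(B_i)$ are the sections $\beta_{B_i}$ and $\beta_{B_i^c}$ of Corollary~\ref{cor:section_HRT_cutHRT}, applied to the pair $(B_i,B_2)$ with the roles reversed. They are obtained by following null generators of $\HH^-[B_2]\cup\HH^-[B_2^c]$ from $\alpha_2$ to the seams $\cS(B_i,B_2)$ and $\cS(B_i^c,B_2)$, and then following null generators of $\HH^+[B_i]$, respectively $\HH^+[B_i^c]$, back to $\HHRT(B_i)$.

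The first step is to characterize these endpoints causally. Let
\[
\mathcal P:=\partial J^-[\alpha_2]\cap\big(\pew(B_i)\cup\pew(B_i^c)\big).
\]
I claim $\mathcal P$ is exactly the union of the two seam sections reached directly from $\alpha_2$. The reason is that, locally near $\alpha_2$, $\partial J^-[\alpha_2]$ consists of the two past null sheets orthogonal to $\alpha_2$, and these are (portions of) $\HH^-[B_2]$ and $\HH^-[B_2^c]$. Consequently, the region
\[
J^-[\alpha_2]\cap\big(\pew(B_i)\cup\pew(B_i^c)\big)
\]
is a strip on the future horizon of $\HHRT(B_i)$. Its generators, traced back to $\HHRT(B_i)$, sweep out exactly the segment between $\beta_{B_i}$ and $\beta_{B_i^c}$. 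This uses the facts from Section~\ref{sec:causality_position}: each point of the horizon lies on a unique generator, and generators move monotonically away from the HRT surface. So it suffices to show that the concatenated transport lands, on the horizon of $B_i$, at points of $J^-[\alpha_2]$, and then to project back along generators.

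The second step is to check that the concatenated curve is causal. It runs from $\alpha_2$ along $\HH^-[B_2]\cup\HH^-[B_2^c]$ to the seam with $\pew(B_1)\cup\pew(B_1^c)$, then to $\alpha_1$, then along $\HH^-[B_1]\cup\HH^-[B_1^c]$ to the seam with $\pew(B_i)\cup\pew(B_i^c)$. This is a past-directed piecewise null curve from $\alpha_2$, so its landing point $x$ on the $B_i$ horizon lies in $J^-[\alpha_2]$. Two inputs make this well defined. The inclusions \eqref{eq:gamma_B1_B2_Bi}--\eqref{eq:gamma_B2_B1_Bj} ensure every leg stays inside $\mathcal C_{B_i,B_j}$, so that $x$ actually lies on the bounded portion of the seam. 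Lemma~\ref{lemma:no_partial_coincidence} ensures the generators do not merge. Following the generator of $\HH^+[B_i]\cup\HH^+[B_i^c]$ through $x$ back to $\HHRT(B_i)$ yields $\alpha_i$. By the first step, $\alpha_i$ lies between $\beta_{B_i}$ and $\beta_{B_i^c}$.

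For strictness, whenever the two horizons of $B_2$ and $B_1$ do not coincide, the corner where the curve turns at the intersection with $\pew(B_1)$ can be smoothed to a timelike curve. This puts $x$ in $I^-[\alpha_2]$, so it lies strictly inside the strip.

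The main obstacle is the identification of $J^-[\alpha_2]$ restricted to the $B_i$ horizon with the strip between $\beta_{B_i}$ and $\beta_{B_i^c}$. The difficulty is that $\partial J^-[\alpha_2]$ may fail to coincide with $\HH^-[B_2]\cup\HH^-[B_2^c]$ far from $\alpha_2$, because of caustics or non-prompt generators. Here I would invoke the measure-zero genericity assumption already used in Corollary~\ref{cor:section_HRT_cutHRT}, together with achronality of causal boundaries (Remark~\ref{remark:causal_boundary_intersection}), to argue that within $\mathcal C_{B_i,B_2}$ the generators remain prompt. If that fails, I would fall back on a weaker statement: it is enough to know that $\alpha_i$ is sandwiched between $\beta_{B_i}$ and $\beta_{B_i^c}$ at the level of the areas of the two pieces of $\HHRT(B_i)$, which is all that the convex-combination argument of Corollary~\ref{cor:section_HRT_cutHRT} actually requires.
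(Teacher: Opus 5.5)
\textbf{The gap is in your first step.} Your route is essentially the paper's. You identify the two direct endpoint sections as the boundary of what $\alpha_2$ can causally reach on $\HHRT(B_i)$, and then note that the concatenated route is a past-directed causal curve from $\alpha_2$. That first identification is false, and the rest does not survive without it.

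The sheets $\HH^-[B_2]\cup\HH^-[B_2^c]$ locally generate $\partial J^-[\HHRT(B_2)]$, the boundary of the past of the whole codimension-two surface. But $\alpha_2$ is only a codimension-one section of $\HHRT(B_2)$ (a point in $2+1$ dimensions). Its past light cone is generated by a one-parameter family of null directions orthogonal to $\alpha_2$, including directions tilted along $\HHRT(B_2)$. Hence $J^-[\alpha_2]\cap\HHRT(B_i)$ is generically strictly larger than the segment between the direct endpoints.

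A flat example shows this. In $2+1$ Minkowski space take $\HHRT(B_i)=\{t=0,\,x=0\}$, $\HHRT(B_2)=\{t=T,\,y=0\}$ and $\alpha_2=(T,x_0,0)$ with $0<|x_0|<T$. The direct transports land at $y=\pm(T-|x_0|)$, whereas $J^-[\alpha_2]\cap\HHRT(B_i)=\{|y|\le\sqrt{T^2-x_0^2}\}$.

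Your own corner-smoothing remark exposes the same problem. The direct transports are themselves broken null curves, so their endpoints lie in $I^-[\alpha_2]$, i.e. in the interior of the set they were supposed to bound. Step 2 therefore only gives $\alpha_i\subset J^-[\alpha_2]\cap\HHRT(B_i)$, which does not imply the claim. Caustics and non-prompt generators are not the obstruction, since there are none in this example. The area-level fallback is not supported by any argument.

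\textbf{The nesting hypothesis does not rescue the argument.} You use nesting only to make the legs well defined. The paper's own proof rests on the same identification (``the two endpoint sections bound precisely the part of $\HHRT(B_i)$ causally connected to $\alpha_2$'') and then appeals to nesting and achronality, so following it will not close the gap.

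Here is a flat local model, with $T=1$ in the setup above, where nesting holds and the conclusion still fails:
\begin{itemize}
\item Put $\HHRT(B_1)=\{t=0.6,\ (x,y)=(0.5,0.32)+\lambda(0.6,0.8)\}$ and place $\HHRT(B_j)$ far enough to the future.
\item Then $\gamma^{B_2}_{B_1}$ is the part with $\lambda\in[-0.9,0.1]$, on which $|x|\le 0.56<0.6$. So $\gamma^{B_2}_{B_1}\subset J^+[\HHRT(B_i)]$, and hence $\mathcal C_{B_1,B_2}\subset\mathcal C_{B_i,B_j}$.
\item For $\alpha_2=(1,0.66,0)$ the direct endpoints are $(0,0,\pm0.34)$.
\item Now follow the past sheet of $\HHRT(B_2)$ toward $+y$ to $(0.8,0.66,0.2)$, which lies on a future sheet of $\HHRT(B_1)$. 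Go back along that generator to $\alpha_1=(0.6,0.5,0.32)$.
\item Continue straight through $\alpha_1$ along the past sheet of $\HHRT(B_1)$ on the opposite side, reaching $(0.1,0.1,0.62)$. Return along the future sheet of $\HHRT(B_i)$.
\end{itemize}
This lands at $(0,0,0.62)$, outside the interval. Turning back at $\alpha_1$ instead lands at $y\approx 0.29$, inside.

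So some additional geometric input is needed, for instance controlling which sheet is used at each intermediate HRT surface and exploiting the nesting quantitatively. Neither a genericity argument for your first step nor causality from $\alpha_2$ alone can supply it.
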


\begin{proof}
The nesting assumption
\[
\mathcal C_{B_1,B_2}\subset \mathcal C_{B_i,B_j}
\]
implies, in particular, the inclusions
\[
\gamma^{B_2}_{B_1}\subset \gamma^{B_i}_{B_1},
\qquad
\gamma^{B_1}_{B_2}\subset \gamma^{B_i}_{B_2}.
\]
Thus the portion of \(\HHRT(B_2)\) causally connected to \(\HHRT(B_1)\) lies inside
the portion of \(\HHRT(B_2)\) causally connected to \(\HHRT(B_i)\). In particular,
the section \(\alpha_2\subset \gamma^{B_1}_{B_2}\) also lies inside
\(\gamma^{B_i}_{B_2}\).

For the pair \((B_i,B_2)\), Corollary~\ref{cor:section_HRT_cutHRT} associates to
\(\alpha_2\) two endpoint sections on \(\HHRT(B_i)\), obtained by transporting
\(\alpha_2\) through the two relevant seams along
\[
\partial \ew(B_2)\cup \partial \ew(B_2^c)
\quad\text{and}\quad
\partial E(B_i)\cup \partial E(B_i^c).
\]
These two endpoint sections bound precisely the part of \(\HHRT(B_i)\) causally
connected to \(\alpha_2\) through the pairwise interaction region
\(\mathcal C_{B_i,B_2}\).

Now compare this direct transport with the concatenated transport through
\(\HHRT(B_1)\). The latter first follows null generators from \(\alpha_2\) to the
interaction region of the pair \((B_1,B_2)\), then turns along
\(\partial \ew(B_1)\cup\partial \ew(B_1^c)\), and finally follows null generators to
\(\HHRT(B_i)\). Since
\[
\mathcal C_{B_1,B_2}\subset \mathcal C_{B_i,B_2},
\]
this concatenated route stays inside the causal channel bounded by the two direct
transports associated with the pair \((B_i,B_2)\).

Equivalently, the extra turn along the intermediate horizons
\(\partial E(B_1)\cup\partial E(B_1^c)\) pushes the transported section into the
interior of the null strip bounded by the two direct horizon transports. By
achronality of the relevant causal boundaries, null generators cannot cross and then
return across one another. Therefore the endpoint section \(\alpha_i\) cannot lie
outside the interval bounded by the two direct endpoint sections on \(\HHRT(B_i)\).

Hence \(\alpha_i\) is an intermediate section between the two endpoint sections
determined directly by the pair \((B_i,B_2)\). By Corollary~\ref{cor:section_HRT_cutHRT},
partitioning \(\HHRT(B_i)\) at \(\alpha_i\) is admissible for the pair
\((B_i,B_2)\).
\end{proof}

\section{Conclusion}\label{sec:conclusion}

In this work, we established a graph model for holographic entropies in covariant spacetimes under a natural geometric condition, namely the existence of exposed regions for interacting HRT surfaces.

We began by identifying a universal criterion for the existence of such a model—the global compatibility condition—and showed that it is equivalent to a set of local polygon inequalities (Theorem~\ref{thm:polygon_inequality}). This reformulation reduces the problem of constructing a graph model to verifying local geometric constraints.

We then provided an explicit construction based on projection along entanglement horizons. In the two-horizon configuration, we identified a large family of admissible weight functions (Theorem \ref{thm:section_seam_cutHRT} and Corollary~\ref{cor:section_HRT_cutHRT}), reflecting a nontrivial freedom in choosing projection loci along null generators. We extended this construction to general multi-horizon configurations by distinguishing two regimes:
\begin{itemize}
    \item In the achronal regime, projection loci can be chosen to lie on a common Cauchy slice, and the construction reduces directly to the static RT case.
    \item In the timelike cluster regime, causal relations induce a nested structure among HRT surfaces. By transporting sections along piecewise-null congruences, we showed that these clusters admit compatible partitions inherited from the two-horizon construction.
\end{itemize}

Combining these ingredients, we proved the Conditional No-Short-Cut Theorem \ref{thm:no-short-cut}, establishing that any graph cut has weight greater than or equal to the area of the corresponding HRT surface. As a consequence, the minimal cut reproduces HRT entropies, and the covariant holographic entropy cone coincides with the static RT cone. In particular, polyhedrality and the finiteness of entropy inequalities extend to general time-dependent holographic states.

Conceptually, our construction shows that the role of a common Cauchy slice in the static case is replaced, in the covariant setting, by the causal structure of entanglement wedges. The graph model emerges not from spatial geometry, but from the organization of null congruences and their intersection structure.

Several directions remain open. It would be desirable to further clarify the geometric structure of timelike clusters and to formulate the construction in a more intrinsic way, independent of auxiliary choices of projection. Another important question is whether this framework can be extended beyond the classical regime, for example to include quantum extremal surfaces. Finally, the existence of a covariant graph model suggests a possible route toward constructing holographic tensor networks for time-dependent states, where the absence of a preferred time slice has been a longstanding obstacle.

We hope that the perspective developed here will provide a useful bridge between geometric, combinatorial, and quantum-information-theoretic approaches to holography.

\acknowledgments
I thank Edward Witten for introducing this problem to me and for helpful suggestions that improved the manuscript. 
I thank Matthew Headrick, Guglielmo Grimaldi, and Veronika E. Hubeny for pointing out a crucial gap in a previous version of this work and for helpful discussions. 
I thank Ning Bao for his interest in this work.

\renewcommand{\thesection}{\Alph{section}} 
\setcounter{Counter}{1}
\renewcommand{\thetheorem}{\Alph{section}.\arabic{theorem}} 

\appendix
\section{Focusing of null congruences and area monotonicity}\label{app:focusing}

In this appendix, we briefly review the focusing property of null congruences and its implication for area monotonicity along null hypersurfaces. This underlies the projection arguments used in the main text.

\subsection{Null congruences and expansion}

Let $\mathcal{N}$ be a null hypersurface generated by a congruence of null geodesics with tangent vector field $k^\mu$. The expansion $\theta$ of the congruence is defined as
\begin{equation}
\theta = \nabla_\mu k^\mu,
\end{equation}
which measures the infinitesimal rate of change of cross-sectional area along the generators.

More precisely, if $\sigma$ is a codimension-2 spacelike cross section of $\mathcal{N}$, and we flow it along the generators by an affine parameter $\lambda$, then its area element evolves as
\begin{equation}
\frac{d}{d\lambda} \log \sqrt{h} = \theta,
\end{equation}
where $h$ is the induced metric on $\sigma$.

\subsection{Raychaudhuri equation and focusing}

The evolution of $\theta$ along the congruence is governed by the Raychaudhuri equation:
\begin{equation}
\frac{d\theta}{d\lambda}
= -\frac{1}{d-2}\theta^2 - \sigma_{\mu\nu}\sigma^{\mu\nu}
- R_{\mu\nu} k^\mu k^\nu,
\end{equation}
where $\sigma_{\mu\nu}$ is the shear tensor and the twist term is omitted because the twist vanishes for hypersurface orthogonal congruences.

Assuming the null energy condition,
\begin{equation}
R_{\mu\nu} k^\mu k^\nu \ge 0,
\end{equation}
we obtain
\begin{equation}
\frac{d\theta}{d\lambda} \le 0.
\end{equation}
Thus, the expansion is non-increasing along null generators.

\subsection{Area monotonicity}

As a consequence, if $\theta \le 0$ at some cross section, then it remains non-positive along the future direction of the congruence. Integrating the expansion, we find that the area of cross sections is non-increasing along the flow:
\begin{equation}
|\sigma(\lambda_2)| \le |\sigma(\lambda_1)| \quad \text{for } \lambda_2 \ge \lambda_1.
\end{equation}

\subsection{Application to entanglement horizons}

In the context of holography, entanglement horizons $\HH^\pm[A]$ are generated by null congruences emanating from extremal surfaces. At the HRT surface, the null expansions vanish:
\begin{equation}
\theta = 0.
\end{equation}
Therefore, along the appropriate null directions, the expansion becomes non-positive, and the area of cross sections decreases away from the extremal surface.

This implies that projecting a surface along null generators of entanglement horizons cannot increase its area. This monotonicity is the key ingredient in ruling out short-cuts in the graph model construction.

\section{Maximum Principles and Comparison for the Minimal Surface Operator}\label{sec:appendix_max_principle}

In this appendix we collect the versions of the weak and strong maximum principles
used in the main text, following the conventions of Leon Simon
(see e.g.\ \cite{SimonPDE}). We also record the corresponding comparison principles
for the minimal surface operator, derived using standard linearization arguments.

\subsection{Ellipticity and linear operators}

Let $\Omega\subset\mathbb{R}^n$ be a domain and consider a second--order linear operator
of the form
\begin{equation}\label{eq:linearL}
Lu := \sum_{i,j=1}^n a_{ij}(x) D_i D_j u
      + \sum_{j=1}^n b_j(x) D_j u
      + c(x) u .
\end{equation}

We assume:
\begin{itemize}
\item $a_{ij}, b_j, c \in L^\infty(\Omega)$,
\item $a_{ij}=a_{ji}$,
\item (\emph{uniform ellipticity}) there exists $\mu>0$ such that
\begin{equation}\label{eq:ellipticity}
\sum_{i,j=1}^n a_{ij}(x)\xi_i\xi_j \ge \mu |\xi|^2
\qquad \forall\, x\in\Omega,\ \xi\in\mathbb{R}^n .
\end{equation}
\end{itemize}

\subsection{Weak and strong maximum principles}

\begin{theorem}[Weak maximum principle]\label{thm:weakMP}
Let $\Omega$ be bounded and let $u\in C^2(\Omega)\cap C^0(\overline{\Omega})$
satisfy
\[
Lu \ge 0 \quad \text{in } \Omega ,
\]
where $L$ is as in \eqref{eq:linearL}--\eqref{eq:ellipticity} and
\begin{equation}\label{eq:cneg}
c(x)\le 0 \quad \text{in } \Omega .
\end{equation}
Then
\[
\max_{\overline{\Omega}} u \le \max_{\partial\Omega} u_+ ,
\qquad
u_+(x):=\max\{u(x),0\}.
\]
If $c\equiv 0$, then $\max_{\overline{\Omega}} u \le \max_{\partial\Omega} u$.
\end{theorem}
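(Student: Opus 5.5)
The plan is to prove Theorem~\ref{thm:weakMP} in the standard way: first treat a strict inequality, then perturb to handle the non-strict case, and finally reduce the case $c\le 0$ to the case $c\equiv 0$ on the set where $u$ is positive. Throughout, only uniform ellipticity \eqref{eq:ellipticity}, the boundedness of $b_j$ and $c$, and the boundedness of $\Omega$ are used.

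\emph{Step 1 (strict inequality, $c\equiv 0$).} Suppose $Lu>0$ in $\Omega$ and that $u$ attains an interior maximum at $x_0\in\Omega$. Then $Du(x_0)=0$ and the Hessian $D^2u(x_0)$ is negative semidefinite. Since $(a_{ij}(x_0))$ is positive definite, the trace $\sum_{i,j} a_{ij}D_iD_ju(x_0)=\mathrm{tr}(A D^2u)$ is $\le 0$. To see this, diagonalize $A$ and note that the diagonal entries of $D^2u$ in that basis are nonpositive. Hence $Lu(x_0)\le 0$, which is a contradiction. So the maximum over $\overline\Omega$, which exists because $\overline\Omega$ is compact and $u\in C^0(\overline\Omega)$, is attained on $\partial\Omega$.

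\emph{Step 2 (non-strict inequality, $c\equiv 0$).} This is the step I expect to carry the only real content. Set $v_\varepsilon=u+\varepsilon e^{\gamma x_1}$. Then
\[
L e^{\gamma x_1}=(a_{11}\gamma^2+b_1\gamma)e^{\gamma x_1}\ge(\mu\gamma^2-\|b\|_\infty\gamma)e^{\gamma x_1}>0
\]
once $\gamma>\|b\|_\infty/\mu$. Therefore $Lv_\varepsilon>0$, and Step 1 gives $\max_{\overline\Omega}u\le\max_{\overline\Omega}v_\varepsilon\le\max_{\partial\Omega}u+\varepsilon\sup_{\overline\Omega}e^{\gamma x_1}$. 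The supremum is finite because $\Omega$ is bounded. Letting $\varepsilon\to 0$ yields $\max_{\overline\Omega}u\le\max_{\partial\Omega}u$.

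\emph{Step 3 ($c\le 0$).} If $u\le 0$ throughout $\Omega$, the claim is trivial. Otherwise, consider the open set $\Omega^+=\{x\in\Omega: u(x)>0\}$ and the operator $L_0=L-c$, which has no zeroth-order term. On $\Omega^+$ we have $L_0u=Lu-cu\ge -cu\ge 0$, because $c\le 0$ and $u>0$ there. Applying Step 2 to each component of $\Omega^+$, which is bounded, with $u\in C^2\cap C^0$ up to its closure, shows that $\max_{\overline{\Omega^+}}u$ is attained on $\partial\Omega^+$. A point of $\partial\Omega^+$ either lies in $\partial\Omega$ or is an interior point of $\Omega$ where $u=0$ by continuity. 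The maximum over $\overline{\Omega^+}$ is positive, so it cannot occur at a point where $u=0$, and it must therefore lie on $\partial\Omega$. This gives $\max_{\overline\Omega}u\le\max_{\partial\Omega}u_+$. The final statement for $c\equiv 0$ is exactly Step 2.
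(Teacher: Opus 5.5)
Your proof is correct. It is the standard argument: rule out an interior maximum under the strict inequality, add the barrier $\varepsilon e^{\gamma x_1}$ with $\gamma>\|b\|_\infty/\mu$, and reduce the case $c\le 0$ to $c\equiv 0$ on $\{u>0\}$, as in Gilbarg--Trudinger, Theorem 3.1 and Corollary 3.2. The paper gives no proof of its own for this statement and simply defers to Simon's lecture notes, so there is no separate route to compare.
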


\begin{theorem}[Strong (Hopf) maximum principle]\label{thm:strongMP}
Let $\Omega$ be connected and let $u\in C^2(\Omega)$ satisfy
\[
Lu \ge 0 \quad \text{in } \Omega ,
\]
with $L$ as in \eqref{eq:linearL}--\eqref{eq:ellipticity} and $c\leq 0$ in $\Omega$.
If $u$ attains a nonnegative maximum at an interior point of $\Omega$, then
$u$ is constant in $\Omega$.
If $c\equiv 0$, the qualifier ``nonnegative'' may be dropped.
\end{theorem}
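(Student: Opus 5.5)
The statement to prove is the strong maximum principle, Theorem~\ref{thm:strongMP}. I will follow Hopf's classical argument: first prove a boundary-point lemma using a barrier, then use connectedness to rule out an interior maximum unless $u$ is constant. Throughout, $M:=\sup_\Omega u \ge 0$ is assumed to be attained at an interior point. Set
\[
\Omega^- := \{x\in\Omega : u(x)<M\}.
\]
The argument splits according to whether $\Omega^-$ is empty.

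\textbf{Step 1 (Hopf boundary lemma).} Let $B=B_R(y)\subset\Omega$ be an open ball and let $x_0\in\partial B$. Suppose $u<M$ in $B$, $u(x_0)=M$, and $u$ is $C^1$ near $x_0$. I claim $Du(x_0)\neq 0$. To prove it, introduce the barrier
\[
v(x)=e^{-\alpha|x-y|^2}-e^{-\alpha R^2}.
\]
A direct computation gives
\[
Lv \ge e^{-\alpha r^2}\bigl(4\alpha^2\mu r^2-2\alpha(\operatorname{tr}a+|b|r)-|c|\bigr)+c\,e^{-\alpha R^2}.
\]
On the annulus $A=B\setminus \overline{B_{R/2}(y)}$ we have $r\ge R/2$, so $Lv>0$ for $\alpha$ large. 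Here the term $c\,e^{-\alpha R^2}$ is handled by dropping it, since $c\le0$ and $v$ is used with the sign convention below.

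Next consider $w=u-M+\varepsilon v$ on $A$. Since $c\le 0$ and $M\ge 0$,
\[
L(u-M)=Lu-cM\ge 0,
\]
and hence $Lw>0$ in $A$. On the inner sphere $u<M$ strictly, and this sphere is compact, so $w<0$ there for $\varepsilon$ small. On the outer sphere $v=0$ and $u\le M$, so $w\le 0$. By the weak maximum principle (Theorem~\ref{thm:weakMP}) applied to $w$, we get $w\le0$ in $A$. Moreover $w(x_0)=0$, so the outward normal derivative satisfies $\partial_\nu w(x_0)\ge0$. This gives
\[
\partial_\nu u(x_0)\ge -\varepsilon\,\partial_\nu v(x_0)=2\varepsilon\alpha R e^{-\alpha R^2}>0.
\]

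\textbf{Step 2 (interior conclusion).} Suppose for contradiction that $\Omega^-\neq\emptyset$ while $u$ still attains $M$ at an interior point. The set $\{u=M\}$ is relatively closed and nonempty, and $\Omega$ is connected, so $\partial\Omega^-\cap\Omega\neq\emptyset$. Pick $z\in\Omega^-$ closer to $\partial\Omega^-$ than to $\partial\Omega$. Take the largest ball $B\subset\Omega^-$ centered at $z$; its boundary meets $\{u=M\}$ at some $x_0\in\Omega$. Shrinking the ball to one internally tangent at $x_0$ keeps the hypotheses of Step 1. Step 1 then gives $Du(x_0)\neq0$. But $x_0$ is an interior maximum of a $C^2$ function, which forces $Du(x_0)=0$. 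This contradiction shows $\Omega^-=\emptyset$, so $u\equiv M$.

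\textbf{The case $c\equiv0$.} Here $L(u-M)=Lu$ holds for any constant $M$, so the sign condition $M\ge0$ is never used. The same proof therefore goes through, and the word ``nonnegative'' can be dropped.

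The main obstacle is the barrier estimate in Step 1. One must check that, with $c\le0$ and $M\ge0$, the zeroth-order terms have the right sign so that $Lw>0$ on the annulus. One must also check that uniform ellipticity with only $L^\infty$ coefficients suffices to choose $\alpha$ uniformly on $A$. Concretely, $\alpha$ depends only on $\mu$, $\|a\|_\infty$, $\|b\|_\infty$, $\|c\|_\infty$ and $R$.
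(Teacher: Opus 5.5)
Your proof is correct. It is the classical Hopf argument: the barrier $e^{-\alpha|x-y|^2}-e^{-\alpha R^2}$ on an annulus gives the boundary-point lemma, and an interior ball touching $\{u=M\}$ then forces $Du(x_0)\neq 0$, which is exactly the standard textbook proof. The paper itself gives no proof and only quotes the theorem from Simon's notes.

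One slip in the barrier computation: the constant term in $Lv$ is $-c\,e^{-\alpha R^2}\ge 0$, not $+c\,e^{-\alpha R^2}$. It is this nonnegative sign that actually justifies dropping it.
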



\subsection{The minimal surface operator}

Let
$$\mathcal{M}(u)
:= \operatorname{div}\!\left(\frac{\nabla u}{\sqrt{1+|\nabla u|^2}}\right)
= \sum_{i,j=1}^n
\left(\delta_{ij}-\frac{D_i u D_j u}{1+|Du|^2}\right)
\frac{D_{ij}u}{\sqrt{1+|Du|^2}} .$$

Writing $\mathcal{M}(u)=\sum_i D_i(A_i(Du))$ with
\[
A_i(p)=\frac{p_i}{\sqrt{1+|p|^2}},
\]
we note that
\[
\frac{\partial A_i}{\partial p_j}(p)
= \frac{\delta_{ij}}{\sqrt{1+|p|^2}}
  - \frac{p_i p_j}{(1+|p|^2)^{3/2}}
\]
is positive definite. Thus $\mathcal{M}$ is quasilinear elliptic and corresponds,
upon linearization, to an operator $L$ of the form \eqref{eq:linearL} with
$c\equiv 0$.

Geometrically, if $u$ is a graph over $\Omega\subset\mathbb{R}^n$ and
\[
\nu := \frac{(\nabla u,-1)}{\sqrt{1+|\nabla u|^2}}
\]
denotes the \emph{downward--pointing} unit normal to the graph, then
\[
\mathcal{M}(u)=\operatorname{div}(\nu).
\]
With this convention, $\mathcal{M}(u)=0$ is equivalent to vanishing mean curvature.
Moreover, if a smooth hypersurface is strictly mean--convex with respect to the
\emph{upward} normal, then when written locally as a graph with downward normal
it satisfies $\mathcal{M}(u)<0$. In particular, an upper hemisphere written as a graph
over its tangent plane at the north pole satisfies $\mathcal{M}(u)<0$, while the
tangent plane itself satisfies $\mathcal{M}(u)=0$.
\subsection{Weak comparison principle for $\mathcal{M}$}

\begin{proposition}[Weak comparison]\label{prop:weakcomp}
Let $\Omega$ be bounded and suppose
\[
\mathcal{M}(u_1)\leq \mathcal{M}(u_2) \quad \text{ in } \Omega ,
\]
with $\mathcal{M}(u_i)\in C^0(\overline{\Omega})$.
Assume
\[
u_1 = u_2 \quad \text{on } \partial\Omega .
\]
Then
\[
u_2 \le u_1 \quad \text{in } \Omega .
\]
\end{proposition}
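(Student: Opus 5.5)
The plan is to reduce the nonlinear comparison to the linear weak maximum principle (Theorem~\ref{thm:weakMP}) by the standard linearization trick. We take $u_1,u_2\in C^2(\Omega)\cap C^0(\overline{\Omega})$, which is the regularity needed for $\mathcal{M}(u_i)$ and for the maximum principle. Set $w:=u_2-u_1$, so that $w=0$ on $\partial\Omega$, and write $\mathcal{M}(u)=\sum_i D_i\big(A_i(Du)\big)$ as above. By the fundamental theorem of calculus along the segment $p(t)=Du_1+t\,Dw$, $t\in[0,1]$,
\[
A_i(Du_2)-A_i(Du_1)=\sum_j a_{ij}(x)\,D_j w,\qquad a_{ij}(x):=\int_0^1 \frac{\partial A_i}{\partial p_j}\big(Du_1(x)+t\,Dw(x)\big)\,dt .
\]
Therefore
\[
0\le \mathcal{M}(u_2)-\mathcal{M}(u_1)=\sum_{i,j}D_i\big(a_{ij}D_jw\big)=\sum_{i,j}a_{ij}D_{ij}w+\sum_j b_j D_j w=:Lw,
\]
with $b_j=\sum_i D_i a_{ij}$, which is continuous since $u_1,u_2\in C^2$. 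The operator $L$ is symmetric and has $c\equiv 0$. The matrix $\partial A_i/\partial p_j$ has eigenvalues $(1+|p|^2)^{-1/2}$ and $(1+|p|^2)^{-3/2}$, so it is positive definite. Hence $a_{ij}$ is elliptic, with ellipticity constant bounded below by $(1+\Lambda^2)^{-3/2}$ wherever $|Du_1|,|Du_2|\le\Lambda$.

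The only real obstacle is uniform ellipticity and boundedness of the coefficients, which Theorem~\ref{thm:weakMP} requires. These could fail near $\partial\Omega$ if the gradients blow up there, since we only assumed $u_i\in C^0(\overline\Omega)$. I would avoid this by working on a superlevel set instead of all of $\Omega$. Suppose for contradiction that $m:=\max_{\overline\Omega}w>0$, and set $\Omega':=\{x\in\Omega: w(x)>m/2\}$. Because $w$ is continuous on $\overline\Omega$ and vanishes on $\partial\Omega$, the set $\Omega'$ is open and $\overline{\Omega'}$ is a compact subset of $\Omega$. On this compact set $Du_1,Du_2$ and their derivatives are bounded, so on $\Omega'$ the coefficients $a_{ij}$ and $b_j$ are in $L^\infty$ and $L$ is uniformly elliptic with $c\equiv0$.

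Applying Theorem~\ref{thm:weakMP} (the $c\equiv0$ case) to $Lw\ge0$ on $\Omega'$ gives $\max_{\overline{\Omega'}}w\le\max_{\partial\Omega'}w=m/2$. But $\overline{\Omega'}$ contains a point where $w=m$, a contradiction. Hence $w\le0$, i.e.\ $u_2\le u_1$ in $\Omega$. The sign convention matches the statement: the hypothesis $\mathcal{M}(u_1)\le\mathcal{M}(u_2)$ makes $w=u_2-u_1$ a subsolution, $Lw\ge0$, which is exactly what the maximum principle controls. The strong comparison principle used in Lemma~\ref{lemma:barrier} would follow by the same linearization combined with Theorem~\ref{thm:strongMP}.
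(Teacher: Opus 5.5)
Your proof is correct and follows the paper's argument: the same linearization $a_{ij}=\int_0^1 \frac{\partial A_i}{\partial p_j}(Du_1+t\,Dw)\,dt$ turns $\mathcal{M}(u_2)-\mathcal{M}(u_1)\ge 0$ into $Lw\ge 0$ with $c\equiv 0$, and the weak maximum principle (Theorem~\ref{thm:weakMP}) then gives $w\le 0$. Your one addition is to restrict to the superlevel set $\{w>m/2\}$, whose closure is compactly contained in $\Omega$; this supplies the bounded coefficients and uniform ellipticity that the paper asserts on all of $\Omega$ without justification (they can fail if $Du_i$ or $D^2u_i$ blow up at $\partial\Omega$), so it tightens the same route rather than replacing it.
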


\begin{proof}
Set $w:=u_2-u_1$. Writing $\mathcal{M}(u)=\sum_i D_i(A_i(Du))$,
\[
\mathcal{M}(u_2)-\mathcal{M}(u_1)
= \sum_{i,j} D_i\!\left(a_{ij}(x) D_j w\right),
\]
where
\[
a_{ij}(x)=\int_0^1
\frac{\partial A_i}{\partial p_j}\!\left(Du_1+tD(u_2-u_1)\right)\,dt .
\]
The matrix $(a_{ij})$ is uniformly elliptic.
Since $\mathcal{M}(u_2)-\mathcal{M}(u_1)\ge 0$, we obtain
\[
\sum_{i,j} a_{ij} D_iD_j w + \sum_i D_i(a_{ij})D_j w \ge 0 .
\]
Thus $w$ satisfies $Lw\ge 0$ with $c\equiv 0$ and $w=0$ on $\partial\Omega$.
The weak maximum principle yields $w\le 0$ in $\Omega$.
\end{proof}

\subsection{Strong comparison principle for $\mathcal{M}$}

\begin{proposition}[Strong comparison]\label{prop:strongcomp}
Let $\Omega$ be connected and suppose
\[
\mathcal{M}(u_1)\le \mathcal{M}(u_2) \quad \text{in } \Omega ,
\qquad
u_1 \ge u_2 \quad \text{in } \Omega .
\]
If there exists $x_0\in\Omega$ such that $u_1(x_0)=u_2(x_0)$, then
\[
u_1 \equiv u_2 \quad \text{in } \Omega .
\]
\end{proposition}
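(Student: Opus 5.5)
The plan is to prove the strong comparison principle by reducing it to the strong maximum principle (Theorem~\ref{thm:strongMP}) applied to the difference $w:=u_2-u_1$. This mirrors the linearization already used in the proof of Proposition~\ref{prop:weakcomp}. We assume $u_1,u_2\in C^2(\Omega)$, which is implicit since $\mathcal{M}(u_i)$ is evaluated pointwise.

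\textbf{Step 1: linearize.} Write $\mathcal{M}(u)=\sum_i D_i(A_i(Du))$ and apply the fundamental theorem of calculus along the segment $Du_1+t\,Dw$, $t\in[0,1]$. This gives
\[
\mathcal{M}(u_2)-\mathcal{M}(u_1)=\sum_{i,j} D_i\bigl(a_{ij}(x)D_jw\bigr),\qquad a_{ij}(x)=\int_0^1\frac{\partial A_i}{\partial p_j}\bigl(Du_1+tDw\bigr)\,dt .
\]
Expanding the divergence yields $Lw:=\sum a_{ij}D_iD_jw+\sum_j\bigl(\sum_i D_ia_{ij}\bigr)D_jw\ge 0$, since $\mathcal{M}(u_1)\le\mathcal{M}(u_2)$. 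There is no zeroth-order term, so $c\equiv0$.

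\textbf{Step 2: the hypotheses hold only locally.} The strong maximum principle needs uniform ellipticity and bounded coefficients, and these hold only on compact subsets of $\Omega$.
\begin{itemize}
\item On any ball $B\Subset\Omega$, the gradients $Du_1$ and $Du_2$ are bounded by some $K$.
\item The eigenvalues of $\partial A/\partial p$ at $p$ lie in $\bigl[(1+|p|^2)^{-3/2},(1+|p|^2)^{-1/2}\bigr]$. Hence $(a_{ij})$ is uniformly elliptic on $B$ with $\mu\ge(1+K^2)^{-3/2}$.
\item The first-order coefficients involve $D^2u_i$, which are continuous and hence bounded on $\overline B$.
\end{itemize}
So $L$ satisfies \eqref{eq:linearL}--\eqref{eq:ellipticity} on every such ball.

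\textbf{Step 3: local-to-global via connectedness.} The hypothesis $u_1\ge u_2$ gives $w\le 0$ on $\Omega$. Let $Z=\{x\in\Omega: w(x)=0\}$.
\begin{itemize}
\item $Z$ is nonempty because $x_0\in Z$.
\item $Z$ is closed in $\Omega$ because $w$ is continuous.
\item $Z$ is open: for any $z\in Z$ and any small ball $B\ni z$, $w$ attains its maximum value $0$ at the interior point $z$. With $c\equiv0$, Theorem~\ref{thm:strongMP} on $B$ gives $w\equiv 0$ on $B$.
\end{itemize}
Since $\Omega$ is connected, $Z=\Omega$, so $u_1\equiv u_2$.

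The main obstacle is Step 2. One must check that the quasilinear structure yields a genuinely uniformly elliptic linear operator with $L^\infty$ coefficients. This requires working on compact subballs and using the $C^2$ regularity of the $u_i$ to control the divergence-form coefficients $D_ia_{ij}$. If only $C^1$ coefficients were available, I would instead apply the divergence-form (weak-solution) strong maximum principle.
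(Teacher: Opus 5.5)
Your proof is correct and follows the paper's argument: you linearize $\mathcal{M}(u_2)-\mathcal{M}(u_1)$ to get $Lw\ge 0$ with $c\equiv 0$ for $w=u_2-u_1\le 0$, then invoke the strong maximum principle at the interior zero $x_0$. Your localization to balls $B\Subset\Omega$, where uniform ellipticity and bounded coefficients actually hold, followed by the open--closed connectedness argument, is a careful refinement the paper omits; the paper applies Theorem~\ref{thm:strongMP} directly on all of $\Omega$.
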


\begin{proof}
With $w:=u_2-u_1$ as above, we obtain $Lw\ge 0$ with $c\equiv 0$.
The hypotheses give $w\le 0$ in $\Omega$ and $w(x_0)=0$.
By the strong maximum principle (Theorem~\ref{thm:strongMP}),
$w$ is constant, hence $w\equiv 0$.
\end{proof}

\subsection{Geometric interpretation}
Proposition~\ref{prop:strongcomp} is equivalent to the geometric statement that
a minimal graph cannot touch, from the side on which it lies, another graph whose
mean curvature vector (computed with respect to the same choice of normal)
points strictly away from it, unless the two coincide locally. In particular, the
tangent plane at a point of a strictly mean--convex surface does not violate the
principle, since the required inequality on mean curvatures fails in that case.

\section{Geometric Interpretation of the Complete/Universal Graph}

We provide a geometric interpretation of the complete/universal graph with vertex set $\mathcal{V} = \{0,1\}^{\mathcal{P}(\nset)\setminus \emptyset}$, which was introduced algebraically in \cite{bao2015holographic} to represent the entropy cone. Each entanglement horizon $\HHRT(A_I)$ divides the bulk into two parts. A bitstring $x \in \mathcal{V}$ can be interpreted as indicating, for each composite region $I$, whether a given spacetime point is "inside" ($x_I=1$) or "outside" ($x_I=0$) the corresponding entanglement wedge. The vertex $x_i$ defined by $(x_i)_I = 1 \iff i \in I$ corresponds precisely to the bulk region causally connected to boundary $A_i$—that is, its entanglement wedge. Considering all $2^n-1$ possible composite regions and their associated (fully connected) extremal surfaces yields the finest possible partition of the bulk spacetime, naturally recovered by this complete set of bitstrings. This perspective elucidates the constructions in Lemma 6 and Proposition 7 of \cite{bao2015holographic}, framing them not as abstract combinatorial choices but as consequences of spacetime causality and wedge structure.

\bibliographystyle{JHEP}
\bibliography{biblio.bib}

@article{bao2015holographic,
  title={The holographic entropy cone},
  author={Bao, Ning and Nezami, Sepehr and Ooguri, Hirosi and Stoica, Bogdan and Sully, James and Walter, Michael},
  journal={Journal of High Energy Physics},
  volume={2015},
  number={9},
  pages={1--48},
  year={2015},
  publisher={Springer}
}

@article{grado2025minimax,
  title={Minimax surfaces and the holographic entropy cone},
  author={Grado-White, Brianna and Grimaldi, Guglielmo and Headrick, Matthew and Hubeny, Veronika E},
  journal={Journal of High Energy Physics},
  volume={2025},
  number={5},
  pages={1--58},
  year={2025},
  publisher={Springer}
}

@article{wall2014maximin,
  title={Maximin surfaces, and the strong subadditivity of the covariant holographic entanglement entropy},
  author={Wall, Aron C},
  journal={Classical and Quantum Gravity},
  volume={31},
  number={22},
  pages={225007},
  year={2014},
  publisher={IOP Publishing}
}

@article{zhao2026proof,
  title={A proof of the generalized Connected Wedge Theorem},
  author={Zhao, Bowen},
  journal={Journal of High Energy Physics},
  volume={2026},
  number={1},
  pages={29},
  year={2026},
  publisher={Springer}
}

@article{may2020holographic,
  title={Holographic scattering requires a connected entanglement wedge},
  author={May, Alex and Penington, Geoff and Sorce, Jonathan},
  journal={Journal of High Energy Physics},
  volume={2020},
  number={8},
  pages={1--34},
  year={2020},
  publisher={Springer}
}

@article{zhao2025beyond,
  title={Beyond $2 $-to-$2 $: Geometrization of Entanglement Wedge Connectivity in Holographic Scattering},
  author={Zhao, Bowen},
  journal={arXiv preprint arXiv:2512.06815},
  year={2025}
}

@article{RT2006formula,
  title={Holographic Derivation of Entanglement Entropy from the anti--de Sitter Space/Conformal Field Theory Correspondence},
  author={Ryu, Shinsei and Takayanagi, Tadashi},
  journal={Physical review letters},
  volume={96},
  number={18},
  pages={181602},
  year={2006},
  publisher={APS}
}

@article{HRT2007covariant,
  title={A covariant holographic entanglement entropy proposal},
  author={Hubeny, Veronika E and Rangamani, Mukund and Takayanagi, Tadashi},
  journal={Journal of High Energy Physics},
  volume={2007},
  number={07},
  pages={062},
  year={2007},
  publisher={IOP Publishing}
}

@unpublished{SimonPDE,
  author       = {Simon, Leon},
  title        = {Lectures on {PDE}},
  note         = {Unpublished {T}singhua lecture notes, Stanford University},
  year         = {2022},
}

@book{waldGR,
  title={General relativity},
  author={Wald, Robert M},
  year={2024},
  publisher={University of Chicago press}
}

@article{lima2025sufficientGCWT,
  title={On sufficient conditions for holographic scattering},
  author={Lima, Caroline and Pasterski, Sabrina and Waddell, Chris},
  journal={arXiv preprint arXiv:2509.26264},
  year={2025}
}

@article{witten2020light,
  title={Light rays, singularities, and all that},
  author={Witten, Edward},
  journal={Reviews of Modern Physics},
  volume={92},
  number={4},
  pages={045004},
  year={2020},
  publisher={APS}
}

@article{HH2023minimax,
  title={Covariant bit threads},
  author={Headrick, Matthew and Hubeny, Veronika E},
  journal={Journal of High Energy Physics},
  volume={2023},
  number={7},
  pages={180},
  year={2023},
  publisher={Springer}
}

@article{HHLR2014,
  title={Causality \& holographic entanglement entropy},
  author={Headrick, Matthew and Hubeny, Veronika E and Lawrence, Albion and Rangamani, Mukund},
  journal={Journal of High Energy Physics},
  volume={2014},
  number={12},
  pages={1--36},
  year={2014},
  publisher={Springer}
}

@article{bao2018entropy_large_region_late_time,
  title={On the entropy cone for large regions at late times},
  author={Bao, Ning and Mezei, M{\'a}rk},
  journal={arXiv preprint arXiv:1811.00019},
  year={2018}
}

@article{erdmenger2017HECnumerics,
  title={Time evolution of entanglement for holographic steady state formation},
  author={Erdmenger, Johanna and Fern{\'a}ndez, Daniel and Flory, Mario and Meg{\'\i}as, Eugenio and Straub, Ann-Kathrin and Witkowski, Piotr},
  journal={Journal of High Energy Physics},
  volume={2017},
  number={10},
  pages={1--58},
  year={2017},
  publisher={Springer}
}

@article{caginalp2020holographicAdSVaidya_numerics,
  title={Holographic entropy cone in AdS-Vaidya spacetimes},
  author={Caginalp, Reginald J},
  journal={Physical Review D},
  volume={101},
  number={2},
  pages={026010},
  year={2020},
  publisher={APS}
}

@article{czech2019holographic2+1dimension,
  title={Holographic entropy cone with time dependence in two dimensions},
  author={Czech, Bart-lomiej and Dong, Xi},
  journal={Journal of High Energy Physics},
  volume={2019},
  number={10},
  pages={177},
  year={2019},
  publisher={Springer}
}

@article{grimaldi2025newcharacterHEC,
  title={A new characterization of the holographic entropy cone},
  author={Grimaldi, Guglielmo and Headrick, Matthew and Hubeny, Veronika E},
  journal={arXiv preprint arXiv:2508.21823},
  year={2025}
}

@article{bao2019holographicTensorNetwork,
  title={Beyond toy models: distilling tensor networks in full AdS/CFT},
  author={Bao, Ning and Penington, Geoffrey and Sorce, Jonathan and Wall, Aron C},
  journal={Journal of High Energy Physics},
  volume={2019},
  number={11},
  pages={69},
  year={2019},
  publisher={Springer}
}


\end{document}